
\documentclass{article}

\usepackage{microtype}
\usepackage{graphicx}
\usepackage{subcaption}
\usepackage{booktabs} 
\usepackage{multirow}
\usepackage{hyperref}


\usepackage[arxiv]{icml2024}


\usepackage{amsmath}
\usepackage{amssymb}
\usepackage{mathtools}
\usepackage{amsthm}
\usepackage{tikz}
\usetikzlibrary{matrix,arrows,cd,calc}
\usepackage[shortlabels]{enumitem}

\usepackage{float}
\usepackage{csvsimple}
\usepackage{titlesec}

\usepackage[normalem]{ulem}
\useunder{\uline}{\ul}{}

\makeatletter
\csvset{
  autotabularcenter/.style={
    file=#1,
    before reading=\setlength{\tabcolsep}{3pt},
    after head=\csv@pretable\begin{tabular}{|*{\csv@columncount}{c|}}\csv@tablehead,
    table head=\hline\csvlinetotablerow\\\hline\hline,
    late after line=\\,
    table foot=\\\hline,
    late after last line=\csv@tablefoot\end{tabular}\csv@posttable,
    command=\csvlinetotablerow},
}
\makeatother

\usepackage[capitalize,noabbrev]{cleveref}

\theoremstyle{plain}
\newtheorem{theorem}{Theorem}[section]
\newtheorem{proposition}[theorem]{Proposition}
\newtheorem{lemma}[theorem]{Lemma}
\newtheorem{corollary}[theorem]{Corollary}
\theoremstyle{definition}
\newtheorem{definition}[theorem]{Definition}
\newtheorem{construction}[theorem]{Construction}
\newtheorem{example}[theorem]{Example}
\newtheorem{assumption}[theorem]{Assumption}
\theoremstyle{remark}
\newtheorem{remark}[theorem]{Remark}
\newtheorem*{remark*}{Remark}

\usepackage[textsize=tiny]{todonotes}


\renewcommand{\phi}{\varphi}
\renewcommand{\epsilon}{\varepsilon}

\DeclareMathAlphabet{\mathpzc}{OT1}{pzc}{m}{it}

\usepackage[mathscr]{euscript}

\makeatletter
\newcommand\DEFINEALPHABETLOOP[3]{%
  \ifx\relax#3\expandafter\@gobble\else\expandafter\@firstofone\fi
  {\expandafter\newcommand\expandafter*\csname#3#1\endcsname{#2{#3}}%
   \DEFINEALPHABETLOOP{#1}{#2}}%
}%
\newcommand\Definealphabet[2]{%
  \DEFINEALPHABETLOOP{#1}{#2}abcdefghijklmnopqrstuvwxyzABCDEFGHIJKLMNOPQRSTUVWXYZ\relax
}%
\makeatother
\Definealphabet{bb}{\mathbb}
\Definealphabet{cal}{\mathcal}
\Definealphabet{bf}{\mathbf}
\Definealphabet{sf}{\mathsf}
\Definealphabet{rm}{\mathrm}
\Definealphabet{frak}{\mathfrak}
\Definealphabet{scr}{\mathscr}
%


\newcommand{\Fil}{\mathsf{Fil}}
\renewcommand{\Im}{\mathsf{Im}}
\newcommand{\ord}{\mathsf{ord}}
\newcommand{\push}{\mathsf{push}}
\newcommand{\incl}{\mathsf{incl}}
\newcommand{\grid}{\mathsf{grid}}
\newcommand{\cell}{\mathsf{cell}}

\newcommand{\dsm}{\mathcal{DSM}}
\newcommand{\dm}{\mathcal{DM}}
\newcommand{\assignment}{\mathsf{assign}}
\newcommand{\cost}{\mathsf{cost}}
\newcommand{\bars}{\mathsf{bars}}

\newcommand{\VR}{\mathrm{VR}}
\newcommand{\vertex}{\mathcal{V}}
\newcommand{\R}{\mathbb{R}}
\newcommand{\loss}{\mathcal{L}}
\newcommand{\dg}{\mathrm{D}}
\newcommand{\Hil}{\mathsf{Hil}}
\newcommand{\rk}{\mathsf{rk}}
\newcommand{\OT}{\mathsf{OT}}

\renewcommand{\ker}{\mathsf{ker}}
\newcommand{\coker}{\mathsf{coker}}
\newcommand{\vect}{\mathsf{vec}_\Fbb}
\newcommand{\indexingset}{\mathsf{indMass}}
\newcommand{\ordHil}{\mathsf{sortHil}}
\newcommand{\itsgridincl}{\mathsf{itsIncl}}


\def\papertitle{Differentiability and Optimization of Multiparameter Persistent Homology}

\icmltitlerunning{\papertitle}

\begin{document}

\twocolumn[
\icmltitle{\papertitle}



\icmlsetsymbol{equal}{*}

\begin{icmlauthorlist}
\icmlauthor{Luis Scoccola}{equal,ox}
\icmlauthor{Siddharth Setlur}{equal,eth}
\icmlauthor{David Loiseaux}{datashape}
\icmlauthor{Mathieu Carrière}{datashape}
\icmlauthor{Steve Oudot}{steve}
\end{icmlauthorlist}

\icmlaffiliation{ox}{Mathematical Institute, University of Oxford, UK}
\icmlaffiliation{eth}{Department of Mathematics, ETH Zürich, Switzerland}
\icmlaffiliation{datashape}{DataShape, Centre Inria d’Université Côte d’Azur, France}
\icmlaffiliation{steve}{GeomeriX, Inria Saclay and École polytechnique, Paris, France}

\icmlcorrespondingauthor{Luis Scoccola}{luis.scoccola@maths.ox.ac.uk}
\icmlcorrespondingauthor{Siddharth Setlur}{ssetlur@student.ethz.ch}

\icmlkeywords{Machine Learning, ICML}

\vskip 0.3in
]



\printAffiliationsAndNotice{\icmlEqualContribution} 


\begin{abstract}
Real-valued functions on geometric data---such as node attributes on a graph---can be optimized using descriptors from persistent homology, allowing the user to incorporate topological terms in the loss function.
When optimizing a single real-valued function (the one-parameter setting), there is a canonical choice of descriptor for persistent homology: the barcode.
The operation mapping a real-valued function to its barcode is differentiable almost everywhere, and the convergence of gradient descent for losses using barcodes is relatively well understood.
When optimizing a vector-valued function (the multiparameter setting), there is no unique choice of descriptor for multiparameter persistent homology, and many distinct descriptors have been proposed. 
This calls for the development of a general framework for differentiability and optimization that applies to a wide range of multiparameter homological descriptors.
In this article, we develop such a framework and show that it encompasses well-known descriptors of different flavors, such as signed barcodes and the multiparameter persistence landscape.
We complement the theory with numerical experiments supporting the idea that optimizing multiparameter homological descriptors can lead to improved performances compared to optimizing one-parameter descriptors, even when using the simplest and most efficiently computable multiparameter descriptors.
\end{abstract}

%

\newpage

\section{Introduction}
\label{sec:introduction}

\noindent\textbf{Context.}
Persistent homology (PH), the main tool of topological data analysis (TDA), produces a descriptor in the form of a {\em barcode} for real-valued {\em filtering functions}, i.e., real-valued functions on graphs (more generally, simplicial complexes) that assign higher values to edges than to their vertices (more generally, monotonic with respect to the face order). PH has found a variety of applications in machine learning (ML), notably:
as a featurization technique 
encoding complementary information to traditional descriptors (see, e.g., \cite{hensel} for a survey), offering stability guarantees with respect to perturbations of the data~\cite{Cohen-Steiner2007}, and the possibility for end-to-end feature learning~\cite{hofer-2, Carriere2021a};
as a topological regularizer for constraining models to follow some prescribed topology in order to avoid, e.g., overfitting~\cite{chen, Moor2020};
and as a topological layer in neural networks for enhancing network performance~\cite{Carriere2020, Zhao2020, Kim2020}.
A sound mathematical foundation for this type of applications, and for barcode optimization in general, was developed in~\cite{Carriere2021a}.

Recent contributions in the area of TDA for ML have demonstrated empirically the added value, in terms of learning performances, of replacing real-valued filtering functions by $\R^n$-valued filtering functions in the pipelines involving~PH; see, e.g., \cite{Carriere2020b,todd,Loiseaux2023,emp}.
However, in contrast to the usual real-valued setting, there does not exist---and, due to fundamental algebraic reasons, there \emph{cannot} exist \cite{carlsson-zomorodian,bauer-scoccola}---any discrete descriptor like the barcode that completely encodes the PH of $\R^n$-valued filtering functions. As a consequence, the aforementioned contributions resorted to a variety of incomplete descriptors, such as {\em multiparameter persistence landscapes}~\cite{Vipond2020} and generalizations \cite{Xin2023a}, {\em signed barcodes as measures}~\cite{Loiseaux2023}, {\em multiparameter persistent images}~\cite{Carriere2020b, Loiseaux2023b}, or {\em multiparameter persistence kernels}~\cite{Corbet2019}.

In order to generalize the use of these incomplete descriptors to other applications in machine learning, 
it is necessary to extend the existing optimization framework to them.
To the best of our knowledge, this has not been done yet, but we expect to see it happen in the near future. However, the diversity of the proposed descriptors is likely to induce the parallel development of multiple competing, specialized frameworks,
depriving the field of a general, unified theory. 

In order to prevent this undesirable situation we aim at developing optimization frameworks for classes of descriptors instead of single descriptors.  
The problem is then to find the right level of abstraction to prove a differentiability and convergence result, that is, a general enough -- yet still easily checkable -- set of assumptions, under which these seemingly very different invariants can be proven to be differentiable with gradient descent convergence guarantees.

%
%
The present paper is a first attempt at this, focusing on the class of descriptors characterized as being {\em semilinearly determined on grids}. A formal statement of this condition is given in \cref{section:theory} (\cref{def:semilinearly-determined}), but intuitively, it means that the descriptor, viewed as a map from the space~$\Fil_n(K)$ (of $\R^n$-valued filtering functions on a fixed simplicial complex~$K$) to a Euclidean space~$\R^D$ factors through a simple linear space consisting of inclusions of grids in $\Rbb^n$, where it behaves semilinearly (in the sense of piecewise affinely).
This property is directly inspired from the well-established fact that the PH of real-valued filtering functions itself can be computed over finite integer grids, and behaves affinely with respect to grid inclusions: indeed, the barcodes produced by the persistence algorithm~\cite{elz-tps-02,zc-cph-05} depend affinely on the input function values as long as the pairing of the simplices in~$K$ remains fixed, and the pairing itself is entirely prescribed by  the pre-order on the simplices induced by the function values.
While we cannot assert that all future descriptors for $\R^n$-valued filtering functions will satisfy our condition, we show that currently known descriptors of very different flavors, such as the signed barcodes as measures and the multiparameter persistence landscapes, do.

\smallskip \noindent\textbf{Theoretical contributions.}
\cref{theorem:main-results}, below, summarizes our contributions regarding general homological descriptors of multiparameter filtrations (\cref{theorem:grid-determined-implies-semialgebraic,proposition:gradient}).
It generalizes the existing optimization framework for barcodes of real-valued filtering functions~\cite{Carriere2021a} to descriptors of $\R^n$-valued filtering functions that are semilinearly determined on grids.
\begin{theorem}
    \label{theorem:main-results}
    Assume given a simplicial complex $K$, a number of parameters $n \in \Nbb_{\geq 1}$, and a descriptor $\alpha\colon \Fil_n(K) \to \R^D$. If $\alpha$ is semilinearly determined on grids, then it is a semilinear map,
    with explicit Clarke subdifferential.
    If furthermore $\alpha$ is included in an optimization pipeline of the following form, where $\Phi$ is a parametrized family of fitrations and where $E$ is some loss function:
    \[
        \Rbb^d \xrightarrow{\;\;\;\Phi\;\;\;} \Fil_n(K) \xrightarrow{\;\;\;\alpha\;\;\;} \Rbb^D \xrightarrow{\;\;\;E\;\;\;} \Rbb,
    \]
    such that
    the composite objective function
    $\Lcal \coloneqq E \circ \alpha \circ \Phi$ is locally Lipschitz, and
    the maps $\Phi$ and $E$ are definable over a common o-minimal structure,
    then, under the usual assumptions of the stochastic subgradient method (\cref{assumption:davis}), almost surely, every limit point of the iterates of the stochastic subgradient method is critical for the objective function~$\Lcal$, and the sequence of  values of~$\Lcal$ converges.
\end{theorem}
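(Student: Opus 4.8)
The plan is to reduce the theorem to two already-available ingredients: the structural result \cref{theorem:grid-determined-implies-semialgebraic} together with the subdifferential computation \cref{proposition:gradient} for the first assertion, and the convergence theorem of Davis--Drusvyatskiy--Kakade--Lee for the stochastic subgradient method on tame functions for the second. For the first assertion, I would simply invoke \cref{theorem:grid-determined-implies-semialgebraic}, which already promotes the grid-level semilinearity hypothesis to the statement that $\alpha$ is a semilinear map; the explicit description of its Clarke subdifferential is then exactly the content of \cref{proposition:gradient}. So the first paragraph of the statement amounts to bookkeeping once those two results are in place.

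For the second assertion, the core observation is a definability-of-composition argument. First I would recall that any semilinear map is in particular semialgebraic, and that semialgebraic sets and maps are definable in \emph{every} o-minimal structure, since they constitute the smallest such structure. Consequently $\alpha$ is definable over the common o-minimal structure $\mathcal{S}$ witnessing definability of $\Phi$ and $E$; here I am tacitly using the identification of $\Fil_n(K)$ with a polyhedral subset of a Euclidean space, set up earlier, so that definability of all three maps is phrased in the same ambient setting. Since o-minimal structures are closed under composition and $\mathcal{S}$ contains $\Phi$, $\alpha$, and $E$, the composite $\Lcal = E \circ \alpha \circ \Phi$ is definable over $\mathcal{S}$. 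Combined with the hypothesis that $\Lcal$ is locally Lipschitz, this exhibits $\Lcal$ as a locally Lipschitz tame function.

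Having established tameness, the convergence claim follows by directly applying the theorem of Davis et al.: for a locally Lipschitz definable objective, under the standard step-size, boundedness, and noise conditions collected in \cref{assumption:davis}, almost surely every limit point of the stochastic subgradient iterates is Clarke-critical and the sequence of objective values converges. The role of \cref{proposition:gradient} in practice is to supply a computable subgradient oracle for $\alpha$, which, chained with subgradients of $\Phi$ and $E$, feeds the method; but the convergence statement itself requires only tameness and local Lipschitzness, not the explicit formula.

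I expect the only genuinely delicate point to be the interaction between the definability transfer and the incompleteness of the Clarke chain rule, which is an inclusion rather than an equality. This is handled by the fact that the invoked convergence result is stated for the Clarke subdifferential of the \emph{composite} $\Lcal$ itself and does not depend on how subgradients are assembled from the factors, so the chain-rule subtleties affect only the practical oracle, not the almost-sure guarantee. The one hypothesis-level check that makes the whole reduction go through is precisely that $\mathcal{S}$ contains the semialgebraic $\alpha$ -- i.e. that no o-minimal structure can fail to contain the semialgebraic sets -- and once this is noted the argument closes.
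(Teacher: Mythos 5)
Your proposal is correct and follows essentially the same route as the paper: the first assertion is delegated to \cref{theorem:grid-determined-implies-semialgebraic} and \cref{proposition:gradient}, and the convergence claim is obtained by noting that the semilinear (hence semialgebraic) map $\alpha$ is definable in every o-minimal structure, so that $\Lcal = E \circ \alpha \circ \Phi$ is definable over the common structure of $\Phi$ and $E$, and then invoking the Davis--Drusvyatskiy--Kakade--Lee convergence theorem for locally Lipschitz definable objectives (the paper packages this last step as \cref{theorem:convergence-theorem}). Your remark that the Clarke chain-rule inclusion affects only the subgradient oracle and not the almost-sure guarantee is a correct observation that the paper leaves implicit.
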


For illustrative purposes, \cref{theorem:main-results} is stated for a pipeline using a single (topology-based) loss.
Several generalizations are possible and we comment on these in \cref{remark:generalizations-of-main-theorem}.

To illustrate the flexibility of our approach, we apply our framework to two widely different topology-based descriptors: the signed barcodes as measures~\cite{Loiseaux2023} derived from the Hilbert function \cite{oudot-scoccola} and the multiparameter persistence landscape~\cite{Vipond2020}.
As in the one-parameter case, these descriptors take values in infinite dimensional spaces, and thus we prove semilinearity for suitable finite dimensional representations of them (see \cref{section:semilinearity-of-known-invariants} for details).

\begin{theorem}
    \label{theorem:semilinearity-main-theorem}
    The sorted Hilbert decomposition and the evaluated multiparameter persistence landscape are semilinearly determined on grids.
\end{theorem}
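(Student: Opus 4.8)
The plan is to verify, for each descriptor, the two conditions packaged in \cref{def:semilinearly-determined}---that it factors through grid inclusions and behaves semilinearly there---by isolating the \emph{combinatorial} content of a filtration (the coordinatewise pre-orders on simplices induced by the values) from its \emph{numerical} content (the actual values). This mirrors the one-parameter principle recalled in \cref{sec:introduction}: barcodes depend affinely on the function values once the simplex pairing is fixed, and the pairing itself depends only on the induced order.

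First I would set up the factorization. Any $f \in \Fil_n(K)$ determines a finite grid in $\Rbb^n$---the product of the distinct values occurring in each coordinate of $\{f(\sigma)\}_{\sigma \in K}$---together with its inclusion into $\Rbb^n$; the multiparameter homology is computed at the grid vertices, and both $\ordHil$ and the evaluated landscape are read off from this grid-indexed data. I would then partition $\Fil_n(K)$ into the finitely many regions on which the coordinatewise pre-orders on simplices are all constant. On each such region the combinatorial type of the grid is fixed, the grid-vertex positions are affine in $f$ (each vertex coordinate is literally one of the numbers $f(\sigma)_i$), and---crucially---every rank $\rk\bigl(H_\ast(K^{\le u}_f) \to H_\ast(K^{\le v}_f)\bigr)$ is locally constant, being the rank of a linear map whose support is prescribed by the fixed order.

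For the sorted Hilbert decomposition, the Hilbert function sends each grid vertex $v$ to $\dim_\Fbb H_\ast(K^{\le v}_f)$, which is locally constant by the previous paragraph; its Möbius inversion (the signed-barcode masses) is therefore also locally constant, while the (possibly paired) points carrying these masses lie at grid vertices and hence move affinely with $f$. Sorting the resulting position-and-mass data into a fixed-length vector in $\Rbb^D$ yields a continuous, piecewise-affine map: across a wall where two vertices coincide or two orders swap, the two one-sided sorted outputs agree because the coincident entries are interchangeable. For the evaluated landscape, each value $\lambda_k(p)$ is the largest radius $t$ for which $\rk$ across the box centred at the fixed point $p$ of half-width $t$ is at least $k$; on a fixed region this equals the signed distance from $p$ to the relevant grid wall along the diagonal, which is affine in $f$, so assembling the values at the prescribed evaluation points gives a min--max of affine maps.

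The main obstacle I anticipate is not the local affineness on each region---which follows cleanly from the rank-stability principle---but the \emph{global} verification that the sorted and evaluated outputs glue into genuinely semilinear maps (not merely affine on each region in isolation) across the walls separating the combinatorial regions. Concretely, for $\ordHil$ one must confirm that the canonical sorting introduces no discontinuity when masses or positions coincide, and for the landscape one must check that the per-region affine formulas agree on shared walls so that the overall min--max-of-affine description is well defined. Controlling this gluing---together with confirming that the finitely many regions are themselves cut out by linear equalities and inequalities among the $f(\sigma)_i$, hence semilinear, so that the descriptors land in the piecewise-affine class demanded by \cref{def:semilinearly-determined}---is the technical heart of the argument.
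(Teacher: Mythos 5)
Your overall strategy---factor through the grid of a filtration, partition $\Fil_n(K)$ into regions where the coordinatewise pre-orders are constant, observe that point masses sit on grid vertices and hence move affinely while ranks and multiplicities stay constant, and express the landscape as a min--max of distances to grid walls along the diagonal---is exactly the paper's route (\cref{lemma:push-forward-on-cell}, \cref{lemma:constant-number-of-masses}, \cref{corollary:Hilbert-and-rank-semilinear}, \cref{proposition:landscape-semilinear}). The substantive computations you describe are the right ones.

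However, what you single out as ``the technical heart of the argument''---verifying that the per-region affine formulas glue continuously across walls---is both unnecessary and, for the sorted Hilbert decomposition, false. Unnecessary, because \cref{def:semilinearly-determined} only asks that each $\push^\alpha_h$ be semilinear on $\incl(\Gcal)$, which by \cref{lemma:incl-open-semilinear} and \cref{corollary:cells-nice} is the image of a \emph{single} cell: on that domain the Hilbert map is globally affine (the lexicographic order of the grid points is preserved by every aligned grid inclusion, so the sort never changes) and the landscape is a max of finitely many semilinear functions, so no walls are ever crossed. The passage from cell-wise semilinearity to global semilinearity of $\alpha$ itself (\cref{corollary:semialgebraic-map-from-fil}) is purely set-theoretic---the graph is a finite union of semilinear sets---and imposes no continuity whatsoever; the paper explicitly notes that semilinear functions need not be continuous. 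And the continuity you propose to verify actually fails: $\ordHil$ as built in \cref{construction:ordered-hilbert} carries indicator coordinates recording which cell $g$ lies in, which jump across every wall, and at a wall where two grid vertices merge a positive and a negative mass can cancel, changing the number of point masses (this is precisely why \cref{lemma:constant-number-of-masses} needs the injectivity of the grid inclusion and holds only cell by cell). Had you pursued the gluing step you would have been trying to prove a false statement; dropping it entirely, the rest of your argument is the paper's proof.
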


The locally Lipschitzness condition of \cref{theorem:main-results} can then be satisfied using \cref{theorem:satisfy-locally-lipschitz-condition}, in the case of the Hilbert decomposition signed measure, and
the stability theorem in \cite{Vipond2020}, in the case of multiparameter persistence landscapes.

Analogous results can be obtained for other invariants, such as for instance signed measures derived from the rank invariant (\cref{definition:rank-signed-measure}).

See \cref{figure:dependency-graph} for a dependency graph of the main definitions, results, and experiments in the paper, and \cref{tab:notation-table} for a notation table.


\smallskip \noindent\textbf{Practical contributions.}
\cref{example:distance-to-a-measure,example:integration-function}, in \cref{section:applications}, give concrete instantiations of our theory suited for end-to-end optimization using signed barcodes as measures.
In \cref{sec:experiments} we report on numerical experiments showcasing these and other pipelines, on both synthetic and real data.
The main conclusion is that, for machine learning purposes, multiparameter homological descriptors tend to outperform their one-parameter counterpart.

%

\smallskip \noindent\textbf{Related work.}
As already mentioned, and to the best of our knowledge, the general problem of establishing the differentiability and gradient descent convergence of multiparameter persistence has not been addressed in the literature so far.
Concurrent work \cite{dgril} addresses this problem in the special case of GRIL \cite{Xin2023a}, a generalization of the multiparameter persistence landscape, and develops a specific, yet elementary, optimization framework for GRIL.

The one-parameter case is well studied \cite{Carriere2021a, leygonie, leygonie-2},
with applications including shape matching and classification \cite{hofer-4,poulenard},
representation learning \cite{hofer},
graph classification \cite{hofer-2,horn,zhao-ye-chen-wang},
and regularization \cite{chen,hofer-3};
see also Sections~3.2.3~and~3.3.1 of \cite{hensel} for an overview, but note that, at this point, any list will necessarily be incomplete.

\section{Background}
\label{sec:background}

\begin{figure}
    \centering
    \includegraphics[width=\linewidth]{./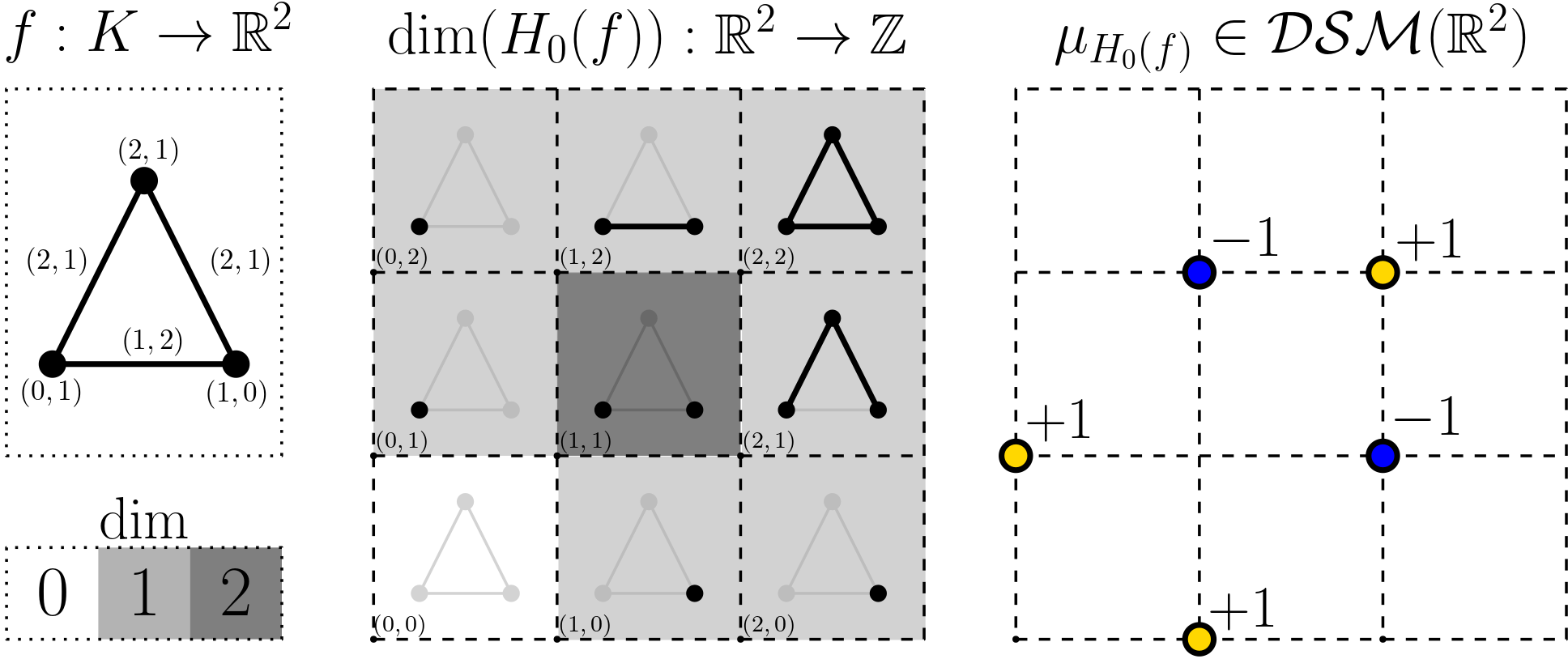}
    \caption{An example, from \cite{Loiseaux2023}, of a filtered simplicial complex, the Hilbert function of its $0$th persistent homology, and the corresponding Hilbert decomposition signed measure.}
    \label{figure:basic-concepts}
\end{figure}

This section can be skipped and only referred to as required, depending on the background of the reader.
For clarity and conciseness, we do not give full details of some constructions that are not used in the main results of this paper (see \cref{section:more-background} for more background).
Whenever we skip details we warn the reader and provide a precise reference.

\smallskip \noindent\textbf{Basic notation.}
If $n \geq 1 \in \Nbb$ and $X$ is a finite set, we let $(\Rbb^n)^X$ \label{functions-finite-set} denote the set of functions $X \to \Rbb^n$, which is a real vector space of dimension $n \times |X|$.

\smallskip \noindent\textbf{Simplicial complexes.}
A finite \emph{simplicial complex} $K$\label{simp-comp} consists of a finite set $X$ and a collection $K \subseteq \mathsf{parts}(X) \setminus \{\emptyset\}$ of non-empty subsets $X$ with the property that $\{x\} \in K$ for every $x \in X$, and if $\sigma \in K$ and $\tau \subseteq \sigma$, then $\tau \in K$.
Let $i \in \Nbb$; the $i$-simplices of $K$ are the sets $\sigma \in K$ such that $|\sigma|=i+1$.
In particular, the $0$-simplices correspond exactly to the elements of the underlying set $X$.
For this reason, it is common to denote a simplicial complex by $K$, leaving the underlying set $X$ implicit.

Note that any simplicial complex $K$ is automatically a partially ordered set where $\tau \leq \sigma \in K$ precisely when $\tau \subseteq \sigma$.
This order is known as the \emph{face order} of $K$.

\begin{example}
    Any finite, simple, undirected graph $G$ is equivalently a simplicial complex with $0$-simplices the vertices of $G$, $1$-simplices the edges of $G$, and no higher-dimensional simplices.
\end{example}

\smallskip \noindent\textbf{Filtrations.}
Let $K$ be a finite simplicial complex.

\begin{definition}\label{definition:n-filtrations}
    Let $n \in \Nbb$.
    An \emph{$n$-filtration} on $K$ consists of a function $f : K \to \Rbb^n$, which is monotonic with respect to the face order of $K$ and product (partial) order on $\Rbb^n$.
\end{definition}

Unraveling definitions, an $n$-filtration is a function $f : K \to \Rbb^n$ mapping simplices of $K$ to vectors in $\Rbb^n$, with the property that for each pair of simplices $\sigma \subseteq \tau \in K$ and $1 \leq i \leq n$, we have $f_i(\sigma) \leq f_i(\tau)$, where $f_i : K \to \Rbb$ denotes the $i$th coordinate of $f$.
See \cref{figure:basic-concepts} for an illustration.

We let $\Fil_n(K)$ denote the set of $n$-filtrations of $K$.
Note that $\Fil_n(K) \subseteq (\Rbb^n)^K$.

\begin{example}
    Let $G$ be an undirected graph and let $f' : \mathsf{vert}(G) \to \Rbb^n$ be any function.
    One can extend $f'$ to a filtering function $f \in \Fil_n(G)$ by mapping a $0$-simplex $\{x\}$ (corresponding to a vertex $x$) to $f'(x) \in \Rbb^n$, and a $1$-simplex $\{x,y\}$ (corresponding to an edge between $x$ and $y$) to $(\max(f'(x)_1, f'(y)_1), \dots, \max(f'(x)_n, f'(y)_n)) \in \Rbb^n$.
    This is an instance of a lower-star filtration \cite{edelsbrunner-harer}, a
    common construction in TDA.
\end{example}


\smallskip \noindent\textbf{(Persistent) homology.}
For detailed introductions to one- and multiparameter persistence, see, e.g., \cite{edelsbrunner-harer} and \cite{botnan-lesnick}.
Fix a field $\Fbb$ for the rest of this paper, let $i \in \Nbb$, and let $K$ be a simplicial complex.

The $i$th \emph{homology} of $K$ with coefficients in $\Fbb$ is an $\Fbb$-vector space $H_i(K)$; see, e.g., \cite{fomenko} for a precise definition with illustrations.
Informally, the dimension of $H_i(K)$ counts the number of independent $i$-dimensional holes of $K$; as an example, the dimension of $H_0(K)$ equals the number of connected components of $K$.

The homology construction is functorial, which in particular implies that if $K \subseteq K'$ are simplicial complexes, there is an $\Fbb$-linear map $H_i(K) \to H_i(K')$ of $\Fbb$-vector spaces.
Then, given $f \in \Fil_n(K)$, we can filter $K$ as follows: for $r \in \Rbb^n$, let $K^f_r \coloneqq \{\sigma \in K : f(\sigma)\leq r\} \subseteq K$.
Since $f$ is a filtration, it's clear that $K^f_r \subseteq K^f_s$ whenever $r \leq s \in \Rbb^n$, where, again, $\leq$ denotes the product (partial) order on $\Rbb^n$.
This allows for the following construction.
For convenience of notations later on, we let $H_i(f)(r) \coloneqq H_i(K^f_r)$.

\begin{definition}\label{definition:multipara-pers-hom}
    Let $f \in \Fil_n(K)$ and let $i \in \Nbb$.
    The $i$th \emph{multiparameter persistent homology} of $f$, denoted $H_i(f)$, consists of the collection of vector spaces $\{H_i(f)(r)\}_{r \in \Rbb^n}$ and the linear maps $\{\phi^f_{r,s} : H_i(f)(r) \to H_i(f)(s)\}_{r \leq s \in \Rbb^n}$, known as the \emph{structure maps}.
\end{definition}

See \cref{figure:basic-concepts} for an illustration.

Let $f \in \Fil_n(K)$.
The $i$th \emph{Hilbert function}\label{hil-func} of $f$ is the function $\Hil(H_i(f)) : \Rbb^n \to \Zbb$ defined as the pointwise dimension of $H_i(f)$, that is
\[
    \Hil(H_i(f))(r) \coloneqq \dim(H_i(f)(r)).
\]
See \cref{figure:basic-concepts} for an illustration.

The $i$th \emph{rank invariant}\label{rank-inv} of $f$ is the function $\rk(H_i(f)) : \{(r,s) \in (\Rbb^n)^2 : r \leq s\} \to \Zbb$ defined as the rank of the structure maps of $H_i(f_i)$, that is
\[
    \rk(H_i(f))(r,s) \coloneqq \rk(\phi^f_{r,s}).
\]
The rank invariant has no less information than the Hilbert function: $\rk(H_i(f))(r,r) = \Hil(H_i(f))(r)$ for $r \in \Rbb^n$.

A key motivation for using homology-based descriptors is that they are automatically isomorphism-invariant (\cref{section:iso-invariance}), meaning that they do not depend on, e.g., arbitrary labelings of data.

\smallskip \noindent\textbf{Discrete signed measures.}
We define discrete signed measures,
which are needed for the definition of the Hilbert decomposition signed measure \cite{Loiseaux2023}.

If $M$ is a metric space, the set of \emph{discrete measures}\label{disc-measures} on $M$, denoted $\dm(M)$, is the set of all finite, positive, integer linear combination of Dirac masses on $M$.
The set of discrete signed measures on $M$, denoted $\dsm(M)$, consists of the set of finite, non-necessarily positive, integer linear combinations of Dirac masses on $M$.
As usual, one can endow the set of discrete signed measures on $M$ the optimal transport distance (\cref{section:pot-definition}), denoted $\OT$, which is an extended metric.
If $\mu \in \dsm(M)$, we let $\mu^+$ and $\mu^-$ denote its positive and negative part of $\mu$, respectively, according to its Jordan decomposition (see, e.g., p.~421 of \cite{billingsley}).

\smallskip \noindent\textbf{Homological descriptors of multiparameter filtrations.}
A \emph{descriptor}\label{discriptor} of filtrations of $K$ consists of a set $A$ and a function $\Fil_n(K) \to A$.

We now define the Hilbert decomposition signed measure (originally introduced in \cite{Loiseaux2023}), which is a descriptor of multiparameter filtrations.
This descriptor completely characterizes the Hilbert function of a filtration as a discrete signed measure on $\Rbb^n$; this is remarkable, since the Hilbert function is an element of the space of functions $\Rbb^n \to \Rbb$, an infinite dimensional space.
For the definition of the rank decomposition signed measure, a stronger descriptor that strictly generalizes the one-parameter barcode, and that completely characterizes the rank invariant of filtrations, see \cref{section:rank-dec-signed-measure}.

\begin{definition}
    \label{definition:hilbert-sm-rk-sm}
    Let $f \in \Fil_n(K)$ and let $i \in \Nbb$.
    The $i$th \emph{Hilbert decomposition signed measure} is the unique measure $\mu^\Hil_{H_i(f)} \in \dsm(\Rbb^n)$ such that, for all $r \in \Rbb^n$,
    \[
        \dim(H_i(f)(r)) = \mu^\Hil_{H_i(f)}\left(\{s \in \Rbb^n : s \leq r\}\right).
    \]
\end{definition}
In order to make $\Rbb^n$ into a metric space, so that the optimal transport distance on $\dsm(\Rbb^n)$ is defined, we use $\|-\|_\infty$.

See \cref{figure:basic-concepts} for an illustration.

\smallskip \noindent\textbf{Semilinear functions and o-minimal structures.}
We now introduce semilinear and semialgebraic functions.
A generalization of these concepts is that of functions that are defined over an o-minimal structure.
For details about this theory, we refer the reader to \cite{vandenDries}.
We care about these types of well-behaved functions because gradient descent can be applied on them with convergence guarantees  \cite{davis-drusvyatskiy-kakade-lee}.

A set $S \subseteq \Rbb^n$ is \emph{semilinear}\label{semilinear-sets} (resp.~\emph{semialgebraic}) if it is a finite union of sets of the form $\{(x_1, \dots, x_n) \in \Rbb^n : P(x_1, \dots, x_n) = 0\}$ with $P$ a polynomial degree $\leq 1$ (resp.~of any degree), and sets of the form $\{(x_1, \dots, x_n) \in \Rbb^n : P(x_1, \dots, x_n) > 0\}$, with $P$ a polynomial degree $\leq 1$ (resp.~of any degree).
If $S \subseteq \Rbb^n$ is semilinear (resp.~semialgebraic), a function $\psi : S \to \Rbb^m$ is \emph{semilinear} (resp.~\emph{semialgebraic}) if its graph $\{(x,\psi(x)) : x \in S\} \subseteq \Rbb^n \times \Rbb^m$ is a semilinear (resp.~semialgebraic) set.
Note that the definitions above also make sense when $\Rbb^n$ and $\Rbb^m$ are replaced with any real vector spaces of finite dimension, since any such vector space can be identified with $\Rbb^n$ using an arbitrary linear isomorphism, and any choice of linear isomorphism leads to the same notions of semilinearity (resp.~semialgebraicity).

\begin{example}
Any piecewise linear function is semilinear.
For example, any neural network with ReLU activation functions models a semilinear function.
Note, however, that, in general, semilinear functions need not be continuous.
\end{example}

%

\smallskip \noindent\textbf{The stochastic subgradient method.}
Let $\Lcal : \Rbb^d \to \Rbb$ be differentiable almost everywhere; this is automatically the case if $\Lcal$ is locally Lipschitz, by Rademacher's theorem (see, e.g., \cite{evans-gariepy}).
The \emph{Clarke subdifferential} \label{clarke-subdiff} \cite{clarke} of $\Lcal$ at $z \in \Rbb^d$ is
\[
    \partial \Lcal(z) \coloneqq \mathsf{ConvHull} \left\{\lim_{z_k \to z} \nabla \Lcal(z_k) : \Lcal \text{ diff.~at $\{z_k\}_{k \in \Nbb}$ }\right\}
\]
and the point $z$ is \emph{critical} if $0 \in \partial\Lcal(z)$.

In the situation above, one can perform the \emph{stochastic subgradient method} by choosing a learning rate in the form of a sequence $\{a_k \in \Rbb\}_{k \in \Nbb}$,
a sequence $\{\zeta_k\}_{k \in \Nbb}$ of real random variables,
and an initialization $x_0 \in \Rbb^d$,
and defining $\{x_k\}_{k \in \Nbb}$ for $k \geq 1$ recursively by
\[
    x_{k+1} \coloneqq x_k - a_k (y_k + \zeta_k), \text{with } y_k \in \partial \Lcal(x_k).
\]

See \cref{assumption:davis} for standard assumptions about the choices involved in the stochastic subgradient method, as taken from Assumption~C of \cite{davis-drusvyatskiy-kakade-lee}.
And see \cref{remark:satisfying-davis}, which addresses satisfying these assumptions.

\section{Theoretical contributions}
\label{section:theory}

\subsection{Stratifying the space of multifiltrations}
\label{section:stratifying}

In this section, we partition the space $\Fil_n(K)$ into finitely many affine regions, which we call cells, and use this to prove \cref{corollary:semialgebraic-map-from-fil}, which provides an easily verifiable condition for a function with domain $\Fil_n(K)$ to be semilinear.
We also show that cells are linearly diffeomorphic to open, convex, semilinear sets of a Euclidean space.

\cref{figure:illustration-main-concepts} illustrates some of the main concepts introduced in this section.

\begin{figure*}
    \centering
    \includegraphics[width=0.85\linewidth]{./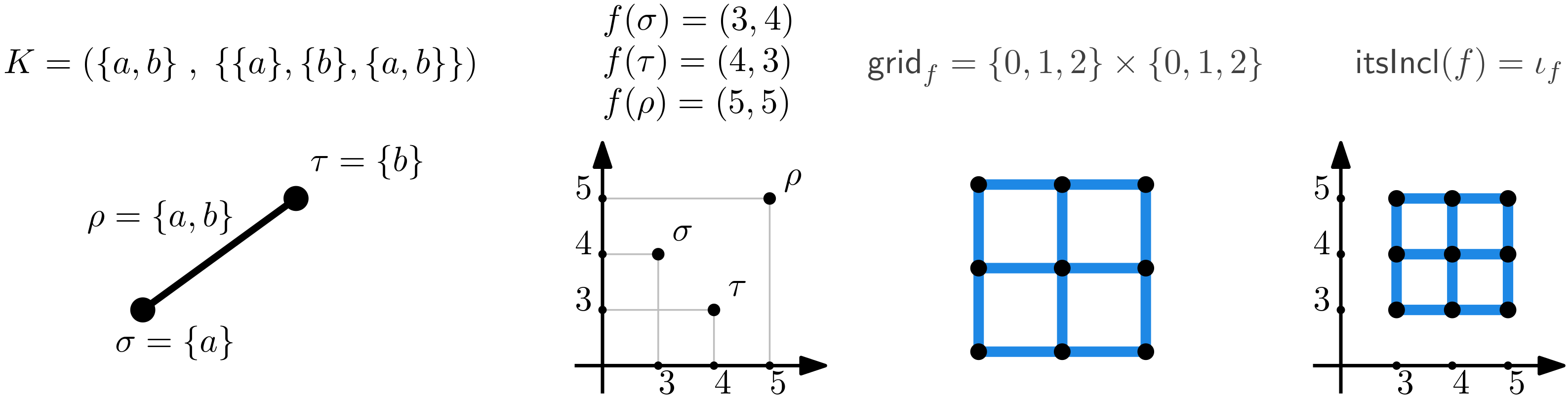}
    \caption{An example of a simplicial complex $K$, a two-parameter filtration $f$ of $K$, the grid of $f$ according to \cref{construction:grid-from-filtration}, and the aligned grid inclusion associated with $f$ according to \cref{construction:grid-from-filtration,definition:iota}.}
    \label{figure:illustration-main-concepts}
\end{figure*}

\subsubsection{Cells}
In order to define cells, we first define grids, a very simple type of finite poset, and grids obtained from filtrations.

Given $m \in \Nbb$, we let $[m] = \{0 < 1 < \cdots < m-1\}$ denote the corresponding linear order.

\begin{definition}
    \label{definition:grid}
A \emph{grid} $\Gcal$ is any product poset $\Gcal = \Gcal_1 \times \cdots \times \Gcal_n$, with $\Gcal_i = [m_i]$ for some $m_1, \dots, m_n \geq 1 \in \Nbb$.
\end{definition}

Note that, given a function $f : X \to \Rbb$, there exists a unique triple $(m \in \Nbb, \ord_f : X \to [m], \iota_f : [m] \to \Rbb)$ such that $\ord_f$ is surjective, $\iota_f$ is monotonic and injective, and $f = \iota_f \circ \ord_f$.
Indeed, $\ord_f$ represents the unique linear pre-order of $X$ induced by $\Im(f) \subseteq \Rbb$, so that $m = |\Im(f)|$.

\begin{construction}
    \label{construction:grid-from-filtration}
Given $f : K \to \Rbb^n$, let $m_i = |f_i(K)|$.
Let $\grid_f = [m_1] \times \cdots \times [m_n]$, and define
\begin{align*}
    \ord_f  \coloneqq \ord_{f_1(K)} \times \cdots \times \ord_{f_n(K)} &: K \to \grid_f\\
    \iota_f \coloneqq \iota_{f_1(K)} \times \cdots \times \iota_{f_n(K)} &:  \grid_f \to \Rbb^n
\end{align*}
\end{construction}

Note that $f = \iota_f \circ \ord_f : K \to \Rbb^n$.

We can now define the cells of $\Fil_n(K)$, which, informally, are sets of filtrations that induce the same preorder on the simplices of $K$.

\begin{definition}
    \label{definition:cell}
    The \emph{cell} of $f \in \Fil_n(K)$, denoted $\cell(f) \subseteq \Fil_n(K)$, is the set of all $g \in (\Rbb^n)^K$ such that $\grid_f = \grid_g$ and $\ord_f = \ord_g$.
\end{definition}

\begin{remark*}
    Another reasonable definition of cell would be to directly draw from the one-parameter case and declare the cell of a filtration $f : K \to \mathbb{R}^n$ to be the set of filtrations $g : K \to \mathbb{R}^n$ that induce the same preorder on the simplices of $K$ as $f$ does; the difference with our choice being that we require each coordinate $g_i : K \to \mathbb{R}$ to induce the same preorder on the simplices of $K$ as $f_i$.
    For the purposes of this remark, let us call this $\cell'(f)$.
    It can be easily seen that $\cell'(f)$ is a disjoint union of cells in the sense of \cref{definition:cell}.
    But, importantly, $\cell'(f)$ is typically not convex, or even connected: take $K$ consisting of two points and no other simplices, then $g : K \to \mathbb{R}^2$ mapping the first point to $(0,1)$ and the second to $(1,0)$ is in $\cell'(f)$, where $f$ maps the first point to $(1,0)$ and the second to $(0,1)$; but $f$ and $g$ cannot be connected through a path that stays within $\cell'(f)$.
    It can also be checked that usual descriptors restricted to $\cell'(f)$ do not behave as nicely as when restricted to $\cell(f)$: for example, the sorted Hilbert decomposition is affine on $\cell(f)$, but typically not on $\cell'(f)$.
\end{remark*}

We have the following key properties of cells.

\begin{lemma}
    \label{lemma:cells-one-parameter}
    The function $(\Rbb^n)^K \to (\Rbb^K)^n$ mapping $g$ to $(g_1, \dots, g_n)$ is a linear diffeomorphism.
    Given $f \in \Fil_n(K)$, this function restricts to a linear diffeomorphism $I_f : \cell(f) \cong \cell(f_1) \times \cdots \times \cell(f_n)$.
\end{lemma}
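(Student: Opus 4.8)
The plan is to handle the global statement first and then cut it down to the cells by unwinding the definitions of $\grid$ and $\ord$. For the global claim, I would observe that the map $\Psi \colon (\Rbb^n)^K \to (\Rbb^K)^n$ sending $g$ to $(g_1, \dots, g_n)$ merely regroups the $n \cdot |K|$ real coordinates of a function $K \to \Rbb^n$. It is manifestly linear, and it is bijective with inverse sending $(g_1, \dots, g_n)$ to the function $\sigma \mapsto (g_1(\sigma), \dots, g_n(\sigma))$. Being a linear bijection between finite-dimensional real vector spaces of equal dimension, it is automatically a linear diffeomorphism (both $\Psi$ and $\Psi^{-1}$ are linear, hence smooth). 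This part is one or two lines.

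For the restriction, the key input is \cref{construction:grid-from-filtration}, which builds the grid and order data of a vector-valued $f$ coordinatewise: one has $\grid_f = \grid_{f_1} \times \cdots \times \grid_{f_n}$, carrying the factorization data $(m_1, \dots, m_n)$ with $m_i = |f_i(K)|$, and $\ord_f = (\ord_{f_1}, \dots, \ord_{f_n})$ as a map into the product. Using this, and the definition of $\cell$ (\cref{definition:cell}), I would establish
\begin{align*}
g \in \cell(f)
&\iff \grid_{g_i} = \grid_{f_i} \text{ and } \ord_{g_i} = \ord_{f_i} \text{ for all } i\\
&\iff g_i \in \cell(f_i) \text{ for all } i,
\end{align*}
where the first equivalence uses that equality of product grids (remembering the factorization) forces equality of each factor $m_i$, and that equality of the tuple-valued maps $\ord_f$ and $\ord_g$ is equality in every component. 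Hence $\Psi$ carries $\cell(f)$ bijectively onto $\cell(f_1) \times \cdots \times \cell(f_n)$, so $I_f$ is exactly the restriction of the linear diffeomorphism $\Psi$ to these subsets; its inverse is likewise the restriction of a linear map, and so $I_f$ is a linear diffeomorphism onto the product.

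I do not expect a genuine obstacle here: the content is entirely definitional. The only point requiring care is the first equivalence above, namely that the cell condition for the vector filtration really \emph{decouples} into $n$ independent scalar cell conditions. This rests on reading $\grid$ as carrying the factorization data $(m_1, \dots, m_n)$, so that $\grid_g = \grid_f$ is equivalent to $m_i(g) = m_i(f)$ for every $i$ rather than to a mere abstract poset isomorphism, together with the componentwise form of $\ord_f$ supplied by \cref{construction:grid-from-filtration}. Once this decoupling is in hand, surjectivity onto the product follows because the $n$ conditions are independent, and injectivity is inherited from $\Psi$, completing the argument.
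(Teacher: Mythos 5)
Your proof is correct and follows essentially the same route as the paper's: exhibit the explicit inverse to see that the regrouping map is a linear diffeomorphism, then observe that the cell condition decouples coordinatewise via \cref{construction:grid-from-filtration}. The paper's own proof simply states that the restriction claim "follows directly from the definition of cells"; your write-up supplies the details (in particular the point that equality of the product grids and of $\ord_f$ is componentwise) but adds no new idea.
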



\begin{lemma}
    \label{lemma:cells-semialgebraic}
    The set $\Fil_n(K) \subseteq (\Rbb^n)^K$ is semilinear, and, $\cell(f) \subseteq (\Rbb^n)^K$ is semilinear for every $f \in \Fil_n(K)$.
\end{lemma}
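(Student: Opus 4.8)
The plan is to realize both sets as finite Boolean combinations of affine hyperplanes and half-spaces inside the finite-dimensional vector space $(\Rbb^n)^K$, and then invoke the standard fact that the semilinear subsets of a finite-dimensional real vector space form a Boolean algebra---closed under finite unions, finite intersections and complements---which is moreover preserved under linear isomorphisms and closed under products (these are precisely the sets definable in the ordered real vector space; cf.\ \cite{vandenDries}). The elementary building blocks are as follows: for a pair of simplices $\sigma, \tau \in K$ and a coordinate $1 \leq i \leq n$, the evaluation map $g \mapsto g_i(\tau) - g_i(\sigma)$ is a linear functional on $(\Rbb^n)^K$, so each of $\{g : g_i(\tau) - g_i(\sigma) > 0\}$, $\{g : g_i(\tau) - g_i(\sigma) = 0\}$, and their union $\{g : g_i(\tau) - g_i(\sigma) \geq 0\}$ is semilinear. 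Since $K$ is finite, there are only finitely many such functionals.

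For $\Fil_n(K)$ this finishes matters almost immediately: unraveling \cref{definition:n-filtrations}, a function $g \in (\Rbb^n)^K$ lies in $\Fil_n(K)$ exactly when $g_i(\tau) - g_i(\sigma) \geq 0$ for every coordinate $i$ and every pair $\sigma \subseteq \tau$ in the face order, which is a finite intersection of the semilinear half-spaces above. For $\cell(f)$, the one real step is to translate the conditions $\grid_g = \grid_f$ and $\ord_g = \ord_f$ into sign conditions on these functionals. By \cref{construction:grid-from-filtration}, $\grid_g$ and $\ord_g$ are assembled coordinatewise from the maps $\ord_{g_i(K)}$, and by the uniqueness of the triple $(m, \ord_{f_i}, \iota_{f_i})$ the map $\ord_{f_i}$ depends only on the linear preorder that $f_i$ induces on the simplices of $K$. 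Hence $\grid_g = \grid_f$ together with $\ord_g = \ord_f$ holds if and only if, for every coordinate $i$ and every ordered pair $(\sigma,\tau)$, the comparison between $g_i(\sigma)$ and $g_i(\tau)$ matches that between $f_i(\sigma)$ and $f_i(\tau)$: one imposes $g_i(\tau) - g_i(\sigma) > 0$ when $f_i(\sigma) < f_i(\tau)$, $g_i(\tau) - g_i(\sigma) = 0$ when $f_i(\sigma) = f_i(\tau)$, and the symmetric strict inequality when $f_i(\sigma) > f_i(\tau)$. This is again a finite intersection of the basic semilinear sets above, so $\cell(f)$ is semilinear.

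I expect the only genuine (but still routine) obstacle to be the bookkeeping in this translation: one must verify that matching the pattern of strict inequalities and equalities among the values of $g_i$ with that of $f_i$ is exactly equivalent to the equality of combinatorial data $(\grid_g, \ord_g) = (\grid_f, \ord_f)$. The subtle point is that $\grid_g = \grid_f$ is not imposed separately---matching the equality pattern already forces $|g_i(K)| = |f_i(K)|$, since the number of distinct values of each $g_i$ equals the number of classes of its induced preorder, so $\grid_g = \grid_f$ comes for free. Once this is in hand, the claim is an immediate consequence of the closure properties of semilinear sets.

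Finally, I would note the cleaner alternative route for $\cell(f)$ afforded by \cref{lemma:cells-one-parameter}: since $I_f$ is a linear diffeomorphism $\cell(f) \cong \cell(f_1) \times \cdots \times \cell(f_n)$ and semilinearity is preserved under linear isomorphisms and products, it suffices to treat each $\cell(f_i) \subseteq \Rbb^K$, which is exactly the $n = 1$ instance of the argument above.
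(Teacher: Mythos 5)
Your proof is correct, and for the cell statement it is essentially the paper's argument: the paper likewise encodes $\ord_g=\ord_f$ as the requirement that the sign of $g(\tau)-g(\sigma)$ match that of $f(\tau)-f(\sigma)$ for every pair of simplices, obtaining $\cell(f)$ as a finite intersection of hyperplanes and open half-spaces (the paper first reduces to the one-parameter case via \cref{lemma:cells-one-parameter}, which is the ``cleaner alternative route'' you mention at the end; your direct coordinatewise version is equivalent). The point you flag as the only genuine obstacle---that $\grid_g=\grid_f$ comes for free once the equality/inequality pattern matches, since the number of distinct values of $g_i$ equals the number of classes of the induced preorder---is exactly right and is implicitly used in the paper as well. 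The one place you genuinely diverge is the first claim: the paper deduces semilinearity of $\Fil_n(K)$ from the second claim together with \cref{lemma:finitely-many-cells}, writing $\Fil_n(K)$ as the finite disjoint union of its cells, whereas you obtain it directly as the finite intersection of the closed half-spaces $\{g : g_i(\tau)-g_i(\sigma)\ge 0\}$ over face-order pairs $\sigma\subseteq\tau$. Your route is more elementary (it needs no finiteness-of-cells input and no case analysis), while the paper's route has the mild advantage of exhibiting the cell stratification of $\Fil_n(K)$ that the rest of \cref{section:stratifying} is built around. Both are valid; note only that both arguments tacitly use that finite intersections of the basic sets are semilinear, which is the standard reading of the definition (finite Boolean combinations), as you make explicit.
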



\begin{corollary}
    \label{corollary:semialgebraic-map-from-fil}
    Let $\alpha : \Fil_n(K) \to \Rbb^D$
    be a function such that $\alpha|_{\cell(f)} : \cell(f) \to \Rbb^D$ is semilinear (resp.~semialgebraic) for every $f \in \Fil_n(K)$.
    Then, $\alpha$ is semilinear (resp.~semialgebraic).
    \qed
\end{corollary}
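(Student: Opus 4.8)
The plan is to exploit the fact that the cells partition $\Fil_n(K)$ into \emph{finitely many} pieces, and then to assemble the graph of $\alpha$ from the graphs of its restrictions to the cells, invoking the closure of the semilinear (resp.~semialgebraic) sets under finite unions.

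First I would verify that there are only finitely many distinct cells. By \cref{definition:cell}, $\cell(f)$ is determined by the pair $(\grid_f, \ord_f)$. Here $\grid_f = [m_1] \times \cdots \times [m_n]$ with $m_i = |f_i(K)| \leq |K|$, so there are finitely many possible grids; and, for each fixed grid, the map $\ord_f : K \to \grid_f$ ranges over a finite set, since both $K$ and $\grid_f$ are finite. Hence the collection $\{\cell(f) : f \in \Fil_n(K)\}$ is finite, say equal to $\{C_1, \dots, C_N\}$, and these sets partition $\Fil_n(K)$, since the relation ``$g$ and $f$ have the same grid and the same order map'' is an equivalence relation on $\Fil_n(K)$.

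Next, I would write the graph of $\alpha$ as
\[
    \{(x, \alpha(x)) : x \in \Fil_n(K)\} = \bigcup_{j=1}^N \{(x, \alpha(x)) : x \in C_j\}.
\]
By hypothesis, each restriction $\alpha|_{C_j}$ is semilinear (resp.~semialgebraic), which by definition means that each set on the right-hand side is a semilinear (resp.~semialgebraic) subset of $(\Rbb^n)^K \times \Rbb^D$. Since the class of semilinear (resp.~semialgebraic) sets is closed under finite unions, the graph of $\alpha$ is semilinear (resp.~semialgebraic). Finally, since $\Fil_n(K)$ is itself semilinear by \cref{lemma:cells-semialgebraic}, this is precisely the statement that $\alpha$ is a semilinear (resp.~semialgebraic) function, as desired.

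I do not anticipate any genuine obstacle: the entire argument is bookkeeping once the finiteness of the number of cells is in hand. The only point requiring a little care is to regard the graphs of the restrictions as subsets of the \emph{ambient} space $(\Rbb^n)^K \times \Rbb^D$ (rather than of $C_j \times \Rbb^D$), so that their union is literally the graph of $\alpha$; this is immediate from the definition of semilinearity of a function, which already situates the graph inside the ambient Euclidean space.
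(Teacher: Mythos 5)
Your proposal is correct and matches the paper's (implicit) argument exactly: the paper marks this corollary as immediate, relying on the finiteness of the set of cells (its \cref{lemma:finitely-many-cells}, proved just as you do, via $m_i \leq |K|$ and the finiteness of maps $K \to \grid_f$), the fact that the cells partition $\Fil_n(K)$, and closure of semilinear (resp.~semialgebraic) sets under finite unions applied to the graph of $\alpha$. Your added remark about viewing the graphs of the restrictions inside the ambient space $(\Rbb^n)^K \times \Rbb^D$ is a correct and harmless clarification already built into the paper's definition of a semilinear function.
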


\subsubsection{Grid inclusions and geometry of cells}

\begin{definition}
    Let $\Gcal = \Gcal_1 \times \cdots \times \Gcal_n$.
    An \emph{aligned grid inclusion} of $\Gcal$ into $\Rbb^n$ is a monotonic and injective map $\Gcal \to \Rbb^n$ that is a product of monotonic and injective maps $\Gcal_k \to \Rbb$ for $1 \leq k \leq n$.
\end{definition}

As an example, for any filtration $f \in \Fil_n(K)$, the map $\iota_f : \grid_f \to \Rbb^n$ is an aligned grid inclusion.

The following construction allows us to identify the space of aligned grid inclusions of a fixed grid $\Gcal$ as a particularly simple subset of a Euclidean space.

\begin{construction}
    \label{construction:aligned-grid-inclusions-and-incl}
Given $m \in \Nbb$, define $\incl([m]) \subseteq \Rbb^{[m]}$
\[
    \incl([m]) \coloneqq \left\{f : [m] \to \Rbb \mid f(0) < \dots < f(m-1)\right\},
\]
and, for a grid $\Gcal = \Gcal_1 \times \cdots \times \Gcal_n$, let
\[
    \incl(\Gcal) \coloneqq \incl(\Gcal_1) \times \cdots \times \incl(\Gcal_n) \subseteq \Rbb^{[m_1]} \times \cdots \times \Rbb^{[m_n]}.
\]
Note that, if $\kappa : \Gcal \to \Rbb^n$ is an aligned grid inclusion, then $\kappa = \kappa_1 \times \cdots \times \kappa_n$, and $(\kappa_1, \dots, \kappa_n) \in \incl(\Gcal)$.
\end{construction}

Thanks to \cref{construction:aligned-grid-inclusions-and-incl}, we can, and do, identify aligned grid inclusions of $\Gcal$ into $\Rbb^n$ with $\incl(\Gcal)$.
In particular, if $f \in \Fil_n(K)$, we write $\iota_f \in \incl(\grid_f)$.


\begin{lemma}
    \label{lemma:incl-open-semilinear}
    For any grid $\Gcal$, the set $\incl(\Gcal) \subseteq \Rbb^{[m_1]} \times \cdots \times \Rbb^{[m_n]}$ is open, convex, and semilinear.
\end{lemma}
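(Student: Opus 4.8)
The plan is to exhibit $\incl(\Gcal)$ explicitly as the solution set of a finite system of strict linear inequalities, from which all three properties fall out simultaneously. Writing $N = m_1 + \cdots + m_n$ and choosing coordinates $x^{(k)}_j$ on $\Rbb^{[m_1]} \times \cdots \times \Rbb^{[m_n]} \cong \Rbb^N$ (with $1 \le k \le n$ and $0 \le j \le m_k - 1$), the definition $\incl(\Gcal) = \incl(\Gcal_1) \times \cdots \times \incl(\Gcal_n)$ unwinds directly to
\[
    \incl(\Gcal) = \bigcap_{k=1}^{n} \bigcap_{j=0}^{m_k - 2} \left\{ x^{(k)}_{j+1} - x^{(k)}_j > 0 \right\},
\]
a finite intersection of open half-spaces, each cut out by the degree-$\le 1$ polynomial $P^{(k)}_j \coloneqq x^{(k)}_{j+1} - x^{(k)}_j$. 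So $\incl(\Gcal)$ is nothing more exotic than an open convex polyhedral cone, and I would organize the proof around this single presentation.

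From here each of the three claims is immediate. For openness, every half-space $\{P^{(k)}_j > 0\}$ is open and a finite intersection of open sets is open. For convexity, every half-space is convex and an arbitrary intersection of convex sets is convex. For semilinearity, each $\{P^{(k)}_j > 0\}$ is a basic semilinear set (defined by a polynomial of degree $\le 1$), and semilinear sets are closed under finite intersection, so the displayed intersection is semilinear. I would state these three one-liners essentially as written.

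The only point that deserves any care is the closure of semilinear sets under finite intersection, i.e.\ the standard fact that semilinear sets form a Boolean algebra; note that the intersection above is genuinely multi-conditional (already $\incl([3])$ is a wedge, not a union of single half-spaces), so this closure cannot be circumvented. I expect no real obstacle here, but if one prefers to avoid invoking the Boolean-algebra fact wholesale, one can argue modularly along the lines of \cref{construction:aligned-grid-inclusions-and-incl}: first treat a single factor $\incl([m]) = \{x_0 < \cdots < x_{m-1}\}$, which is an intersection of $m-1$ open half-spaces and hence open, convex, and semilinear by the same three one-liners in the simpler setting; then observe that openness, convexity, and semilinearity are each preserved under finite Cartesian products. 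For semilinearity of a product $S \times T \subseteq \Rbb^{a+b}$, one uses that a basic set in the first factor, regarded in the product, is cut out by the same polynomial of the same degree (now viewed as not depending on the last $b$ coordinates), so $S \times \Rbb^b$ and $\Rbb^a \times T$ are semilinear and $S \times T = (S \times \Rbb^b) \cap (\Rbb^a \times T)$.
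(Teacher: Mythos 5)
Your proof is correct, and for openness and convexity it is essentially the paper's argument (the paper checks these on each factor $\incl([m])$ and then takes products/intersections; your global presentation as an intersection of open half-spaces in $\Rbb^N$ is the same computation). The one place you diverge is semilinearity: you argue directly that each set $\{x^{(k)}_{j+1} - x^{(k)}_j > 0\}$ is basic semilinear and invoke closure under finite intersection, whereas the paper instead recognizes $\incl([m])$ as the cell $\cell(f)$ of the filtration $f(i)=i$ on the discrete complex with vertex set $\{0,\dots,m-1\}$ and quotes \cref{lemma:cells-semialgebraic}. The two routes rest on exactly the same underlying fact, since the proof of \cref{lemma:cells-semialgebraic} itself writes a cell as a finite intersection of half-spaces and appeals to closure of semilinear sets under finite intersection; the paper's version buys economy by reusing a lemma already proved, while yours is more self-contained and avoids the (mildly artificial) detour through a simplicial complex. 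Your explicit flagging of the closure-under-intersection point is well taken: the paper's stated definition of a semilinear set is literally a finite \emph{union} of basic sets $\{P=0\}$ and $\{P>0\}$, with no intersections, under which an open orthant such as $\incl([3])$ is not obviously of that form; one needs the standard convention that semilinear sets are finite unions of finite intersections of basic sets (equivalently, that they form a Boolean algebra), which the paper uses implicitly in several proofs. Your fallback argument for products via $S \times T = (S \times \Rbb^b) \cap (\Rbb^a \times T)$ is also fine and matches how the paper implicitly treats the product $\incl(\Gcal_1) \times \cdots \times \incl(\Gcal_n)$ as an intersection of cylinders.
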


Our goal now is to prove that the map taking a filtration $f$ to its corresponding aligned grid inclusion restricts to a linear diffeomorphism between the cell of $f$ and the space of inclusions $\incl(\grid_f)$.

We start by formally defining this map.

By \cref{definition:cell}, if $g \in \cell(f)$, then $\grid_g = \grid_f$, so that $\iota_g$ is an aligned grid inclusion of $\grid_f$ into $\Rbb^n$, and thus $\iota_g \in \incl(\grid_f)$.
This allows us to define the following.

\begin{definition}
    \label{definition:iota}
    Let $f \in \Fil_n(K)$.
    Define the function $\itsgridincl : \cell(f) \to \incl(\ord_f)$ by mapping $g \in \cell(f)$ to $\iota_g : \grid_f = \grid_g \to \Rbb^n$, seen as an aligned grid inclusion (\cref{construction:aligned-grid-inclusions-and-incl}).
\end{definition}

\begin{definition}\label{definition:carrier}
    Let $f \in \Fil_1(K)$.
    A \emph{carrier} for $f$ is a function
    $C : \grid_f \to K$, such that $\ord_{f}(C(k)) = k$ for every $k \in \grid_f$.
    A \emph{carrier} for a multifiltration $f \in \Fil_n(K)$ consists of a family of functions $C = \{C_i : \grid_{f_i} \to K\}_{1 \leq i \leq n}$, with $C_i$ a carrier of $f_i$ for every $1 \leq i \leq n$. 
\end{definition}

Every filtration $f \in \Fil_n(K)$ admits a carrier, since $\ord_{f_i} : K \to \ord_{f_i}$ is surjective for all $1 \leq i \leq n$, by construction.

\begin{lemma}
    \label{lemma:carrier-iso}
    Let $f \in \Fil_n(K)$ and let $C$ be a carrier for~$f$.
    The function
    \[
        \prod_{i=1}^n\cell(f_i)
        \xrightarrow{\;\;\;\;C^*\;\;\;\;}
        \prod_{i=1}^n\incl(\grid_{f_i}) = \incl(\grid_f)
    \]
    given by mapping $(g_1, \dots g_n)$ to $(g_1 \circ C_1, \dots, g_n \circ C_n)$ is a linear diffeomorphism.
\end{lemma}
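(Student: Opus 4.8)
The plan is to reduce to a single parameter and exhibit an explicit linear inverse built from precomposition maps. Since $C^{*} = \prod_{i=1}^{n} C_i^{*}$, where $C_i^{*}\colon \cell(f_i) \to \incl(\grid_{f_i})$ sends $g_i$ to $g_i \circ C_i$, and since $\prod_i \incl(\grid_{f_i}) = \incl(\grid_f)$ by \cref{construction:aligned-grid-inclusions-and-incl}, and a product of linear diffeomorphisms is again one, it suffices to treat each factor $C_i^{*}$ separately. The whole argument hinges on a single observation: the carrier condition $\ord_{f_i}(C_i(k)) = k$ says exactly that $C_i \colon \grid_{f_i} \to K$ is a section of $\ord_{f_i}\colon K \to \grid_{f_i}$. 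I would therefore propose as the inverse of $C_i^{*}$ the precomposition map $D_i \colon h \mapsto h \circ \ord_{f_i}$, and check that $C_i^{*}$ and $D_i$ are mutually inverse restrictions of linear maps between $\Rbb^{K}$ and $\Rbb^{\grid_{f_i}}$.

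First I would verify that both maps are well defined on the stated subsets. For $g_i \in \cell(f_i)$ we have $\ord_{g_i} = \ord_{f_i}$, hence $g_i \circ C_i(k) = \iota_{g_i}(\ord_{g_i}(C_i(k))) = \iota_{g_i}(\ord_{f_i}(C_i(k))) = \iota_{g_i}(k)$; thus $C_i^{*}(g_i) = \iota_{g_i}$, which is monotonic and injective on the linear order $\grid_{f_i}$, hence strictly increasing, and so lies in $\incl(\grid_{f_i})$. (This also identifies $C_i^{*}$ with the map $\itsgridincl$ of \cref{definition:iota}.) Conversely, for $h \in \incl(\grid_{f_i})$, set $g_i \coloneqq h \circ \ord_{f_i}$; this is a factorization of $g_i$ through the surjection $\ord_{f_i}$ with $h$ monotonic and injective, so by uniqueness of the triple $(m_i, \ord, \iota)$ associated with a real-valued function we obtain $\ord_{g_i} = \ord_{f_i}$ and $\grid_{g_i} = \grid_{f_i}$, i.e. $D_i(h) \in \cell(f_i)$.

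Next I would check that both compositions are the identity, which is immediate from the section property $\ord_{f_i} \circ C_i = \mathrm{id}_{\grid_{f_i}}$ and associativity of composition: $C_i^{*}(D_i(h)) = (h \circ \ord_{f_i}) \circ C_i = h \circ (\ord_{f_i} \circ C_i) = h$; and for $g_i \in \cell(f_i)$, writing $g_i = \iota_{g_i} \circ \ord_{f_i}$, we get $D_i(C_i^{*}(g_i)) = (g_i \circ C_i) \circ \ord_{f_i} = \iota_{g_i} \circ (\ord_{f_i} \circ C_i) \circ \ord_{f_i} = \iota_{g_i} \circ \ord_{f_i} = g_i$. Since precomposition with a fixed function is linear, both $C_i^{*}$ and $D_i$ are restrictions of linear maps between $\Rbb^{K}$ and $\Rbb^{\grid_{f_i}}$, and being mutually inverse bijections between the convex open sets $\cell(f_i)$ and $\incl(\grid_{f_i})$, they are linear diffeomorphisms; taking the product over $i$ finishes the proof.

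I do not expect a genuine obstacle here; the only point demanding care is the well-definedness of $D_i$, i.e. confirming that $h \circ \ord_{f_i}$ really induces $\ord_{f_i}$ as its order map and $\grid_{f_i}$ as its grid, which is where the uniqueness of the order/inclusion factorization of a real-valued function must be invoked rather than taken for granted. Everything else is formal manipulation of precomposition maps together with the section identity $\ord_{f_i} \circ C_i = \mathrm{id}$.
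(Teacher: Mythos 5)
Your proof is correct and follows essentially the same route as the paper's: reduce to the one-parameter case, exhibit precomposition with $\ord_{f_i}$ as the explicit inverse, and verify both composites using the section identity $\ord_{f_i}\circ C_i = \mathrm{id}$. The only difference is that you explicitly check well-definedness of the inverse (that $h\circ\ord_{f_i}$ lands in $\cell(f_i)$ via uniqueness of the order/inclusion factorization), a detail the paper leaves implicit.
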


\begin{corollary}
    \label{corollary:cells-nice}
    Let $f \in \Fil_n(K)$ and let $C$ be a carrier for~$f$.
    Then, $\itsgridincl = C^* \circ I_f : \cell(f) \to \incl(\grid_f)$.
    Thus, $\itsgridincl : \cell(f) \to \incl(\grid_f)$ is a linear diffeomorphism between $\cell(f)$ and an open, convex, and semilinear subset of a Euclidean space.
\end{corollary}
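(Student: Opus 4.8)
The plan is to recognize that this corollary is essentially a formal consequence of \cref{lemma:cells-one-parameter,lemma:carrier-iso,lemma:incl-open-semilinear}, glued together by one short computation verifying that the two maps $\itsgridincl$ and $C^* \circ I_f$ agree pointwise. First I would unwind the definitions of the two maps in play. By \cref{definition:iota} and \cref{construction:aligned-grid-inclusions-and-incl}, for $g \in \cell(f)$ the map $\itsgridincl$ sends $g$ to $\iota_g = (\iota_{g_1}, \dots, \iota_{g_n}) \in \incl(\grid_f)$. On the other hand, \cref{lemma:cells-one-parameter} and \cref{lemma:carrier-iso} say that $C^* \circ I_f$ sends $g$ first to $(g_1, \dots, g_n) \in \prod_i \cell(f_i)$ and then to $(g_1 \circ C_1, \dots, g_n \circ C_n) \in \incl(\grid_f)$. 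Hence the identity $\itsgridincl = C^* \circ I_f$ reduces to the componentwise claim that $\iota_{g_i} = g_i \circ C_i$ as functions $\grid_{f_i} \to \Rbb$, for each $1 \leq i \leq n$.

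The core of the argument is then to verify this componentwise identity. Since $g \in \cell(f)$, \cref{definition:cell} gives $\grid_{g_i} = \grid_{f_i}$ and $\ord_{g_i} = \ord_{f_i}$. Because $C$ is a carrier for $f$, \cref{definition:carrier} gives $\ord_{f_i}(C_i(k)) = k$ for every $k \in \grid_{f_i}$. Using the canonical factorization $g_i = \iota_{g_i} \circ \ord_{g_i}$, I would compute, for each $k \in \grid_{f_i}$,
\[
    (g_i \circ C_i)(k) = \iota_{g_i}\bigl(\ord_{g_i}(C_i(k))\bigr) = \iota_{g_i}\bigl(\ord_{f_i}(C_i(k))\bigr) = \iota_{g_i}(k),
\]
which establishes $g_i \circ C_i = \iota_{g_i}$ and hence the first assertion $\itsgridincl = C^* \circ I_f$.

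For the final assertion, I would simply note that $I_f$ is a linear diffeomorphism by \cref{lemma:cells-one-parameter} and $C^*$ is a linear diffeomorphism by \cref{lemma:carrier-iso}, so their composite $\itsgridincl$ is one as well; and the codomain $\incl(\grid_f)$ is open, convex, and semilinear by \cref{lemma:incl-open-semilinear}. The only genuine content is the pointwise computation above: the two diffeomorphism claims and the geometric properties of the codomain are imported wholesale from the preceding lemmas, so I do not expect any substantial obstacle beyond correctly matching up the definitions of $\itsgridincl$, $I_f$, and $C^*$. A pleasant by-product worth flagging is that the left-hand side $\itsgridincl$ is manifestly independent of the chosen carrier $C$, whereas the right-hand side $C^* \circ I_f$ is a priori carrier-dependent; the established identity therefore shows, a posteriori, that $C^* \circ I_f$ does not in fact depend on $C$.
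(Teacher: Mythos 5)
Your proposal is correct and follows essentially the same route as the paper: the identity $\itsgridincl = C^* \circ I_f$ is reduced to a pointwise computation using the carrier property $\ord_{f_i}(C_i(k)) = k$, the equality $\ord_{g_i} = \ord_{f_i}$ from the definition of $\cell(f)$, and the factorization $g_i = \iota_{g_i} \circ \ord_{g_i}$, after which the diffeomorphism and geometric claims are imported from \cref{lemma:cells-one-parameter,lemma:carrier-iso,lemma:incl-open-semilinear}. The only cosmetic difference is that you carry out the computation coordinate by coordinate, whereas the paper writes the same chain of equalities for the full tuple $\iota_g(x)$ at once.
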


Note, however, that the dimension of $\cell(f)$ depends on $f$ (more specifically, on $\grid_f$).

\subsection{Semilinear descriptors of multifiltrations}
\label{section:grid-determined}

In this section we prove \cref{theorem:grid-determined-implies-semialgebraic}, which gives sufficient conditions for an invariant to be semilinear as well as an explicit description of the gradient.
We instantiate this result to well known invariants in \cref{theorem:semilinearity-main-theorem}.

\begin{figure}
\[
    \begin{tikzpicture}
        \matrix (m) [matrix of math nodes,row sep=3em,column sep=3em,minimum width=2em,nodes={text height=1.75ex,text depth=0.25ex}]
        { \Fil_n(K)  &  \,  & \Rbb^D \\
          \cell(f) & \, & \incl(\grid_f) \\
          {\displaystyle \prod_{i=1}^n \cell(f_i)} & \, & {\displaystyle \prod_{i=1}^n \incl(\grid_{f_i})}\\
          };
        \path[line width=0.75pt, -{>[width=8pt]}]
        (m-1-1) edge [above] node {$\alpha$} (m-1-3)
        (m-2-1) edge [left] node {$\cong$} node [right] {$I_f$ {\scriptsize (Lem.~\ref{lemma:cells-one-parameter})}} (m-3-1)
        (m-2-1) edge [above] node {$\itsgridincl$ {\scriptsize (Def.~\ref{definition:iota})}} node [below] {$\cong$ {\scriptsize (Cor.~\ref{corollary:cells-nice})}} (m-2-3)
        (m-2-1) edge [right hook-{>[width=8pt]}] (m-1-1)
        (m-3-1) edge [below] node {$C^*$ {\scriptsize (Lem.~\ref{lemma:carrier-iso})}} node [above] {$\cong$ } (m-3-3)
        (m-3-3) edge [double equal sign distance,-] (m-2-3)
        (m-2-3) edge [left] node {$\push^\alpha_{\ord_f}$} node [right] {\scriptsize (Prop.~\ref{propositon:push-existence})} (m-1-3)
        ;
    \end{tikzpicture}
\]
\caption{Factoring a descriptor $\alpha$ through each cell allows us to deal with functions over linear, open subsets of a Euclidean space.}
\label{figure:key-diagram}
\end{figure}

\begin{definition}
Let $\Gcal$ be a grid.
A \emph{$\Gcal$-filtration} of a simplicial complex $K$ consists of a monotonic function $K \to \Gcal$.
A $\Gcal$-filtration is \emph{componentwise surjective} if $h_i : K \to \Gcal_i$ is surjective for each $1 \leq i \leq n$.
\end{definition}

The set of $\Gcal$-filtrations on $K$ is denoted by $\Fil_\Gcal(K)$ \label{grid-filts}.
As an example, $\ord_f \in \Fil_{\grid_f}(K)$ is a componentwise surjective $\grid_f$-filtration,
for every filtration $f \in \Fil_n(K)$.

For an interpretation of the next result, see \cref{remark:push-interpretation}.

\begin{proposition}
    \label{propositon:push-existence}
    Let $\alpha : \Fil_n(K) \to A$.
    For every grid $\Gcal$ and componentwise surjective $\Gcal$-filtration $h \in \Fil_\Gcal(K)$, there exists a unique function $\push^\alpha_h : \incl(\Gcal) \to A$ such that, for every $f \in \Fil_n(K)$ with $\grid_f = \Gcal$ and $\ord_f = h$ we have $\alpha|_{\cell(f)} = \push^\alpha_{\ord_f} \circ \itsgridincl : \cell(f) \to A$.
\end{proposition}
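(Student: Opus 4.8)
The plan is to define $\push^\alpha_h$ directly by precomposition, exploiting the factorization of any filtration into an order part and an inclusion part. The crucial observation is that, by \cref{definition:cell}, the set $\cell(f)$ depends on $f$ only through the pair $(\grid_f, \ord_f)$; hence all filtrations $f$ with $\grid_f = \Gcal$ and $\ord_f = h$ share one and the same cell, which I will denote $\cell_h$. Any $g \in \cell_h$ factors as $g = \iota_g \circ h$ with $\iota_g \in \incl(\Gcal)$, and conversely, for any aligned grid inclusion $\kappa \in \incl(\Gcal)$ the composite $\kappa \circ h$ lies in $\cell_h$. This already suggests the definition, and simultaneously exhibits $\kappa \mapsto \kappa \circ h$ as an inverse to $\itsgridincl$.

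For existence, I would set $\push^\alpha_h(\kappa) \coloneqq \alpha(\kappa \circ h)$ for $\kappa \in \incl(\Gcal)$. First I must check this is well defined, i.e.\ that $\kappa \circ h \in \Fil_n(K)$: since $h$ is monotonic (being a $\Gcal$-filtration) and $\kappa$ is monotonic (being an aligned grid inclusion), the composite is monotonic, hence a filtration. Next, writing $f = \kappa \circ h$, I would verify $\grid_f = \Gcal$ and $\ord_f = h$: for each coordinate, $f_i = \kappa_i \circ h_i$ with $\kappa_i$ injective and monotonic and $h_i$ surjective onto $[m_i]$ (this is exactly where componentwise surjectivity of $h$ is used), so $f_i(K) = \kappa_i([m_i])$ has $m_i$ elements, giving $\grid_f = \Gcal$; and by the uniqueness of the decomposition $f_i = \iota_{f_i} \circ \ord_{f_i}$ recalled just before \cref{construction:grid-from-filtration}, we get $\ord_{f_i} = h_i$ and $\iota_{f_i} = \kappa_i$, i.e.\ $\ord_f = h$ and $\iota_f = \kappa$. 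Finally, to verify the defining equation, take any $f$ with $\grid_f = \Gcal$, $\ord_f = h$ and any $g \in \cell(f) = \cell_h$; then $g = \iota_g \circ h$ and $\itsgridincl(g) = \iota_g$, so $\push^\alpha_h(\itsgridincl(g)) = \alpha(\iota_g \circ h) = \alpha(g)$, as required.

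For uniqueness, I would use that $\itsgridincl : \cell(f) \to \incl(\grid_f) = \incl(\Gcal)$ is a bijection (indeed a linear diffeomorphism) by \cref{corollary:cells-nice}. If $\push$ and $\push'$ both satisfy the defining equation for a fixed $f$ with $\grid_f = \Gcal$ and $\ord_f = h$, then for any $\kappa \in \incl(\Gcal)$ there is a unique $g \in \cell(f)$ with $\itsgridincl(g) = \kappa$, and the defining equation forces $\push(\kappa) = \alpha(g) = \push'(\kappa)$; hence $\push = \push'$ on all of $\incl(\Gcal)$.

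The argument is essentially bookkeeping once the order/inclusion factorization is in place, so I do not expect a serious obstacle. The one point requiring genuine care is the verification that $\grid_{\kappa \circ h} = \Gcal$ \emph{exactly}: this relies decisively on the componentwise surjectivity of $h$, without which the image of some coordinate would be a proper subgrid and the identity $\itsgridincl(\kappa \circ h) = \kappa$ would fail. Tracking the coordinatewise uniqueness of the $(\ord, \iota)$ decomposition is the other place where a slip would be easy.
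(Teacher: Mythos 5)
Your proposal is correct and follows essentially the same route as the paper: define $\push^\alpha_h(\kappa) = \alpha(\kappa\circ h)$, use componentwise surjectivity of $h$ to see that $\grid_{\kappa\circ h}=\Gcal$ and $\ord_{\kappa\circ h}=h$ (hence $\iota_{\kappa\circ h}=\kappa$), and deduce uniqueness from the fact that every $\kappa\in\incl(\Gcal)$ is hit by $\itsgridincl$. The only cosmetic difference is that you invoke the bijectivity of $\itsgridincl$ from \cref{corollary:cells-nice}, while the paper just exhibits the explicit preimage $\kappa\circ h$; both are fine.
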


Graphically, there is a uniquely determined vertical map making the top rectangle of \cref{figure:key-diagram} commute; we now give an interpretation of this map.

\begin{remark}
    \label{remark:push-interpretation}
    Let $\alpha : \Fil_n(K) \to A$, and fix a grid $\Gcal$ and componentwise surjective $\Gcal$-filtration $h \in \Fil_\Gcal(K)$.
    The map $\push^\alpha_h : \incl(\Gcal) \to A$ is to be interpreted as the map that first computes the invariant $\alpha$ on the discrete, $\Gcal$-filtration $h$, and then pushes this computation along any given grid inclusion $\kappa \in \incl(\Gcal)$.
    Informally, \cref{propositon:push-existence} says that any invariant of multifiltrations is determined by an invariant of discrete filtrations (indexed by finite grids), and by a map $\push$ whose job is to push the invariant of a discrete filtration along grid inclusions.
\end{remark}

\begin{definition}
\label{def:semilinearly-determined}
  A function $\alpha : \Fil_n(K) \to \Rbb^D$ is \emph{semilinearly determined on grids} if, for each grid $\Gcal$ and componentwise surjective $\Gcal$-filtration $h \in \Fil_\Gcal(K)$, the map $\push^\alpha_h : \incl(\Gcal) \to \Rbb^D$ is semilinear.
\end{definition}

\begin{remark}
    \label{remark:definably-determined-on-grids}
More generally, one can define the notions of a function $\alpha$ being \emph{definably determined on grids} with respect to some o-minimal structure.
However, this level of generality is not required to handle multiparameter descriptors we have considered.
\end{remark}


The next theorem and proposition follow immediately from \cref{corollary:semialgebraic-map-from-fil,lemma:incl-open-semilinear,lemma:cells-one-parameter,corollary:cells-nice}.

\begin{theorem}
    \label{theorem:grid-determined-implies-semialgebraic}
    If $\alpha : \Fil_n(K) \to \Rbb^D$ is semilinearly determined on grids, then it is a semilinear function.\qed
\end{theorem}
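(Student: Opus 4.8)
The plan is to reduce the global semilinearity of $\alpha$ to its behavior on a single cell, and then to read off that cell-wise behavior directly from the hypothesis via the factorization encoded in \cref{figure:key-diagram}. First I would invoke \cref{corollary:semialgebraic-map-from-fil}: since $\Fil_n(K)$ is partitioned into the cells $\cell(f)$, each a semilinear set by \cref{lemma:cells-semialgebraic}, it suffices to prove that the restriction $\alpha|_{\cell(f)} : \cell(f) \to \Rbb^D$ is semilinear for every $f \in \Fil_n(K)$. This turns a statement about the complicated domain $\Fil_n(K)$ into finitely many local statements, one per cell.

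Next I would fix $f$ and apply \cref{propositon:push-existence} with $\Gcal = \grid_f$ and $h = \ord_f$ --- noting that $\ord_f$ is indeed a componentwise surjective $\grid_f$-filtration, as observed after the definition of $\Gcal$-filtrations --- to obtain the factorization $\alpha|_{\cell(f)} = \push^\alpha_{\ord_f} \circ \itsgridincl$. The hypothesis that $\alpha$ is semilinearly determined on grids (\cref{def:semilinearly-determined}) says precisely that $\push^\alpha_{\ord_f} : \incl(\grid_f) \to \Rbb^D$ is semilinear. Meanwhile, \cref{corollary:cells-nice} identifies $\itsgridincl : \cell(f) \to \incl(\grid_f)$ as a linear diffeomorphism onto the open, convex, semilinear set $\incl(\grid_f)$ (\cref{lemma:incl-open-semilinear}); the linearity here rests on \cref{lemma:cells-one-parameter}, which supplies the linear diffeomorphism $I_f$ appearing in the factorization $\itsgridincl = C^* \circ I_f$.

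It then remains only to compose: $\alpha|_{\cell(f)}$ is the linear diffeomorphism $\itsgridincl$ followed by the semilinear map $\push^\alpha_{\ord_f}$. I would conclude using the standard closure of semilinear maps under composition over semilinear domains --- concretely, the graph of $\push^\alpha_{\ord_f} \circ \itsgridincl$ is obtained from the semilinear graphs of the two factors by the semilinear operations of fiber product and projection, and $\cell(f)$ is itself semilinear. Since this holds for every cell, \cref{corollary:semialgebraic-map-from-fil} yields the semilinearity of $\alpha$.

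The only real care-point --- not so much an obstacle as bookkeeping --- is this final compositional closure: one must check that the domains match (the codomain $\incl(\grid_f)$ of $\itsgridincl$ is exactly the domain of $\push^\alpha_{\ord_f}$) and that composing a semilinear map with a linear isomorphism preserves semilinearity even though the relevant domains are proper semilinear subsets rather than whole Euclidean spaces. Given the diagram in \cref{figure:key-diagram} and the cited lemmas, every arrow in the chain $\cell(f) \to \incl(\grid_f) \to \Rbb^D$ is already known to be semilinear, so the argument is essentially a one-line chaining, matching the paper's assertion that the result follows immediately from the preceding results.
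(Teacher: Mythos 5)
Your proposal is correct and follows exactly the route the paper intends: reduce to cells via \cref{corollary:semialgebraic-map-from-fil}, factor $\alpha|_{\cell(f)} = \push^\alpha_{\ord_f} \circ \itsgridincl$ using \cref{propositon:push-existence}, and compose the semilinear map $\push^\alpha_{\ord_f}$ (the hypothesis) with the linear diffeomorphism from \cref{corollary:cells-nice}. The paper gives no further detail beyond citing these same results, so your write-up is, if anything, more explicit than the original.
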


\begin{proposition}
    \label{proposition:gradient}
    If $\alpha : \Fil_n(K) \to \Rbb^D$ is semilinearly determined on grids, then, for every $f \in \Fil_n(K)$, an element of the Clarke differential of the restriction $\alpha|_{\cell(f)} : \cell(f) \to \Rbb^D$ can be obtained by pulling back an element of the Clarke differential of $\push^\alpha_{\ord_f} : \incl(\grid_f) \to \Rbb^D$, which is a semilinear function with domain an open set of a Euclidean space, along the linear diffeomorphism $C^* \circ I_f$, for $C$ any carrier of $f$.\qed
\end{proposition}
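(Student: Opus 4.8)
The plan is to assemble the factorization established earlier in this section and then transport Clarke differentials across it, exploiting the fact that one of the two maps in the factorization is a \emph{linear} diffeomorphism. First I would invoke \cref{propositon:push-existence} to write
\[
    \alpha|_{\cell(f)} = \push^\alpha_{\ord_f} \circ \itsgridincl : \cell(f) \to \Rbb^D,
\]
and then use \cref{corollary:cells-nice} to rewrite $\itsgridincl = C^* \circ I_f$ and to record that this is a linear diffeomorphism from $\cell(f)$ onto $\incl(\grid_f)$, which by \cref{lemma:incl-open-semilinear} is an open, convex, semilinear subset of a Euclidean space. The hypothesis that $\alpha$ is semilinearly determined on grids (\cref{def:semilinearly-determined}) then supplies the remaining part of the claim: the outer map $\push^\alpha_{\ord_f}$ is semilinear on this open set, hence locally Lipschitz in the interior of its linear pieces, so its Clarke generalized Jacobian (the vector-valued analog of the subdifferential from the Background, namely the convex hull of limits of Jacobian matrices at points of differentiability) is defined.

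The key step is then a chain rule for the generalized Jacobian with a smooth inner map. Writing $L \coloneqq C^* \circ I_f$, which is linear with constant derivative $L$, I would argue that because $L$ is a linear \emph{diffeomorphism} of open sets (not merely $C^1$) the usual chain-rule inclusion is in fact an equality:
\[
    \partial\!\left(\alpha|_{\cell(f)}\right)(f) = \partial\!\left(\push^\alpha_{\ord_f}\right)\!\bigl(\itsgridincl(f)\bigr)\circ L,
\]
where the right-hand side denotes $\{A \circ L : A \in \partial(\push^\alpha_{\ord_f})(\itsgridincl(f))\}$. In particular, pulling back any element $A$ of the Clarke differential of $\push^\alpha_{\ord_f}$ along $L = C^* \circ I_f$ produces an element of the Clarke differential of $\alpha|_{\cell(f)}$, which is exactly the assertion.

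To justify the displayed equality I would argue directly from the definition of the generalized Jacobian as a convex hull of limits of Jacobian matrices. Since $L$ is a linear isomorphism with constant derivative, it carries the full-measure set where $\push^\alpha_{\ord_f}$ is differentiable bijectively onto the full-measure set where $\alpha|_{\cell(f)}$ is differentiable, and at each such point the classical chain rule gives $J(\alpha|_{\cell(f)}) = J(\push^\alpha_{\ord_f}) \circ L$. Because postcomposition with the fixed linear map $L$ is continuous and commutes with taking limits and with forming convex hulls, the equality of Clarke differentials follows; the matching dimensions of $\cell(f)$ and $\incl(\grid_f)$ (via \cref{lemma:cells-one-parameter} and \cref{corollary:cells-nice}) ensure the Jacobians are computed intrinsically on open sets of Euclidean spaces.

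The main obstacle I anticipate is precisely this last promotion of the Clarke chain rule from an inclusion to an equality, since for general Lipschitz compositions one only has an inclusion. What rescues the argument is that the inner map is a linear diffeomorphism: it makes the correspondence of differentiability points a bijection and keeps the multiplicative factor $L$ constant, so no genuinely nonsmooth composition occurs. I would also take care to note that it is the semilinearity of $\push^\alpha_{\ord_f}$ on the \emph{open} set $\incl(\grid_f)$, rather than any regularity of $\alpha$ on all of $\Fil_n(K)$, that guarantees the generalized Jacobian is well defined; everything else is bookkeeping assembled from \cref{corollary:cells-nice,lemma:incl-open-semilinear,def:semilinearly-determined}.
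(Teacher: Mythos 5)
Your proof is correct and follows essentially the same route as the paper, which simply declares \cref{proposition:gradient} to be an immediate consequence of \cref{propositon:push-existence,corollary:cells-nice,lemma:incl-open-semilinear,lemma:cells-one-parameter}: factor $\alpha|_{\cell(f)} = \push^\alpha_{\ord_f} \circ (C^* \circ I_f)$ and transport generalized Jacobians across the linear diffeomorphism. Your write-up is in fact more detailed than the paper's (which gives no argument at all), and your observation that the Clarke chain rule becomes an equality when the inner map is a fixed linear isomorphism is the right justification --- note also that the statement only needs the easier inclusion $\partial(\push^\alpha_{\ord_f})(\itsgridincl(f)) \circ L \subseteq \partial(\alpha|_{\cell(f)})(f)$.
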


\begin{remark}
    \label{remark:gradient-descent-implementation}
    The upshot of \cref{theorem:grid-determined-implies-semialgebraic,proposition:gradient} is that, in order to implement subgradient descent for an invariant $\alpha : \Fil_n(K) \to \Rbb^D$, it is enough to provide a map $\push_h^\alpha : \incl(\Gcal) \to \Rbb^D$, for each grid $\Gcal$ and $\Gcal$-filtration $h \in \Fil_\Gcal(K)$, for which subgradients are known.
    Recall that the interpretation of $\push_h^\alpha$ is in \cref{remark:push-interpretation}.
\end{remark}

%
%
%

\subsection{Applications}
\label{section:applications}

We now describe how to use \cref{theorem:satisfy-locally-lipschitz-condition,corollary:Hilbert-and-rank-semilinear} to obtain definable and locally Lipschitz objective functions based on the Hilbert decomposition signed measure; analogous constructions work for the rank decomposition signed measure.
Similar pipelines can be constructed using the multiparameter persistence landscape, thanks to its semilinearity \cref{proposition:landscape-semilinear} and stability \cite{Vipond2020}.
Recall that $\OT$ is the optimal transport distance (\ref{section:POT}).

\begin{example}
    \label{example:distance-to-a-measure}
    Let $\dsm_k(\Rbb^n) \subseteq \dsm(\Rbb^n)$ denote the subset of signed measure of total mass $k \in \Zbb$ (that is, discrete measures such that the number of positive point masses minus the number of negative point masses is $k$)\footnote{One can see that, for every $f \in \Fil_n(K)$, we have $\mu^{\Hil}_{H_i(f)} \in \dsm_k(\Rbb^n)$ for $k = \dim(H_i(K))$, which only depends on $K$ and not on the specific filtration $f$.}.
    Fix $\nu \in \dsm_k(\Rbb^n)$.
    Then, the function $E' : \dsm_k(\Rbb^n) \to \Rbb$ given by mapping $\mu$ to $\OT(\mu, \nu) \in \Rbb$ is $1$-Lipschitz.
    Since the computation of $\OT(\mu,\nu)$ reduces to computing a minimum (over all permutations of the point masses of $\mu^+ + \nu^-$ and $\nu^+ + \mu^-$) of sums of distances, it is easy to find a representative $E : \Rbb^\indexingset \to \Rbb$ as in \cref{theorem:satisfy-locally-lipschitz-condition}.
    And, this map $E'$ is in fact semialgebraic.
    Combining this with \cref{corollary:Hilbert-and-rank-semilinear}, we conclude that we can use the distance to a fixed signed measure $\nu$, i.e., the following map,
    as a topological objective satisfying the assumptions of Thm.~\ref{theorem:main-results}:
    \[
        \Fil_n(K) \xrightarrow{\;\;\;\;} \Rbb
        \;\;\text{given by}\;\;
        f \xmapsto{\;\;\;\;} \OT( \mu^\Hil_{H_i(f)}, \nu).
    \]
\end{example}

\begin{example}
    \label{example:integration-function}
    Let $\psi : \Rbb^n \to \Rbb$ be a Lipschitz function.
    We get a Lipschitz function $\dsm(\Rbb^n) \to \Rbb$
    by mapping $\mu$ to $\int \psi\, \dsf \mu \in \Rbb$.
    Since the computation of $\int \psi\, \dsf \mu$ reduces to a finite sum indexed by the point masses of $\mu$ and weighted by $\psi$, it is easy to find a representative
    $E : \Rbb^\indexingset \to \Rbb$ as in \cref{theorem:satisfy-locally-lipschitz-condition}.
    If the function $\psi$ is definable over some o-minimal structure, then so is $E'$.
    Combining this with \cref{corollary:Hilbert-and-rank-semilinear}, we conclude that we can use integration with respect to $\psi$, i.e., the following map, as a topological objective which satisfies assumptions of \cref{theorem:main-results}:
    \[
        \Fil_n(K) \xrightarrow{\;\;\;\;} \Rbb
        \;\;\text{given by}\;\;
        f \xmapsto{\;\;\;\;} \int \psi \, \dsf\mu^\Hil_{H_i(f)}.
    \]
\end{example}

\section{Numerical experiments}
\label{sec:experiments}


We first illustrate our framework on a toy point cloud optimization experiment inspired from~\cite{Carriere2021a}. 
We then show experiments related more specifically to ML applications with synthetic and real-world data (see also \cref{section:experiments_appendix}, for details).
We use the implementation in \cite{multipers}.


\noindent\textbf{Point cloud optimization.}
This is not an application per se, but rather an illustrative example with the following two main goals:
(1) To demonstrate, with an experiment that can be easily assessed visually, that our pipeline works as expected, allowing the user to perform multiparameter homological optimization.
(2) To demonstrate that multiparameter homology optimization enables the user to optimize with respect to topological structures that can only be captured by considering the relationship between two or more parameters (in this case metric and density structure).
We build upon the point cloud example of \cite{Carriere2021a}.

Given a finite set of points $X$ sampled uniformly from the unit square in~$\R^2$ (\cref{fig1:pt-cloud}), the objective function aims to maximize the distance between the Hilbert decomposition signed measure of the optimized point cloud and the zero signed measure. Intuitively, this makes the topological content of the optimized point cloud as non-trivial as possible.
This is in direct analogy with the loss used in \cite{Carriere2021a}, which maximizes the optimal transport distance (referred to as Wasserstein distance there) to the zero persistence diagram.
In the one-parameter setting this is achieved by increasing the quantity and ``size'' (persistence) of the ``holes'' (cycles), while, in our two-parameter setting, the quantity and size of the holes, as well as the density of the constituent points, play a role in the objective function.

The Vietoris--Rips filtration $\VR$ of~$X$---see Definition~5.2 of \cite{oudot}---is used for this purpose, via the maximization of the sum of the lengths of the bars in its barcode\footnote{I.e., one-parameter rank decomposition measure (App.~\ref{section:rank-dec-signed-measure}).}
in homology degree~1. Up to a constant factor, this can be rewritten as the $\OT$ distance between the zero measure and the Hilbert decomposition signed measure of~$H_1(\VR)$:
\begin{equation}\label{eq:loss_pd}
\OT\left(\,\mu^\Hil_{H_1(\VR)}(X)\,,\, 0\right).
\end{equation}
\begin{figure*}[th]
    \centering
    \begin{subfigure}{0.23\textwidth}
        \includegraphics[width=\linewidth]{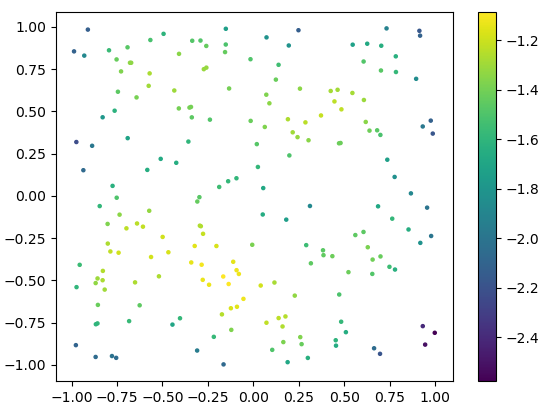}
        \caption{Point cloud $X$}
        \label{fig1:pt-cloud}
    \end{subfigure}
    \hfill
    \begin{subfigure}{0.23\textwidth}
        \includegraphics[width=\linewidth]{./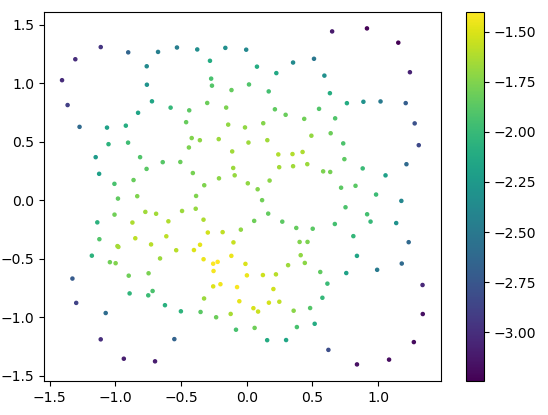}
        \caption{$X$ optimized with~\eqref{eqn:norm_loss}}
        \label{fig1:OT2}
    \end{subfigure}
    \hfill
    \begin{subfigure}{0.23\textwidth}
        \includegraphics[width=\linewidth]{./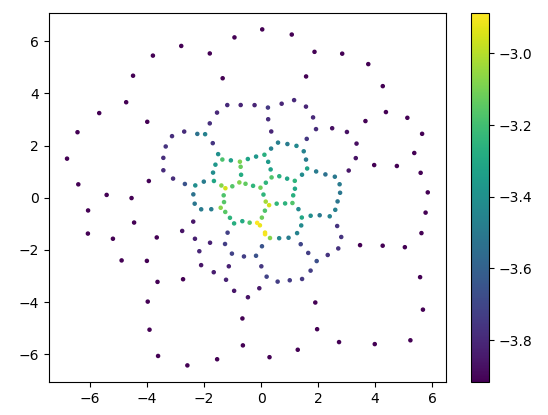}
        \caption{$X$ optimized with~\eqref{eq:loss_pd}}
        \label{fig1:OT1}
    \end{subfigure}
    \hfill
    \begin{subfigure}{0.23\textwidth}
        \includegraphics[width=\linewidth]{./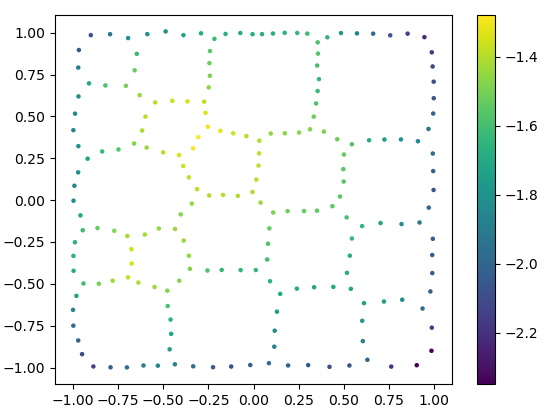}
        \caption{$X$ optimized w/~\eqref{eq:loss_pd} + regul.}
        \label{fig1:OT1+regul}
\end{subfigure}

    \begin{subfigure}{0.23\textwidth}
        \includegraphics[width=\linewidth]{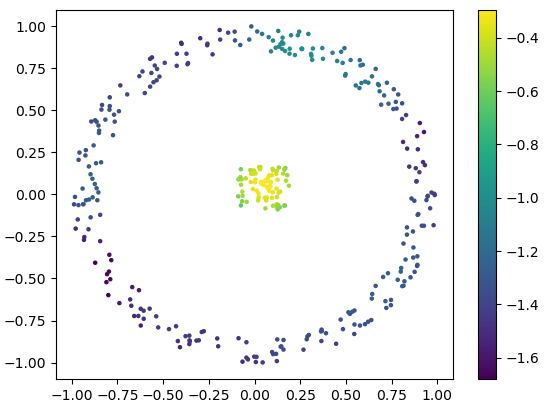}
        \caption{Point cloud $Y$}
        \label{fig2:pt-cloud}
\end{subfigure}
    \hfill
    \begin{subfigure}{0.23\textwidth}
        \includegraphics[width=\linewidth]{./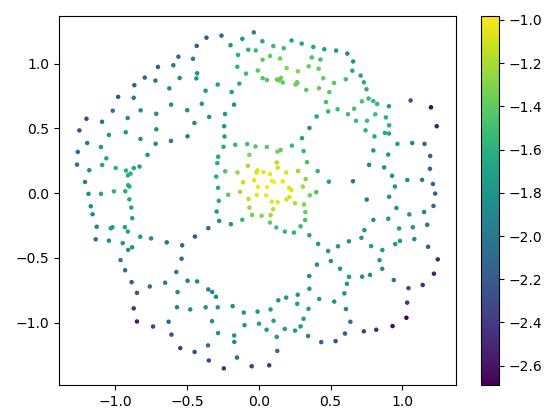}
        \caption{$Y$ optimized with~\eqref{eqn:norm_loss}}
        \label{fig2:OT2}
    \end{subfigure}
    \hfill
    \begin{subfigure}{0.23\textwidth}
        \includegraphics[width=\linewidth]{./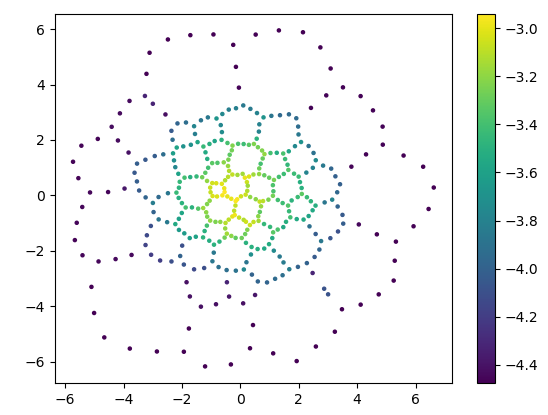}
        \caption{$Y$ optimized with~\eqref{eq:loss_pd}}
        \label{fig2:OT1}
    \end{subfigure}
    \hfill
    \begin{subfigure}{0.23\textwidth}
        \includegraphics[width=\linewidth]{./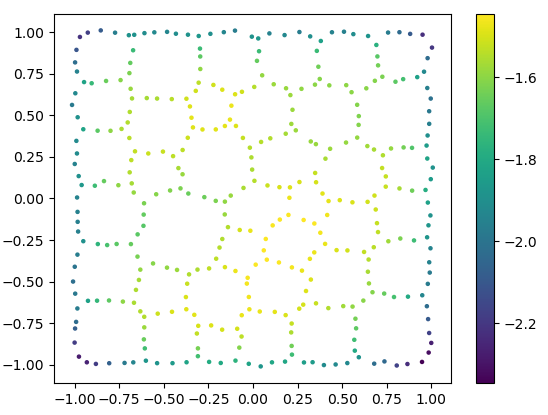}
        \caption{$Y$ optimized w/~\eqref{eq:loss_pd} + regul.}
        \label{fig2:OT1+regul}
    \end{subfigure}
    \caption{Optimizing the holes of point clouds.
    The colors indicate the $\log$-values of the density estimator.}
    \label{fig:one_circle_uni_noise}
\end{figure*}
While this approach does succeed in creating many cycles, it has two major drawbacks.
(a) As observed in~\cite{Carriere2021a} and illustrated in \cref{fig1:OT1}, the topological loss \eqref{eq:loss_pd} scales with rescaling of the points of~$X$, so loss~\eqref{eq:loss_pd} alone causes the points to diverge to infinity. To mitigate this effect, another purely geometric loss is added in~\cite{Carriere2021a}, which restricts the points to the unit square. (b) There is no control over the actual sizes of the obtained holes, which end up being arbitrary---see \cref{fig1:OT1+regul}.
Here we address these two issues at the same time by using the 
function-Rips bifiltration~$(\VR,\hat f)$ of~$X$---see Definition 5.1 of \cite{botnan-lesnick}---for some data-dependent kernel density estimator~$\hat f$, and by maximizing the analog of loss~\eqref{eq:loss_pd} in this two-parameter setting:
\begin{equation}\label{eqn:norm_loss}
\OT\left(\,\mu^\Hil_{H_1(\VR, \hat f)}(X)\,,\, 0\right).
\end{equation}
Note that the density estimator~$\hat f$ is recomputed after each epoch of the optimization.
As can be seen from the result (\cref{fig1:OT2}), adding in the density parameter prevents the points from diverging to infinity, the intuitive reason being that the density scales down with rescalings of the points of~$X$. For the same reason, the density parameter exerts some control over the sizes of the cycles produced, inducing larger cycles in lower-density areas.

Another distinguishing feature of the two-parameter metric- and density-aware topological loss is that it preserves some of the characteristics of the density of the initial point cloud. To see this, consider a dataset comprised primarily of points sampled from a large circle, with some extra noise of higher density located around the center (\cref{fig2:pt-cloud}). Using the function-Rips bifiltration and maximizing~\eqref{eqn:norm_loss} creates cycles from the noise points in the center, while still maintaining the initial large circle (\cref{fig2:OT2}). In contrast, one-parameter VR optimization treats all points equally, regardless of their density, and therefore destroys the large circle (Figures~\ref{fig2:OT1} and~\ref{fig2:OT1+regul}).
A precise explanation for this phenomenon remains to be determined; we suspect that this behavior is similar to that of mean-shift, in that creating many, or large, holes in high-density areas requires many points in those areas.

\noindent\textbf{Topological dimension reduction.}
In this experiment, we present an application of multiparameter homological loss differentiation to dimension reduction with autoencoders.
Regularizing standard autoencoders with topological losses that constrain the latent space to have the same PH as the input space was one of the first applications of PH-based optimization to appear in the literature~\cite{Moor2020, Doraiswamy2020, Carriere2021a, Vandaele2022}.
Common topological losses for autoencoders have a regularization term involving a distance between the barcodes of one-parameter $\VR$ filtrations:
\begin{equation}\label{eq:autoencoder_loss}
    \mathcal L(\theta):=\OT\left(\,\mu^\rk_{H_*(\VR)}(\tilde X(\theta))\,,\, \mu^\rk_{H_*(\VR)}(X)\,\right),
\end{equation}
where $\theta$ denotes the parameters of the autoencoder, and $X$ and $\tilde X$ denote the input and the learned latent spaces, respectively.
This loss does not take density into account, making the quality of the learned latent spaces susceptible to the presence of even mild noise.
We compare this to using the following loss, based on the Hilbert decomposition signed measure and a function-Rips two-parameter filtration,
\begin{equation*}\label{eq:mp_autoencoder_loss}
\mathcal L(\theta):=\OT\left(\,\mu^\Hil_{H_*(\VR,\hat f)}(\tilde X(\theta))\,,\, \mu^\Hil_{H_*(\VR,\hat f)}(X)\,\right),
\end{equation*}
where $\widehat{f}$ is a data-dependent density estimator.
This topological loss is an instance of~\cref{example:distance-to-a-measure}.

The point cloud data consist of two interlaced circles with background noise embedded in $\R^9$ (\cref{figure:autoencoder-data}), similar to the data used in~\cite{Carriere2021a}. 
See \autoref{fig:autoencoder}, which shows that the proposed multiparameter topological regularization outperforms both no topological regularization, and the one-parameter regularization of \cref{eq:autoencoder_loss}.
See \cref{section:details-autoencoder} for details about this experiment.

\begin{figure}[th]
\centering
\includegraphics[width=2.65cm]{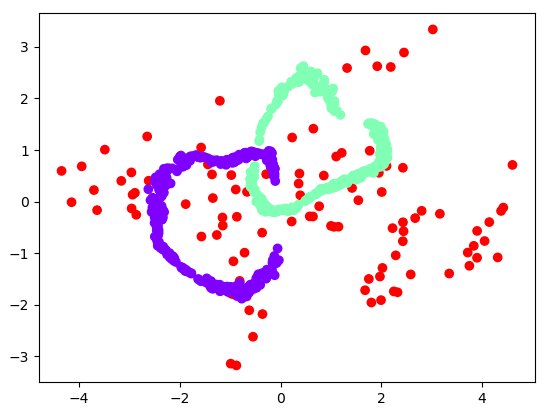}
\includegraphics[width=2.65cm]{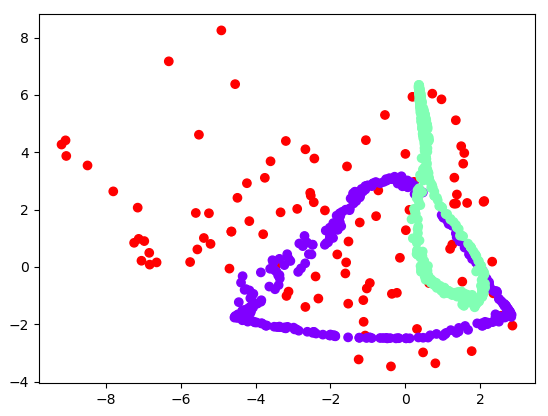}
\includegraphics[width=2.65cm]{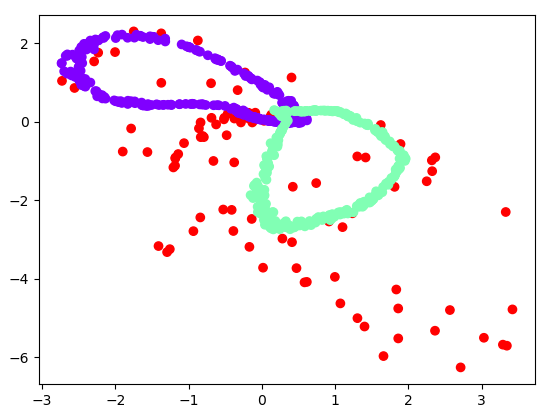}

\includegraphics[width=2.65cm]{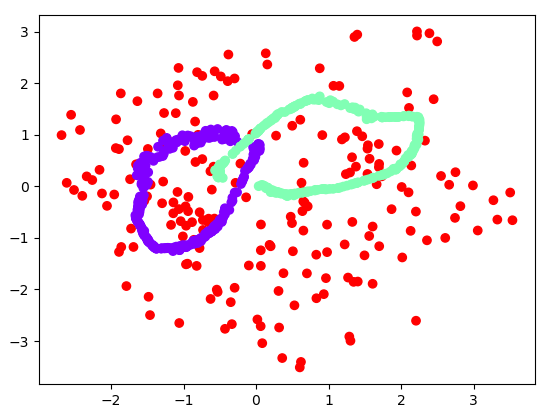}
\includegraphics[width=2.65cm]{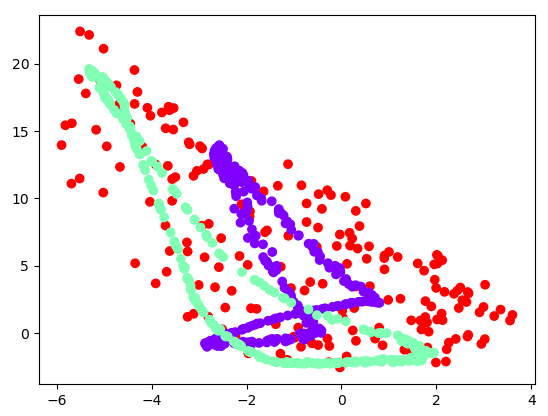}
\includegraphics[width=2.65cm]{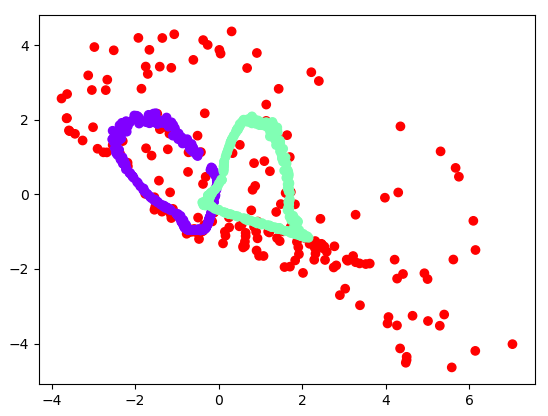}
\caption{Rows correspond to datasets (top has less background noise), and columns correspond to no topological regularization, one-parameter regularization, and multiparameter regularization.
One-parameter regularization is very susceptible to noise points, while no topological regularization can fail to close the circles and preserve topology.
Interestingly, no topological regularization behaves better with many noise points, possibly due to the metric loss having more distances to work with.
Multiparameter topological regularization ensures the preservation of topology in both cases.}
\label{fig:autoencoder}
\end{figure}

\smallskip \noindent\textbf{Graph classification.}
In this experiment, we illustrate the use of multiparameter topological optimization for deep learning on graph data, since graph classification is known to be a useful application of topology-based methods~\cite{Zhao2019, hofer-2, Carriere2020, Zhao2020, Horn2022, Loiseaux2023}.
We use the setup developed in~\cite{Horn2022}, in which node and edge attributes are learned at the same time that a graph neural network classifier is trained, with the attributes being computed using topological descriptors.
The authors of \cite{Horn2022} study the performance of using $k$ one-parameter descriptors, using $k$ different node attributes.
We compare the use of $k$, independent, one-parameter topological descriptors, against the use of a single $k$-parameter topological descriptor.
Since attributes are learned, the descriptors used must be differentiable, and this is where the application of our theory lies.
Details are in \cref{subsec:details-graph}.

\cref{table:graph-data-10} shows classification performances of a graph neural network architectures
complemented with either no topology (*-NOTOP), one-parameter PH as in~\cite{Horn2022} (*-OPTOP), and multiparameter PH (*-MPTOP).

Our goal is \emph{not} to obtain state-of-the-art results, but rather show a positive improvement when using multiparameter homological descriptors instead of one-parameter ones.

\begin{table}[th]
\centering
    {\tiny
    \begin{tabular}{|l|l|l|l|l|}
        \hline
        Model               & ENZYMES                      & IMDB-B                      & IMDB-M                      & MUTAG                       \\ \hline
        \hline
        GCN-NOTOP           & 30.3$\pm$8.1                 & 73.2$\pm$6.4                & 44.9$\pm$7.6                & \textbf{87.2$\pm$5.6}       \\
        GCN-OPTOP           & 28.8$\pm$7.5                 & 75.2$\pm$5.6                & {\ul \textbf{51.2$\pm$4.4}} & 84.1$\pm$8.9                \\
        GCN-MPTOP           & {\ul \textbf{39.0$\pm$10.1}} & {\ul \textbf{78.4$\pm$5.1}} & 51.1$\pm$3.5                & {\ul 85.1$\pm$7.7}          \\ \hline
        \hline
        GIN-NOTOP           & \textbf{47.0$\pm$12.9}       & 71.2$\pm$5.4                & 47.1$\pm$2.9                & 87.2$\pm$8.0                \\
        GIN-OPTOP           & 45.3$\pm$11.8                & {\ul \textbf{75.0$\pm$2.7}} & 47.5$\pm$5.0                & {\ul \textbf{88.3$\pm$8.9}} \\
        GIN-MPTOP           & {\ul 46.5$\pm$11.2}          & 71.3$\pm$5.1                & {\ul \textbf{48.5$\pm$4.2}} & 87.2$\pm$6.1                \\ \hline
        \hline
        GraphResNet-NOTOP   & 42.8$\pm$11.0                & \textbf{75.3$\pm$5.3}       & 49.4$\pm$4.3                & 88.8$\pm$5.2                \\
        GraphResNet-OPTOP   & 39.5$\pm$12.2                & 68.1$\pm$8.2                & 40.7$\pm$3.5                & 87.8$\pm$4.3                \\
        GraphResNet-MPTOP   & {\ul \textbf{44.3$\pm$9.8}}  & {\ul 69.4$\pm$5.8}          & {\ul \textbf{50.1$\pm$4.4}} & {\ul \textbf{89.3$\pm$6.1}} \\ \hline
        \hline
        GraphDenseNet-NOTOP & 43.2$\pm$10.4                & 50.3$\pm$5.9                & 33.1$\pm$2.7                & 88.8$\pm$5.2                \\
        GraphDenseNet-OPTOP & 47.3$\pm$12.3                & 50.0$\pm$7.1                & 32.7$\pm$4.2                & 86.2$\pm$8.3                \\
        GraphDenseNet-MPTOP & {\ul \textbf{48.0$\pm$11.4}} & {\ul \textbf{52.2$\pm$7.7}} & {\ul \textbf{34.1$\pm$3.1}} & {\ul \textbf{92.6$\pm$5.1}} \\ \hline
        \end{tabular}
    }
    \caption{Bold denotes best score and underline best between one- and multiparameter topology.
        As one can see, it is usually better to use differentiable multiparameter topological descriptors.}
  \label{table:graph-data-10}
\end{table}

\section{Conclusions}

In order to address the fact that there is no unique homological descriptor of multifiltrations, we introduce a theoretical framework for studying the differentiability and gradient descent convergence of general topology-based losses of multiparameter filtrations.
Our main theoretical results give conditions ensuring differentiability and convergence of the stochastic subgradient method, and show that this applies to topological descriptors of very different flavors,
demonstrating the flexibility of our approach.
Our numerical experiments show positive improvements with respect to one-parameter homological optimization.

\smallskip
\noindent\textbf{Limitations and future work.}
An inherent limitation of multiparameter persistent homology is that it is generally computationally demanding.
This can be dealt with in practice by subsampling or other sparsification techniques \cite{alonso}.
To complement this, one could adapt, to the multiparameter case, the optimization with big steps of \cite{nigmetov}, to speed up optimization.

It remains a possibility that our theory does not apply to topological descriptors to be introduced in the future; we believe that suitable generalizations (e.g., descriptors that are definably determined on grids, as in \cref{remark:definably-determined-on-grids}), will at the very least encompass a large class of descriptors.

On the applications side, future work includes regularizing NNs \cite{chen} with multiparameter persistence.

\newpage
\section*{Acknowledgements}
L.S.~was partially supported by EPSRC grant ``New Approaches to Data Science: Application Driven Topological Data Analysis'', EP/R018472/1.
For the purpose of Open Access, the authors have applied a CC BY public copyright licence to any Author Accepted Manuscript (AAM) version arising from this submission.
D.L.~was supported by ANR grant ``3IA Côte d'Azur'', ANR-19-P3IA-0002. 
M.C.~was partially supported by ANR grant ``TopModel'', ANR-23-CE23-0014.

\section*{Impact statement}
This paper presents work whose goal is to advance the field of Machine Learning. There are many potential societal consequences of our work, none which we feel must be specifically highlighted here.

\bibliography{main}
\bibliographystyle{icml2024}

\newpage
\appendix
\onecolumn

\section{More background}
\label{section:more-background}

\subsection{Partial optimal transport}
\label{section:POT}

\subsubsection{Definition}
\label{section:pot-definition}

We define optimal transport between signed measures on a metric spaces, and partial optimal transport between signed measures on a metric pair.
This allows one to model the scenario in which one considers signed measures on a space $M$ and optimal transport plans that are allowed to ``throw mass away'' in a certain subset $Z \subseteq M$.
Partial optimal transport is relevant in topological data analysis \cite{divol-lacombe} and in optimal transport problems in general \cite{figalli-gigli}.
For more context, see \cref{section:context-POT}.

Let $M$ be an \emph{extended metric space}, that is, a set together with a function $d_M : M \times M \to [0,\infty]$ that is symmetric, which satisfies the triangle inequality, and such that $d_M(x,y) = 0$ if and only if $x=y \in M$.

The set $\dm(M)$ of discrete measures on $M$ admits an extended metric, called the \emph{optimal transport distance} and denoted $\OT$, which we now define.
Given $\mu, \nu \in \dm(M)$, let
\[
    \assignment(\mu,\nu) \coloneqq 
    \left\{
        \left(I \text{ fin.~set}\,,\,\, \substack{\beta\, :\, I \to M\\
        \gamma\, :\, I \to M}\right) : \substack{\mu\; =\; \sum_{i \in I} \delta_{\beta(i)}\\
        \nu\; =\; \sum_{i \in I} \delta_{\gamma(i)}}
    \right\},
\]
and for every $a = (I,\beta,\gamma) \in \assignment(\mu,\nu)$, define $\cost(a) \coloneqq \sum_{i \in I} d_M(\beta(i), \gamma(i))$.
Then, the set $\dm(M)$ together with the function
\[
    \OT(\mu,\nu) \coloneqq \inf \{ \cost(a) : a \in \assignment(\mu,\nu)\},
\]
with the convention that the infimum over an empty set is $+\infty$,
is an extended metric space.

The optimal transport distance extends to $\dsm(M)$, the set of discrete signed measures on $M$, by reduction to the positive case: define $\OT(\mu,\nu) \coloneqq \OT(\mu^+ + \nu^-, \nu^+ + \mu^-)$, where the signs $+$ and $-$ in superscripts indicate the positive and negative parts of the signed measures according to their Jordan decomposition.

Then, discrete signed measures on a metric pair are defined as follows.

\begin{definition}
    \label{definition:discrete-signed-measures-metric-pair}
    If $Z \subseteq M$ is any subset, the set of \emph{discrete signed measures} on the pair $(M,Z)$ is $\dsm(M,Z) \coloneqq \dsm(M \setminus Z)$. 
\end{definition}

It may seem from \cref{definition:discrete-signed-measures-metric-pair} that the subset $Z$ is hardly playing any role; this is not so, since the partial optimal transport distance (\cref{definition:pot}) uses $Z$ in a crucial way.

One can extend the optimal transport distance on discrete signed measures to the partial optimal transport distance on the set $\dsm(M,Z)$ of discrete signed measures on $(M,Z)$ that, informally, allows one to ``throw mass away'' in the subset $Z$.
This is defined formally as follows.

\begin{definition}
    \label{definition:pot}
Let $\mu,\nu \in \dsm(M,Z)$, their \emph{partial optimal transport distance} is
\[
    \OT(\mu,\nu) \coloneqq
    \inf\left\{\,\OT(\overline{\mu},\overline{\nu})\; :\;
            \overline{\mu}\,,\;\overline{\nu} \,\in\, \dsm(M)\,,\;\; \mu\; =\; \overline{\mu}|_{M\setminus Z}, \;\; \nu\; =\; \overline{\nu}|_{M\setminus Z}\,
        \right\}.
\]
\end{definition}

Then, the set $\dsm(M,Z)$ endowed with the partial optimal transport distance is an extended metric space (see \cref{section:context-POT}).

We use the same notation for the optimal transport distance and the partial optimal transport distance, as these are defined over two different spaces.

\subsubsection{Context}
\label{section:context-POT}
Note that, depending on the context, optimal transport distances are named after different people, including Kantorovich, Monge, Rubinstein, and Wasserstein.

Since in this paper we deal with discrete measures, our optimal transport distances take a particularly simple form, which reduces its computation to an assignment problem; we refer the reader to \cite{bubenik-elchesen} for a detailed exposition---from the lens of topological data analysis---of the concepts introduced here; in the cited paper, the measures we consider are called virtual persistence diagrams (Definition~4.10 of \cite{bubenik-elchesen}).
Then, Corollary~4.9 of \cite{bubenik-elchesen} implies that our partial optimal transport distance (\cref{definition:pot}) does in fact endow $\dsm(M,Z)$ with an extended metric.

Partial optimal transport between signed measures is not as develop as its positive (i.e., unsigned) counterpart, but note that Proposition~5.19 of \cite{bubenik-elchesen} gives very general conditions under which their notion of partial optimal transport for positive measures (which, in the case of discrete measures restricts to ours) coincides with that of \cite{figalli-gigli}.

\subsection{The rank decomposition signed measure}
\label{section:rank-dec-signed-measure}

The rank decomposition signed measure is a generalization of the well-known one-parameter barcode (also known as persistence diagram); and instead of being a measure on the space of intervals (or bars) in $\Rbb$, it is a signed measure on the space of higher dimensional bars, which we now define.
If $n \geq 1 \in \Nbb$, let
\[
    \bars_n = \{(r,s) \in \left(\Rbb^n \cup \{\infty\}\right)^2 : r \leq s\} \subseteq \left(\Rbb^n \cup \{\infty\}\right)^2.
\]
As in the one-parameter case, short bars are bars close to the diagonal, defined as $\Delta \coloneqq \{(r,r)\} \subseteq \bars_n$.

\begin{definition}
    \label{definition:rank-signed-measure}
    Let $f \in \Fil_n(K)$ and let $i \in \Nbb$.
    The $i$th \emph{rank decomposition signed measure} is the unique signed measure $\mu_{H_i(f)}^\rk \in \dsm\left(\bars_n, \Delta\right)$ such that, for all $r \leq s \in \Rbb^n \cup \{\infty\}$,
    \[
        \rk\left(
            H_i(f)(r) \xrightarrow{\phi^f_{r,s}} H_i(f)(s)
        \right)
        = \mu^\rk_{H_i(f)}\left( C(r,s) \right),
    \]
    where
    $C(r,s) =
    \left\{ (r',s') \in \bars_n : r' \leq r, s' \nleq s \right\}$.
\end{definition}
Existence and uniqueness follows from Corollary~5.6 of \cite{botnan-2}.

In order to make $\bars_n$ into an (extended) metric space, we endow $\Rbb^n$ and $(\Rbb^n \cup \{\infty\})^2$ with the extended distance induced by $\|-\|_\infty$, with the convention that $|\infty-x| = |x-\infty| = \infty$ if $x\neq \infty$ and $|\infty - \infty| = 0$.

In particular, in the one-parameter case, $\dsm(\bars_1,\Delta)$ is the usual set of one-parameter persistence barcodes, and the descriptor $\mu^\rk$ is the one-parameter barcode.

\subsection{Isomorphism invariance}
\label{section:iso-invariance}

Let $K$ and $K'$ be simplicial complexes with underlying sets $X$ and $X'$, respectively.
An \emph{isomorphism} between $K$ and $K'$ consists of a bijection $\psi : X \to X'$ such that $\sigma = \{x_0, \dots, x_i\} \in K$ if and only if $\psi(\sigma) \coloneqq \{\psi(x_0), \dots, \psi(x_i)\} \in K'$.

\begin{example}
    Two graphs are isomorphic as graphs if and only if they are isomorphic as simplicial complexes.
\end{example}

An \emph{isomorphism} between filtrations $f \in \Fil_n(K)$ and $g \in \Fil_n(K')$ consists of an isomorphism of simplicial complexes $\psi : K \to K'$ such that $g \circ \psi = f$.

\begin{definition}
    \label{definition:iso-invariance}
    A family of descriptors $\{\alpha^K : \Fil_n(K) \to A\}_K$ indexed by all finite simplicial complexes $K$ is \emph{isomorphism invariant} if $\alpha(f) = \alpha(g) \in A$ whenever $f$ and $g$ are isomorphic filtrations.
\end{definition}

Since homology is isomorphism invariant, it follows that all descriptors that only rely on homology are also isomorphism invariants.
In particular, we have the following.

\begin{proposition}
    The Hilbert function, rank invariant, Hilbert decomposition signed measure, and rank decomposition signed measure are isomorphism invariant. \qed
\end{proposition}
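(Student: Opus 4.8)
The plan is to reduce every claim to the functoriality -- hence isomorphism invariance -- of homology, exactly as anticipated in the sentence preceding the statement. Concretely, it suffices to show that an isomorphism of filtrations $\psi : K \to K'$ with $g \circ \psi = f$ (as in \cref{definition:iso-invariance}) induces an isomorphism of the two multiparameter persistent homology modules $H_i(f)$ and $H_i(g)$, viewed as $\Rbb^n$-indexed diagrams of $\Fbb$-vector spaces, and then to observe that each of the four descriptors depends only on the isomorphism type of this module.

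First I would upgrade $\psi$ from an isomorphism of complexes to an isomorphism of the sublevel-set filtrations. For each $r \in \Rbb^n$ and each simplex $\sigma \in K$, the condition $g \circ \psi = f$ gives $f(\sigma) = g(\psi(\sigma))$, so $\sigma \in K^f_r$ if and only if $\psi(\sigma) \in (K')^g_r$; hence $\psi$ restricts to an isomorphism of simplicial complexes $K^f_r \cong (K')^g_r$. Moreover, for $r \leq s$, these restrictions are compatible with the inclusions $K^f_r \subseteq K^f_s$ and $(K')^g_r \subseteq (K')^g_s$, since all of them are restrictions of the single bijection $\psi$. Applying the homology functor then yields isomorphisms $H_i(f)(r) = H_i(K^f_r) \xrightarrow{\cong} H_i((K')^g_r) = H_i(g)(r)$ that commute with the structure maps $\phi^f_{r,s}$ and $\phi^g_{r,s}$; that is, a natural isomorphism $H_i(f) \cong H_i(g)$ of persistence modules.

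With the persistence modules identified, each descriptor follows at once. The Hilbert function is isomorphism invariant because $\dim H_i(f)(r) = \dim H_i(g)(r)$ for every $r$, and the rank invariant because an isomorphism of persistence modules preserves the ranks of the structure maps, so $\rk(\phi^f_{r,s}) = \rk(\phi^g_{r,s})$. For the two signed measures I would invoke their defining characterizations: by \cref{definition:hilbert-sm-rk-sm}, $\mu^\Hil_{H_i(f)}$ is the \emph{unique} element of $\dsm(\Rbb^n)$ satisfying $\dim H_i(f)(r) = \mu^\Hil_{H_i(f)}(\{s \in \Rbb^n : s \leq r\})$ for all $r$; since $f$ and $g$ have the same Hilbert function, uniqueness forces $\mu^\Hil_{H_i(f)} = \mu^\Hil_{H_i(g)}$. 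The identical argument, using \cref{definition:rank-signed-measure} together with the equality of rank invariants, gives $\mu^\rk_{H_i(f)} = \mu^\rk_{H_i(g)}$.

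The argument is essentially bookkeeping, so there is no deep obstacle; the only point requiring genuine care is the compatibility (naturality) in the second step -- one must verify that the level-wise isomorphisms commute with the structure maps, so that homology produces an isomorphism of the whole $\Rbb^n$-indexed diagram rather than merely isomorphisms at each index. This is immediate here because every map in sight is a restriction of the fixed bijection $\psi$, so the relevant squares of inclusions commute on the nose before the homology functor is applied. A secondary subtlety is that, for the two signed-measure descriptors, invariance is deduced indirectly through the uniqueness clauses in their definitions rather than by transporting a measure directly along $\psi$.
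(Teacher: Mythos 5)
Your proposal is correct and follows exactly the route the paper intends: the paper dispatches this proposition with a \qed, relying on the preceding observation that homology is isomorphism invariant, and your write-up simply fills in the details of that reduction (level-wise isomorphisms of sublevel complexes compatible with inclusions, hence an isomorphism of persistence modules, hence equal Hilbert functions and rank invariants, hence equal signed measures by the uniqueness clauses in their definitions). No discrepancy in approach.
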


\section{Convergence of the stochastic subgradient method with topological descriptors}

\subsection{Details about the stochastic subgradient method}

\begin{assumption}
    \label{assumption:davis}
    \begin{enumerate}[(a)]
        \item
        The sequence $\{a_k\}_{k \in \Nbb}$ is such that $a_k \geq 0$ for all $k \in \Nbb$, $\sum_{k} a_k = \infty$, and $\sum_k a_k^2 < \infty$.
        \item Almost surely, $\sup_k \|x_k\| < \infty$.
        \item
        The sequence of random variables $\{\zeta_k\}_{k \in \Nbb}$ has the property that there exists a function $p : \Rbb^d \to [0,\infty)$, bounded on bounded sets, such that
        \[
            \Ebb[\zeta_k | \Fcal_k] = 0 \text{ and } \Ebb[\|\zeta_k\|^2 | \Fcal_k] < p(x_k),
        \]
        almost surely, where $\Fcal_k = \sigma(x_j, y_j, \zeta_j)_{j < k}$ is the increasing sequence of $\sigma$-algebras generated by $x$, $y$, and $\zeta$ up to $k$.
    \end{enumerate}
\end{assumption}

We comment on how to satisfy the \cref{assumption:davis} in \cref{remark:satisfying-davis}

The following is a simple adaptation of a main result from \cite{davis-drusvyatskiy-kakade-lee}.

\begin{proposition}
    \label{theorem:convergence-theorem}
    Assume given a simplicial complex $K$, a number of parameters $n \geq 1 \in \Nbb$, and functions
    \[
        \Rbb^d \xrightarrow{\;\;\;\Phi\;\;\;} \Fil_n(K) \xrightarrow{\;\;\;\alpha\;\;\;} \Rbb^D \xrightarrow{\;\;\;E\;\;\;} \Rbb.
    \]
    Assume the following:
    \begin{enumerate}[leftmargin=*]
        \item The loss function $\Lcal \coloneqq \Phi \circ E \circ \alpha : \Rbb^d \to \Rbb$ is locally Lipschitz.
        \item The function $\alpha$ is semilinear.
        \item The functions $\Phi$ and $E$ are definable over a common o-minimal structure.
    \end{enumerate}
    Under \cref{assumption:davis},
    almost surely, every limit point of the iterates of the stochastic subgradient method is critical for the loss $\Lcal$, and the sequence of loss values converges.
\end{proposition}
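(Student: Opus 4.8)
The plan is to deduce the statement directly from the convergence theorem of \cite{davis-drusvyatskiy-kakade-lee} by verifying its two standing hypotheses on the objective function, namely that $\Lcal$ is locally Lipschitz and that it is definable in some o-minimal structure. Recall that \cite{davis-drusvyatskiy-kakade-lee} prove that, for any such objective, and under exactly the learning-rate, boundedness, and martingale-noise conditions recorded in \cref{assumption:davis} (their Assumption~C), the stochastic subgradient iterates have the property that, almost surely, every limit point is Clarke-critical and the sequence of objective values converges. Thus the entire argument reduces to checking that the composite $\Lcal = E \circ \alpha \circ \Phi$ meets these two hypotheses; the conclusion is then immediate.

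Local Lipschitzness of $\Lcal$ is precisely the first hypothesis of the proposition, so there is nothing to do there, and the substance of the proof is establishing definability of $\Lcal$ in a \emph{single} o-minimal structure. Here I would invoke two standard facts about o-minimal structures (see \cite{vandenDries}): first, every o-minimal structure expanding the real ordered field contains all semialgebraic sets, and in particular all semilinear sets, since semilinear sets are cut out by degree-$\leq 1$ polynomials and are therefore semialgebraic; second, the class of maps definable over a fixed o-minimal structure is closed under composition. Let $\mathcal{S}$ be the common o-minimal structure over which $\Phi$ and $E$ are definable, as furnished by the third hypothesis. By the first fact together with the second hypothesis (semilinearity of $\alpha$), the graph of $\alpha$ is semialgebraic, hence $\alpha$ is in fact definable over $\mathcal{S}$ as well.

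It then follows from closure under composition that $\Lcal = E \circ \alpha \circ \Phi$, viewed as a map $\Rbb^d \to \Rbb$ with $\Fil_n(K) \subseteq (\Rbb^n)^K$ regarded as a definable subset of a Euclidean space, is definable over $\mathcal{S}$. Combined with the assumed local Lipschitzness, $\Lcal$ satisfies both standing hypotheses of \cite{davis-drusvyatskiy-kakade-lee}, and the desired statement follows by direct appeal to their result.

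The proof carries no deep obstacle; it is chiefly a matter of correctly aligning the different regularity classes. The one point that must be handled with care is that $\alpha$ is assumed only semilinear, a class living a priori in the purely linear o-minimal structure, whereas $\Phi$ and $E$ live in a possibly larger structure $\mathcal{S}$; the reconciliation rests entirely on the fact that any o-minimal structure over the real field subsumes the semialgebraic, hence semilinear, sets, so that all three maps become definable over the single structure $\mathcal{S}$. A second point worth emphasizing is that local Lipschitzness genuinely must be imposed as a hypothesis rather than derived: semilinear, and more generally definable, maps need not even be continuous, so definability of $\Lcal$ alone would not suffice to invoke the Clarke-subdifferential machinery underlying \cite{davis-drusvyatskiy-kakade-lee}. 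Finally, one should check that our \cref{assumption:davis} matches their Assumption~C, which is precisely the sense in which this result is a ``simple adaptation'' of theirs.
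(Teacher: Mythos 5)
Your proposal is correct and follows essentially the same route as the paper's proof: both reduce the statement to the stochastic-subgradient convergence result of \cite{davis-drusvyatskiy-kakade-lee} by noting that local Lipschitzness is assumed outright and that definability of $\Lcal$ over the common structure follows because semilinear (hence semialgebraic) maps are definable over any o-minimal structure and definable maps are closed under composition. The only cosmetic difference is that the paper also spells out the intermediate step that definable functions are Whitney $C^p$-stratifiable, which is the form of the hypothesis actually used in Corollary~5.9 of \cite{davis-drusvyatskiy-kakade-lee}.
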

\begin{proof}
    We use Corollary~5.9, part (stochastic subgradient method), of \cite{davis-drusvyatskiy-kakade-lee}.
    In the notation of that result, we are considering $f = \Lcal = E \circ \alpha \circ \Phi$, which is locally Lipschitz by assumption (1).
    By assumption (3), the functions $\Phi$ and $E$ are definable over a common o-minimal structure $\Ocal$, and by assumption (2), it follows that $\Lcal = E \circ \alpha \circ \Phi$ is definable over $\Ocal$ (since semilinear functions are definable over all o-minimal structures \cite{vandenDries}).
    Since functions definable over an o-minimal structure are Whitney $C^p$-stratifiable for any $p \geq 1 \in \Nbb$ \cite{vandenDries-Miller}, it follows that $f = \Lcal$ is Whitney $C^d$-stratifiable.
%
\end{proof}

\subsection{Proof of \cref{theorem:main-results}}

The statements not involving the optimization pipeline $E \circ \alpha \circ \Phi$ are the contents of \cref{theorem:grid-determined-implies-semialgebraic,proposition:gradient}.
For the convergence of the stochastic subgradient method, we use \cref{theorem:convergence-theorem}, and to satisfy condition (2) we use and \cref{theorem:grid-determined-implies-semialgebraic} and the fact that $\alpha$ is semilinearly determined on grids, by assumption.

\subsection{On the assumptions of \cref{theorem:main-results} and possible extensions}

We comment on possible uses and generalizations of \cref{theorem:convergence-theorem} and thus of \cref{theorem:main-results}.

\begin{remark}
    \label{remark:generalizations-of-main-theorem}
    The optimization pipeline considered in \cref{theorem:main-results} typically arises in ML contexts where PH is used as the initial data featurization step. For instance, when the input data are filtered point clouds to be optimized, the map $\Phi$ can build the corresponding function-Rips bifiltration, while the map~$E$ can be the loss associated to some neural network architecture into which the vectorized invariant is plugged.
    
    Beyond that, \cref{theorem:main-results} can easily be adapted to encompass a larger variety of use cases, using stronger results of \cite{davis-drusvyatskiy-kakade-lee}, such as those in Section~6 of \cite{davis-drusvyatskiy-kakade-lee}.
    For instance, one can consider composites $\Lcal \circ F$, where $F\colon \Rbb^p \to \Rbb^d$ is both locally Lipschitz and definable over the same o-minimal structure as~$\Lcal$ (for instance, $F$ could be some neural network), which enables for instance the use of PH as an intermediate step (as opposed to the initial step) in the learning pipeline, with back-propagation of the gradients via the chain rule.
    Alternatively, one can consider sums $\Lcal + \Lcal'$ where $\Lcal'$ is another objective function (possibly non-topological) that is both locally Lipschitz and definable over the same o-minimal structure as~$\Lcal$, which enables for instance the use of PH as a topological regularizer.
    Finally, one can also restrict the domain of $\Phi$, also as regularization.
\end{remark}

We now comment on the different conditions of \cref{theorem:convergence-theorem}.

\begin{remark}
\cref{assumption:davis} contains technical conditions about the choices made for the stochastic subgradient method.
We comment on how these can be met in \cref{remark:satisfying-davis}.

Condition (3) is often easy to meet; for example, by the second main result of \cite{wilkie}, this holds automatically if the functions consist of combinations of algebraic and exponential functions (see Section~3.3 of \cite{Carriere2021a} for examples for $\Phi$ that are particularly relevant in topological optimization).

The theoretical results of this paper are specifically about meeting the conditions involving the topological descriptor $\alpha$: assumptions (1) and (2).
\end{remark}

Finally, we comment on how to satisfy the standard assumptions for the stochastic subgradient method (\cref{assumption:davis}).

\begin{remark}
    \label{remark:satisfying-davis}
    Condition~(a) is immediate to satisfy, since we have control over the choice of learning rate.
    Condition~(c) is satisfied, for example, by any sequence of random variables $\{\zeta_k\}$ with zero mean, bounded variance, independent, and independent of $\{x_k\}$ and $\{y_k\}$. 
    Although condition~(b) seems more difficult to satisfy, as it involves proving something about the behavior of the stochastic subgradient method, it is known that it can be dealt with by adding a suitable regularization (for example, in the form of a non-topological loss $\Lscr'$ as in \cref{remark:generalizations-of-main-theorem}) forcing the iterates of the stochastic subgradient method to remain in a bounded region, or simply by restricting the domain of $\Phi$; we refer the interested reader to Section~6.1 of \cite{davis-drusvyatskiy-kakade-lee}.
\end{remark}

\section{Proofs of \cref{section:theory}}

\subsection{Proofs of \cref{section:stratifying}}

\begin{proof}[Proof of \cref{lemma:cells-one-parameter}]
    The function $(\Rbb^n)^K \to (\Rbb^K)^n$ in the statement is clearly a linear diffeomorphism, since its inverse is given by mapping $(g_1, \dots, g_n)$ to the function $K \to \Rbb^n$ given by sending $\sigma$ to $(g_1(\sigma), \dots, g_n(\sigma))$.
    It follows directly from the definition of cells that this function restricts to a bijection $\cell(f) \cong \cell(f_1) \times \cdots \times \cell(f_n)$, concluding the proof.
\end{proof}

\begin{lemma}
    \label{lemma:finitely-many-cells}
    The set $\{\cell(f) : f \in \Fil_n(K)\}$ is finite.
\end{lemma}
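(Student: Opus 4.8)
The plan is to observe that the assignment $f \mapsto \cell(f)$ factors through the finite combinatorial data $(\grid_f, \ord_f)$, and then to bound the number of such data by a crude counting argument.

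First I would verify that $\cell(f) = \cell(g)$ if and only if $\grid_f = \grid_g$ and $\ord_f = \ord_g$. The reverse implication is immediate, since the conditions defining $\cell(f)$ and $\cell(g)$ in \cref{definition:cell} coincide once $(\grid_f, \ord_f) = (\grid_g, \ord_g)$. For the forward implication, note that $f \in \cell(f)$, so if $\cell(f) = \cell(g)$ then $f \in \cell(g)$, which by \cref{definition:cell} forces $\grid_f = \grid_g$ and $\ord_f = \ord_g$. Consequently the number of distinct cells equals the cardinality of the image of the map $f \mapsto (\grid_f, \ord_f)$, and it suffices to show this image is finite.

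Next I would bound this image directly. For each $1 \leq i \leq n$, by \cref{construction:grid-from-filtration} we have $m_i = |f_i(K)| \leq |K|$, since $f_i$ is a function defined on the finite set $K$. Hence each factor $[m_i]$ of $\grid_f$ has size at most $|K|$, so there are at most $|K|^n$ possible values of $\grid_f$. For a fixed grid $\Gcal$, which has size $|\Gcal| = \prod_{i} m_i \leq |K|^n$, the map $\ord_f : K \to \Gcal$ is one of the at most $|\Gcal|^{|K|} \leq |K|^{n|K|}$ functions from the finite set $K$ to the finite set $\Gcal$. Therefore the image of $f \mapsto (\grid_f, \ord_f)$ is contained in a finite set, and the collection of cells is finite.

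The hard part will be essentially nonexistent: the entire content is that a cell is pinned down by the finite datum of the componentwise preorder induced on the simplices of $K$, after which the counting bound is routine and no sharp constant is required. The only step deserving a moment's care is the equivalence $\cell(f) = \cell(g) \iff (\grid_f,\ord_f) = (\grid_g,\ord_g)$, which guarantees that distinct cells correspond to distinct combinatorial data and hence that the count of cells is bounded by the count of data.
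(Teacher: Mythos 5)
Your proposal is correct and follows essentially the same route as the paper's proof: bound the number of possible grids by $m_i = |f_i(K)| \le |K|$ and then count the finitely many maps $K \to \Gcal$. The only addition is your explicit verification that $\cell(f)$ is determined by the pair $(\grid_f, \ord_f)$, which the paper leaves implicit but which follows immediately from \cref{definition:cell} exactly as you argue.
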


\begin{proof}
    Given $f \in \Fil_n(K)$, we have $\grid_f = [m_1] \times \cdots \times [m_n]$ with $m_i = |f_i(K)|$, so $m_i \leq |K|$.
    Thus, there are finitely many product posets $\grid_f$ that can be obtained.
    Moreover, there are finitely many possible maps $K \to \grid_f$ into each of these finitely many posets.
\end{proof}


\begin{proof}[Proof of \cref{lemma:cells-semialgebraic}]
    The first statement follows from the second and \cref{lemma:finitely-many-cells}, since distinct cells are disjoint, the union of all cells is exactly $\Fil_n(K)$, and finite unions
    of semilinear sets are semilinear.
    So it remains to show that all cells are semilinear.
    By \cref{lemma:cells-one-parameter}, it is sufficient to prove it in the one-parameter case, that is for $f \in \Fil_1(K)$.

    Given $\sigma,\tau \in K$, define a polynomial $P^{\sigma,\tau} : \Rbb^K \to \Rbb$ as $P^{\sigma,\tau}(h) = h_{\sigma}-h_{\tau}$, and a set
    $S^{\sigma,\tau} \subseteq \Rbb^K$ as follows:
    \[
     S^{\sigma,\tau} = 
        \begin{cases}
        \{h \in (\Rbb^n)^K : P^{\sigma,\tau}(h) = 0\}  & \text{ if $f(\sigma) = f(\tau)$}\\
        \{h \in (\Rbb^n)^K : P^{\sigma,\tau}(h) < 0\} & \text{ if $f(\sigma) < f(\tau)$}\\
        \{h \in (\Rbb^n)^K : P^{\sigma,\tau}(h) > 0\} & \text{ if $f(\sigma) > f(\tau)$.}\\
        \end{cases}
    \]
    Note that $\ord_{h} = \ord_{f}$ if and only if the preorder on the simplices of $K$ induced by $h$ is equal to that induced by $f$, and, in turn, this is true if and only if $h \in \bigcap_{\sigma,\tau \in K} S^{\sigma,\tau}$.
    Thus, $\cell(f) = \bigcap_{\sigma,\tau \in K} S^{\sigma,\tau}$, which is a finite intersection of semilinear sets, and thus semilinear.
\end{proof}

\begin{proof}[Proof of \cref{lemma:incl-open-semilinear}]
    We prove that $\incl([m]) \subseteq \Rbb^m$ is open, convex, and semilinear for every $m \geq 1 \in \Nbb$; this is sufficient, since $\incl(\Gcal) = \incl(\Gcal_1) \times \cdots \times \incl(\Gcal_n)$ can then be written as a finite intersection of open, convex, and semilinear sets.

    The fact that $\incl([m]) \subseteq \Rbb^m$ is open follows directly from its definition, which uses strict inequalities.
    Convexity follows from the fact that $r_1 < \cdots < r_n$ and $s_1 < \cdots < s_n$ implies $a r_1 + b s_1 < \cdots < a r_n + b s_n$ for any $a,b\geq 0 \in \Rbb$.
    We reduce the semilinearity statement to \cref{lemma:cells-semialgebraic}, as follows.
    Let $K$ be the simplicial complex on the set $\{0, \dots, m-1\}$ that has one $0$-simplex for each element and no higher-dimensional simplices.
    Consider the filtering function $f \in \Fil_1(K)$ with $f(i) = i$, for $i \in \{0, \dots, m-1\}$.
    Then, a straightforward check shows that $\incl([m]) = \cell(f)$, so the result follows from \cref{lemma:cells-semialgebraic}.
\end{proof}

\begin{proof}[Proof of \cref{lemma:carrier-iso}]
    Since everything is defined componentwise, it is sufficient to prove this for $f \in \Fil_1(K)$.

    The fact that the function is linear is evident from its definition; it thus suffices to prove that it is a bijection.
    Consider the function $F : \incl(\grid_f) \to \cell(f)$ mapping any injective and monotonic function  $\kappa : \grid_f \to \Rbb$ to $\kappa \circ \ord_f$.
    We prove that $F$ is the inverse of $C^*$.

    \emph{Right inverse.} If $\kappa \in \incl(\grid_f)$, then $C^*(F(\kappa))= \kappa \circ \ord_f \circ C = \kappa$, by the fact that $C$ is a carrier for $f$.

    \emph{Left inverse.} If $g \in \cell(f)$, then $F(C^*(g)) = g \circ C \circ \ord_f = \iota_g \circ \ord_g \circ C \circ \ord_f = \iota_g \circ \ord_f \circ C \circ \ord_f = \iota_g \circ \ord_f = \iota_g \circ \ord_g = g$, using the fact that $g = \iota_g \circ \ord_g$, the fact that $\ord_g = \ord_f$ twice, and the fact that $C$ is a carrier for~$f$.
\end{proof}

\subsection{Proofs of \cref{section:grid-determined}}
\begin{proof}[Proof of \cref{propositon:push-existence}]
    We start by proving existence.
    Consider the function $\push^\alpha_h : \incl(\Gcal) \to A$ given by mapping $\kappa \in \incl(\Gcal)$ to $\alpha(\kappa \circ h) \in A$.
    If $f \in \Fil_n(K)$ with $\grid_f = \Gcal$ and $\ord_h = h$ and $g \in \cell(f)$, we have $\push^\alpha_h(\iota_g) = \alpha(\iota_g \circ h) = \alpha(\iota_g \circ \ord_f) = \alpha(\iota_g \circ \ord_g) = \alpha(g)$.
    This shows existence.

    We now show uniqueness.
    Suppose $p : \incl(\Gcal) \to A$ also satisfies that, if $f \in \Fil_n(K)$ with $\grid_f = \Gcal$ and $\ord_f = h$, then $\alpha = p \circ \itsgridincl$.
    Note that, if $\kappa \in \incl(\Gcal)$ and $d = \kappa \circ h$, then since $h_i : K \to \Gcal_i$ is surjective for all $1 \leq i \leq n$, it follows that $\grid_d = \Gcal$ and $\ord_d = h$.
    Let $f \in \Fil_n(K)$ such that $\grid_f = \Gcal$ and $\ord_f = h$ (note that such a function always exists by taking, e.g., $f = \kappa \circ h$ for any $\kappa \in \incl(\Gcal)$).
    Now, if $\kappa \in \incl(\Gcal)$, then $p(\kappa) = p(\iota_{\kappa \circ h}) = \alpha(\kappa \circ h) = \push^\alpha_h(\kappa)$, as required.
\end{proof}

\begin{proof}[Proof of \cref{corollary:cells-nice}]
    It's enough to prove that $\itsgridincl = C^* \circ I_f : \cell(f) \to \incl(\grid_f)$; the second claim follows then follows directly from \cref{lemma:cells-one-parameter,lemma:incl-open-semilinear,lemma:carrier-iso}.

    Given $g \in \cell(f)$ and $x \in \grid_f$, we compute as follows
    \begin{align*}
        \iota_g(x) &= \iota_g(x_1, \dots, x_n) \\
            &= \iota_g\big(\ord_{f_1}(C_1(x_1)), \dots, \ord_{f_n}(C_n(x_n))\big)\\
            &= \iota_g\big(\ord_{g_1}(C_1(x_1)), \dots, \ord_{g_n}(C_n(x_n))\big)\\
            &= \big(\iota_{g_1}(\ord_{g_1}(C_1(x_1))), \dots, \iota_{g_n}(\ord_{g_n}(C_n(x_n)))\big)\\
            &= \big(g_1(C_1(x_1)) \dots, g_n(C_n(x_n))\big)\\
            &= I_f(g)(C(x))\\
            &= C^*(I_f(g))(x).
    \end{align*}
    Thus, $\iota_g = C^*(I_f(g)) : \grid_f \to \Rbb^n$, as required.
\end{proof}

\section{Semilinearity of known invariants}
\label{section:semilinearity-of-known-invariants}

We start by providing a proof of \cref{theorem:semilinearity-main-theorem}, which is a direct consequence of the results in this section.

The sorted Hilbert decomposition is defined in \cref{definition:sorted-hilbert}, and the evaluated multiparameter persistence landscape is defined in \cref{definition:evaluated-landscape}.

\begin{proof}[Proof of \cref{theorem:semilinearity-main-theorem}]
    This is a direct consequence of \cref{corollary:Hilbert-and-rank-semilinear} and \cref{proposition:landscape-semilinear}.
\end{proof}

\subsection{Semilinearity of Hilbert decomposition signed measure}
\label{section:hilbert-rank-semilinear}


We do the case of the Hilbert decomposition signed measure, and thus that of discrete signed measures on $\Rbb^n$; the case of the rank decomposition signed measure is only slightly more verbose, but not conceptually harder.

In order to prove semilinearity, we must represent the Hilbert decomposition signed measures as elements of a finite dimensional vector space.

For notational clarity, fix $n \geq 1 \in \Nbb$, $i \in \Nbb$, and a simplicial complex $K$.

In \cref{section:constant-number-of-masses}, we prove a technical result stating that the Hilbert decomposition signed measure has a constant number of positive and negative masses on each cell.
In \cref{section:hilbert-semilinearity}, we prove the semilinearity of the sorted Hilbert decomposition.

\subsubsection{Constant number of point masses}
\label{section:constant-number-of-masses}


It is convenient to also consider discrete signed measures on discrete sets, such as grids.
In this case, we do not define optimal transport distances, since we do not make use of these.

\begin{lemma}
    \label{lemma:push-forward-on-cell}
    Let $i \in \Nbb$.
    Let $\Gcal = \Gcal_1 \times \cdots \times \Gcal_n$ be a finite grid and let $h \in \Fil_\Gcal(K)$.
    There exists a signed measure $\mu^\Hil_{H_i(h)} \in \dsm(\Gcal)$ such that, for every grid inclusion $\kappa \in \incl(\Gcal)$ we have
    \[
        \mu^\Hil_{H_i(\kappa \circ h)} = 
        \kappa_\#\, \mu^\Hil_{H_i(h)} \; \in \; \dsm(\Rbb^n),
    \]
    where $\kappa_\# : \dsm(\Gcal) \to \dsm(\Rbb^n)$ denotes the push forward of measures induced by the inclusion $\kappa : \Gcal \to \Rbb^n$.
\end{lemma}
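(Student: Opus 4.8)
The plan is to construct $\mu^\Hil_{H_i(h)}$ by Möbius inversion on the finite poset $\Gcal$, and then to verify the push-forward identity by checking that $\kappa_\#\,\mu^\Hil_{H_i(h)}$ satisfies the defining property of the Hilbert decomposition signed measure on $\Rbb^n$ from \cref{definition:hilbert-sm-rk-sm}, invoking uniqueness to conclude.

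First I would set up the discrete analogue of the Hilbert function: for $g \in \Gcal$ write $K^h_g \coloneqq \{\sigma \in K : h(\sigma) \leq g\}$ and $\Hil(H_i(h))(g) \coloneqq \dim H_i(K^h_g)$. Since $\Gcal$ is a product of chains, its Möbius function is supported on pairs $g' \leq g$ with $g - g' \in \{0,1\}^n$, where it equals $(-1)^{|g-g'|}$. Defining
\[
    \mu^\Hil_{H_i(h)}(\{g\}) \coloneqq \sum_{\epsilon \in \{0,1\}^n} (-1)^{|\epsilon|}\, \Hil(H_i(h))(g - \epsilon),
\]
with the convention that a summand vanishes whenever $g - \epsilon$ leaves $\Gcal$, produces an integer-valued, finitely supported signed measure, so $\mu^\Hil_{H_i(h)} \in \dsm(\Gcal)$. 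Standard Möbius inversion on a finite poset then gives the defining relation $\Hil(H_i(h))(g) = \mu^\Hil_{H_i(h)}(\Gcal_{\leq g})$ for all $g \in \Gcal$, where $\Gcal_{\leq g} \coloneqq \{g' \in \Gcal : g' \leq g\}$; this is the finite-grid counterpart of \cref{definition:hilbert-sm-rk-sm}, and the zeta function being unitriangular also yields uniqueness of such a measure.

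Next I would prove the push-forward identity. By uniqueness of the Hilbert decomposition signed measure on $\Rbb^n$ (\cref{definition:hilbert-sm-rk-sm}), it suffices to show that $\kappa_\#\,\mu^\Hil_{H_i(h)}$ satisfies the same defining relation as $\mu^\Hil_{H_i(\kappa \circ h)}$, i.e. that for every $r \in \Rbb^n$,
\[
    \dim H_i(\kappa \circ h)(r) = \big(\kappa_\#\,\mu^\Hil_{H_i(h)}\big)\big(\{s \in \Rbb^n : s \leq r\}\big).
\]
The key combinatorial observation is that, because $\kappa = \kappa_1 \times \cdots \times \kappa_n$ is an aligned grid inclusion with each $\kappa_k \colon \Gcal_k \to \Rbb$ strictly monotonic, the preimage $\kappa^{-1}(\{s \leq r\})$ is a principal down-set of $\Gcal$: in each coordinate $\{j \in \Gcal_k : \kappa_k(j) \leq r_k\}$ is either empty or of the form $\{0, \dots, j_k\}$, so $\kappa^{-1}(\{s \leq r\})$ equals either $\emptyset$ or $\Gcal_{\leq g_r}$ for $g_r = (j_1, \dots, j_n) \in \Gcal$. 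Consequently, using the definition of push-forward and then the defining relation on the grid,
\[
    \big(\kappa_\#\,\mu^\Hil_{H_i(h)}\big)(\{s \leq r\}) = \mu^\Hil_{H_i(h)}\big(\Gcal_{\leq g_r}\big) = \dim H_i(K^h_{g_r}).
\]
On the other hand, $\kappa(h(\sigma)) \leq r$ holds exactly when $h(\sigma) \leq g_r$, so the sublevel sets agree, $K^{\kappa \circ h}_r = K^h_{g_r}$, whence $\dim H_i(\kappa \circ h)(r) = \dim H_i(K^h_{g_r})$. The empty case is handled identically, both sides being $0$. Comparing the two displays gives the desired equality and, by uniqueness, $\mu^\Hil_{H_i(\kappa \circ h)} = \kappa_\#\,\mu^\Hil_{H_i(h)}$.

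The main obstacle is the combinatorial identification of $\kappa^{-1}(\{s \leq r\})$ with a principal down-set of $\Gcal$; this is exactly where the \emph{aligned} (product) structure of the grid inclusion is essential, since for a general monotone injection the preimage of an orthant need not be a principal down-set, and the clean identity $K^{\kappa \circ h}_r = K^h_{g_r}$, on which the entire argument rests, would fail. The remaining care is bookkeeping at the boundary, namely when some coordinate threshold is empty so that the preimage is $\emptyset$ and both sides vanish.
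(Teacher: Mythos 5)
Your proof is correct and follows the same logical skeleton as the paper's: establish the existence of a Hilbert decomposition signed measure at the level of the finite grid, then deduce the push-forward identity from the uniqueness clause of \cref{definition:hilbert-sm-rk-sm}. The difference is one of self-containedness rather than strategy. Where the paper simply cites the existence of Hilbert decompositions from \cite{oudot-scoccola} and asserts that the equality ``follows from uniqueness,'' you construct $\mu^\Hil_{H_i(h)}$ explicitly by M\"obius inversion on the product of chains and carry out the verification that $\kappa_\#\,\mu^\Hil_{H_i(h)}$ satisfies the defining relation --- in particular the observation that an aligned grid inclusion pulls a principal down-set of $\Rbb^n$ back to a principal down-set of $\Gcal$ (or the empty set), so that $K^{\kappa\circ h}_r = K^h_{g_r}$. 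This buys a fully elementary, citation-free argument at the cost of some bookkeeping; the paper's version is two sentences but leaves both the grid-level construction and the uniqueness verification entirely implicit. Your identification of where the aligned (product) structure of $\kappa$ is actually needed is a worthwhile point that the paper does not surface.
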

\begin{proof}
    The existence of $\mu^\Hil_{H_i(h)} \in \dsm(\Gcal)$ follows from the existence of Hilbert decompositions \cite{oudot-scoccola}.
    The equality between measures follows from the uniqueness of $\mu^\Hil_{H_i(\kappa \circ h)}$.
\end{proof}

Note that push forwards of discrete measures by any function (such as $\kappa$ in \cref{lemma:push-forward-on-cell}) have a very simple expression: all one needs to do is to apply the function to the coordinates of each of the point masses.


\begin{lemma}
    \label{lemma:constant-number-of-masses}
    Let $f \in \Fil_n(K)$.
    There exists $p_f, q_f \in \Nbb$ such that, for every $g \in \cell(f)$, the number of positive (resp.~negative) point masses of $\mu^\Hil_{H_i(g)}$ is $p_f$ (resp.~$q_f$).
\end{lemma}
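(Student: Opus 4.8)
The plan is to reduce the statement to the single grid‑level measure provided by \cref{lemma:push-forward-on-cell} and to exploit the injectivity of aligned grid inclusions. First I would set $\Gcal \coloneqq \grid_f$ and $h \coloneqq \ord_f$, and invoke \cref{lemma:push-forward-on-cell} to obtain the signed measure $\mu^\Hil_{H_i(h)} \in \dsm(\Gcal)$. Writing $\mu^\Hil_{H_i(h)} = \sum_{x \in \Gcal} c_x\, \delta_x$ with $c_x \in \Zbb$, I would define $p_f$ and $q_f$ as the number of positive and negative point masses of this grid‑level measure. The crucial observation is that $\mu^\Hil_{H_i(h)}$, and hence $p_f$ and $q_f$, depend only on $\Gcal$ and $h$, that is, only on $\cell(f)$, and not on any individual $g \in \cell(f)$.

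Next, for an arbitrary $g \in \cell(f)$, \cref{definition:cell} gives $\grid_g = \Gcal$ and $\ord_g = h$, so that $g = \iota_g \circ \ord_g = \iota_g \circ h$ with $\iota_g \in \incl(\Gcal)$ an aligned grid inclusion. \cref{lemma:push-forward-on-cell} then yields $\mu^\Hil_{H_i(g)} = (\iota_g)_\#\, \mu^\Hil_{H_i(h)} = \sum_{x \in \Gcal} c_x\, \delta_{\iota_g(x)}$. The key step, which I expect to be the only real subtlety, is a no‑cancellation argument: an aligned grid inclusion $\iota_g \colon \Gcal \to \Rbb^n$ is injective, being a product of (strictly) monotonic and injective maps $\Gcal_k \to \Rbb$, so the point masses $\{\delta_{\iota_g(x)}\}_{x \in \Gcal}$ sit at pairwise distinct locations. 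Hence no merging of coefficients occurs under the push forward, and the Jordan decomposition is transported cleanly, i.e. $\bigl(\mu^\Hil_{H_i(g)}\bigr)^{\pm} = (\iota_g)_\#\,\bigl(\mu^\Hil_{H_i(h)}\bigr)^{\pm}$. In particular $\mu^\Hil_{H_i(g)}$ has exactly $p_f$ positive and $q_f$ negative point masses, as required.

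The main obstacle is precisely this no‑cancellation argument: one must guarantee that pushing mass forward along $\iota_g$ neither identifies two distinct grid points nor stacks mass onto an already occupied location, which is exactly what injectivity of $\iota_g$ rules out. A secondary bookkeeping point is to fix whether ``number of point masses'' is counted with multiplicity; since injectivity preserves both the multiset of coefficients $\{c_x\}$ and the distinctness of their supports, either convention yields the same cell‑constant values $p_f$ and $q_f$, and the proof is unaffected.
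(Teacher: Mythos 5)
Your proposal is correct and follows exactly the paper's argument: the paper likewise deduces the result from \cref{lemma:push-forward-on-cell} together with the injectivity of the aligned grid inclusion, which rules out cancellation of positive and negative masses. Your write-up simply makes explicit the bookkeeping that the paper leaves implicit.
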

\begin{proof}
    This follows immediately from 
    \cref{lemma:push-forward-on-cell} and the fact that $\kappa$ is injective, so that no positive-negative pair of masses cancels.
\end{proof}

\subsubsection{Proof of semilinearity}
\label{section:hilbert-semilinearity}

\cref{lemma:constant-number-of-masses} allows us to index the coordinates of the point masses of $\mu^\Hil_{H_i(g)}$ in a convenient way, as follows.

\begin{construction}
    \label{construction:ordered-hilbert}
Let $C \in \Nbb$ be the number of cells of $\Fil_n(K)$, which is finite by \cref{lemma:finitely-many-cells}.
Choose representatives $f^1, \dots, f^C \in \Fil_n(K)$ of the cells of $\Fil_n(K)$.
Consider the finite set
\[
    \indexingset' \coloneqq
    \left\{
        \begin{array}{c}
        (c, s, j) \in \Nbb \times \{+,-\} \times \Nbb, \text{ such that}\\
        1 \leq c \leq C,\\
        1 \leq j \leq p_{f^c},\; \text{ if $s = +$}\\
        1 \leq j \leq q_{f^c},\; \text{ if $s = -$}\\
        \end{array}
    \right\},
\]
and define the disjoint union $\indexingset \coloneqq \{1, \dots, C\} \amalg \indexingset'$.

Given $g \in \Fil_n(K)$, let $1 \leq e \leq C$ such that $g \in \cell(f^{e})$.
Using \cref{lemma:constant-number-of-masses}, consider the unique ordering $x_1, \dots, x_{p_{f^e}} \in \Rbb^n$ of the positive point masses of $\mu^{\Hil}_{H_i(g)}$ that is compatible with the lexicographic order of $\Rbb^n$.
Analogously, let $y_1, \dots, y_{q_{f^e}} \in \Rbb^n$ be the unique order of the negative point masses of $\mu^{\Hil}_{H_i(g)}$ that is compatible with the lexicographic order of $\Rbb^n$.

Consider the element $\ordHil(g) \in \Rbb^\indexingset$ defined by 
\[
    \ordHil(c) = 
        \begin{cases*}
            0 & \text{ if $c \neq e$}\\
            1 & \text{ if $c = e$},
        \end{cases*}
\]
when $c \in \{1, \dots, C\}$, and by 
\[
\ordHil(c,s,j) = 
    \begin{cases*}
    0 & \text{ if $c \neq e$}  \\
    x_j & \text{ if $c = e$ and $s = +$}\\
    y_j & \text{ if $c = e$ and $s = -$},
    \end{cases*}
\]
when $(c,s,j) \in \indexingset'$.
\end{construction}

In words, $\ordHil(g)$ contains the information of which cell $g$ belongs to, as well as the coordinates of the point masses of the Hilbert decomposition signed measure of $g$.

\begin{definition}
    \label{definition:sorted-hilbert}
    Let $i \in \Nbb$.
    Define the $i$th \emph{sorted Hilbert decomposition} as $\ordHil : \Fil_n(K) \to \Rbb^\indexingset$, as in \cref{construction:ordered-hilbert}.
\end{definition}

\begin{proposition}
    \label{corollary:Hilbert-and-rank-semilinear}
    The sorted Hilbert decomposition is semilinearly determined on grids, and thus semilinear.
\end{proposition}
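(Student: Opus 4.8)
The plan is to invoke \cref{theorem:grid-determined-implies-semialgebraic}, which reduces the corollary to showing that $\ordHil$ is \emph{semilinearly determined on grids} in the sense of \cref{def:semilinearly-determined}; the asserted semilinearity of $\ordHil$ is then automatic. So I would fix a grid $\Gcal = \Gcal_1 \times \cdots \times \Gcal_n$ and a componentwise surjective $h \in \Fil_\Gcal(K)$, and prove that $\push^{\ordHil}_h \colon \incl(\Gcal) \to \Rbb^\indexingset$ is semilinear. By the formula for $\push$ established in the proof of \cref{propositon:push-existence}, this map sends $\kappa \in \incl(\Gcal)$ to $\ordHil(\kappa \circ h)$, and componentwise surjectivity guarantees $\grid_{\kappa \circ h} = \Gcal$ and $\ord_{\kappa \circ h} = h$; hence all filtrations $\kappa \circ h$ lie in a single cell, say that of $f^e$, uniformly in $\kappa \in \incl(\Gcal)$.

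First I would record, via \cref{lemma:push-forward-on-cell}, that $\mu^\Hil_{H_i(\kappa \circ h)} = \kappa_\#\, \mu^\Hil_{H_i(h)}$ for one fixed measure $\mu^\Hil_{H_i(h)} \in \dsm(\Gcal)$ independent of $\kappa$. Writing its positive point masses (with multiplicity) as grid points $a_1, \dots, a_p \in \Gcal$ and its negative ones as $b_1, \dots, b_q \in \Gcal$ (so $p = p_{f^e}$, $q = q_{f^e}$, consistent with \cref{lemma:constant-number-of-masses}), the positive point masses of $\mu^\Hil_{H_i(\kappa \circ h)}$ are exactly $\kappa(a_1), \dots, \kappa(a_p) \in \Rbb^n$, and similarly for the negative ones. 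Since $\kappa = \kappa_1 \times \cdots \times \kappa_n$ with each $\kappa_k$ strictly monotonic and injective, push-forward along $\kappa$ preserves the lexicographic order: for grid points $a, a' \in \Gcal$ one has $\kappa(a) < \kappa(a')$ lexicographically if and only if $a < a'$ lexicographically, and this equivalence holds uniformly in $\kappa$.

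The heart of the argument is the consequence that the permutation used in \cref{construction:ordered-hilbert} to lexicographically sort the point masses is \emph{the same for every} $\kappa \in \incl(\Gcal)$: it is the fixed permutation $\sigma$ that lex-sorts the grid points $a_1, \dots, a_p$ (resp.~$b_1, \dots, b_q$) inside $\Gcal$. Therefore, as $\kappa$ ranges over $\incl(\Gcal)$, the $j$-th sorted positive point mass is $\kappa(a_{\sigma(j)})$, whose $k$-th coordinate is $\kappa_k\big((a_{\sigma(j)})_k\big)$ --- precisely the coordinate of the input vector $\kappa \in \incl(\Gcal) \subseteq \Rbb^{[m_1]} \times \cdots \times \Rbb^{[m_n]}$ indexed by $\big(k, (a_{\sigma(j)})_k\big)$. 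Hence every output coordinate of $\push^{\ordHil}_h$ is either a constant (the cell-indicator coordinates and all coordinates attached to cells $c \neq e$) or a single coordinate projection of the input. A tuple of constants and coordinate projections is the restriction to $\incl(\Gcal)$ of a linear map, hence semilinear on the semilinear domain $\incl(\Gcal)$ (\cref{lemma:incl-open-semilinear}). This shows $\push^{\ordHil}_h$ is semilinear, and \cref{theorem:grid-determined-implies-semialgebraic} then completes the proof.

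The main obstacle is the order-preservation observation: one must verify that lexicographic sorting of the point masses is combinatorially frozen across the whole space of grid inclusions, so that $\push^{\ordHil}_h$ is genuinely a single linear map (up to the fixed permutation $\sigma$) rather than a piecewise-linear map with $\kappa$-dependent pieces. Everything else is bookkeeping already prepared by \cref{lemma:push-forward-on-cell,lemma:constant-number-of-masses,construction:ordered-hilbert}; the only minor care needed concerns repeated point masses, where the lexicographic order is unique only up to transposing identical vectors, a choice that does not affect the resulting element of $\Rbb^\indexingset$.
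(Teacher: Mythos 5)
Your proposal is correct and follows essentially the same route as the paper's proof: reduce via \cref{theorem:grid-determined-implies-semialgebraic} to the semilinearity of $\push^{\ordHil}_h$, use \cref{lemma:push-forward-on-cell} to place each point mass at $\kappa(a)$ for a fixed grid point $a$, and observe that aligned grid inclusions preserve the lexicographic order so the sorting permutation is frozen across $\incl(\Gcal)$, making the map affine (indeed a tuple of constants and coordinate projections). Your extra care about repeated point masses is a reasonable refinement of the same argument rather than a different one.
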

\begin{proof}
    By \cref{theorem:grid-determined-implies-semialgebraic}, it is sufficient to let $f \in \Fil_n(K)$ be arbitrary, and show that the function $\push^\ordHil_{\ord_f} : \incl(\grid_f) \to \Rbb^\indexingset$ is semilinear.
    This function is in fact affine.

    To see this, we let $h = \ord_f$ and use \cref{lemma:push-forward-on-cell}.
    This guarantees that each point mass of $\mu^\Hil_{H_i(g)}$ lies on top of $\iota_g(x)$ for some element $x \in \grid_f$, and thus varies affinely with respect to $\iota_g \in \incl(\grid_f)$.
    Moreover, the lexicographic order of the point masses of $\mu^\Hil_{H_i(g)}$ does not change as $g$ varies: the order on $\grid_f$ induced by pulling back the lexicographic order of $\Rbb^n$ along any inclusion $\iota_g$ is the same for all $g \in \cell(f)$.
    This implies that $\ordHil(g)$ varies affinely with respect to $\incl_g$ as long as $g \in \cell(f)$, as required.
\end{proof}

\subsection{Semilinearity of the multiparameter persistence landscape}

For ease of notation, let us fix a simplicial complex $K$, a number of parameter $n \geq 1 \in \Nbb$, and a homological dimension $i \in \Nbb$.

We now recall the definition of multiparameter persistence landscape from \cite{Vipond2020}.
Note that \cite{Vipond2020} works with general multiparameter persistence modules; we specialize the definition to the case of a homology multiparameter persistence module.

\begin{definition}
Let $f \in \Fil_n(K)$.
Let $k \geq 1 \in \Nbb$.
The $k$th \emph{multiparameter persistence landscape} of $H_i(f)$ is the function
$\lambda^k_f : \Rbb^n \to \Rbb$ defined by
\[
    \lambda^k_f(z) \coloneqq \sup\left\{ \epsilon \geq 0 \in \Rbb : \rk_f(z-h,z+h) \geq k \text{ for all $h \in \Rbb^n_{\geq 0}$ with $\|h\|_\infty \leq \epsilon$ }\right\} \in \Rbb,
\]
where we write $\rk_f = \rk(H_i(f))$, to simplify notation.
\end{definition}

By convention, the supremum of an empty set is taken to be zero.
Note that this supremum is always finite since the support of $H_i(f) : \Rbb^n \to \vect$ is bounded from below.

As a function of $f \in \Fil_n(K)$, the multiparameter landscape takes values in a space of functions $\Rbb^n \to \Rbb$, which is not a finite dimensional vector space.
In order to prove a semilinearity result, we consider the evaluation of the landscape at points of $\Rbb^n$, and show that this is semilinear.

\begin{definition}
    \label{definition:evaluated-landscape}
    Let $z_0 \in \Rbb^n$.
    The \emph{evaluated multiparameter persistence landscape} is the descriptor $\Fil_n(K) \to \Rbb$ mapping $f$ to $\lambda^k_f(z_0)$.
\end{definition}

In fact, we prove the following stronger result, which, down the line, allows for the optimization of the points over which the landscape is evaluated.

\begin{proposition}
    \label{proposition:landscape-semilinear}
    The evaluated multiparameter persistence landscape is linearly determined on grids, and thus semilinear.
    Moreover, the function $\Fil_n(K) \times \Rbb^n \to \Rbb$ mapping $(f,z)$ to $\lambda^k_f(z)$ is semilinear.
\end{proposition}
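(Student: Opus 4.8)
The plan is to reduce the landscape to a one-dimensional threshold problem and then read off its semilinearity from the combinatorics of grid inclusions. The first step is a crucial simplification of the defining formula. The rank invariant is \emph{monotone under box enlargement}: if $r' \le r \le s \le s'$ then $\rk_f(r',s') \le \rk_f(r,s)$, since $\phi^f_{r',s'} = \phi^f_{s,s'}\circ\phi^f_{r,s}\circ\phi^f_{r',r}$ factors through $\mathsf{Im}\,\phi^f_{r,s}$. Hence, for $h \in \Rbb^n_{\ge 0}$ with $\|h\|_\infty \le \epsilon$ we have $[z-h,z+h] \subseteq [z-\epsilon\mathbf{1}, z+\epsilon\mathbf{1}]$ (where $\mathbf{1}=(1,\dots,1)$), so $\rk_f(z-h,z+h) \ge \rk_f(z-\epsilon\mathbf{1}, z+\epsilon\mathbf{1})$, and the binding constraint in the ``for all $h$'' is attained at $h = \epsilon\mathbf{1}$. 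Therefore
\[
    \lambda^k_f(z) \;=\; \sup\{\epsilon \ge 0 : \rk_f(z-\epsilon\mathbf{1},\, z+\epsilon\mathbf{1}) \ge k\},
\]
which replaces a continuum of constraints by a single one-parameter threshold.

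Next I would analyze the push map. By \cref{propositon:push-existence}, proving that the evaluated landscape is (semi)linearly determined on grids amounts to showing, for each grid $\Gcal$ and componentwise surjective $h \in \Fil_\Gcal(K)$, that $\kappa \mapsto \lambda^k_{\kappa \circ h}(z_0)$ is semilinear on $\incl(\Gcal)$. The key combinatorial fact is that $H_i(\kappa\circ h)(r)$, and hence $\rk_{\kappa\circ h}(r,s)$, depends on $r,s$ only through their coordinatewise grid floors $\lfloor r_i\rfloor_{\kappa_i} = \max\{j \in \Gcal_i : \kappa_i(j) \le r_i\}$; that is, there is a fixed combinatorial rank function $R$ on $\Gcal$, computed purely from $h$, with $\rk_{\kappa\circ h}(r,s) = R(\lfloor r\rfloor_\kappa, \lfloor s\rfloor_\kappa)$. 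As $\epsilon$ increases, the step function $\epsilon \mapsto \rk_{\kappa\circ h}(z_0-\epsilon\mathbf{1}, z_0+\epsilon\mathbf{1})$ is non-increasing and changes value only at the finitely many \emph{critical values} $\epsilon = z_{0,i} - \kappa_i(j)$ and $\epsilon = \kappa_i(j) - z_{0,i}$, each of which is an affine function of $\kappa$. The supremum defining $\lambda^k_{\kappa\circ h}(z_0)$ is therefore equal to one of these affine critical values (or to $0$).

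To conclude semilinearity, I would partition $\incl(\Gcal)$ into finitely many regions according to the ordering of the affine critical values $\{z_{0,i}-\kappa_i(j),\ \kappa_i(j)-z_{0,i}\}$; each region is cut out by linear inequalities among these affine functions and is thus semilinear. On a fixed region the ordering is constant, so the profile of the step function $R(\lfloor\cdot\rfloor)$ along $\epsilon$ is combinatorially determined, the threshold ``where the rank drops below $k$'' is a fixed critical value, and $\lambda^k_{\kappa\circ h}(z_0)$ coincides there with a single affine function of $\kappa$. A piecewise-affine function with finitely many semilinear pieces has semilinear graph, giving the first claim via \cref{theorem:grid-determined-implies-semialgebraic}. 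For the \textbf{moreover} statement, I would run the identical analysis treating $z$ as an additional variable: the critical values $z_i - \kappa_i(j)$ and $\kappa_i(j)-z_i$ are jointly affine in $(\kappa,z)$, so on each cell the map $(f,z)\mapsto \lambda^k_f(z)$ is semilinear under the linear diffeomorphism $\itsgridincl \times \mathrm{id}_{\Rbb^n}$ of \cref{corollary:cells-nice}; since the sets $\cell(f)\times\Rbb^n$ form a finite semilinear partition of $\Fil_n(K)\times\Rbb^n$, the cell-wise criterion of \cref{corollary:semialgebraic-map-from-fil} (which applies verbatim to this product domain) yields global semilinearity.

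The main obstacle is the bookkeeping in the last step: unlike the Hilbert case, the push map is genuinely only \emph{piecewise} affine, because the relative order of the critical values changes as $\kappa$ (and $z$) vary, and one must verify that the induced partition into ``orderings'' is semilinear and that the selected threshold value is consistent across each region, including the degenerate cases (rank never reaching $k$, or $z$ lying outside the grid support, both giving $\lambda^k = 0$). Care is also needed regarding whether the supremum is attained, since the lower and upper floors have opposite one-sided continuity in $\epsilon$; this does not affect the value, which is a critical value in every case, but it must be tracked to identify the correct affine piece.
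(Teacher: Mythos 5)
Your proposal is correct and takes essentially the same route as the paper: both reduce to the slope-$\mathbf{1}$ diagonal (you prove this reduction directly from monotonicity of the rank under box enlargement; the paper cites Lemma~21 of \cite{Vipond2020}), and both observe that the rank along that diagonal is a step function whose jumps occur at finitely many values affine in $(\kappa,z)$ determined by the grid inclusion, so that $\lambda^k$ is a piecewise-affine selection among them. The only difference is bookkeeping: the paper packages the selection as an explicit finite maximum of closed-form semilinear functions $d_\updownarrow\bigl(z,\iota_g(a),\iota_g(b)\bigr)$ over grid pairs $(a,b)$ with $\rk_{\ord_f}(a,b)\ge k$, whereas you partition the domain by the ordering of the critical values and identify the selected affine piece on each region; both are valid.
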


To simplify notation even more, let us fix $k \geq 1 \in \Nbb$ and let $\lambda_f \coloneqq \lambda^k_f$.

Let $\mathbf{1} = (1, \dots, 1) \in \Rbb^n$.
Multiparameter landscapes can be computed using lines of slope $\mathbf{1}$ as follows (see Lemma~21 of \cite{Vipond2020}): $\lambda_f(z) = \sup\left\{ \epsilon \geq 0 : \rk_f(z-\mathbf{1}\epsilon,z+\mathbf{1}\epsilon) \geq k \right\}$.
In particular, we get the following, by the monotonicity of the rank:
\begin{equation}
    \label{lemma:reduction-to-slope-1}
    \lambda_f(z) = \sup\left\{ \min(r,s) : r,s \geq 0 \in \Rbb, \rk_f(z-\mathbf{1}r,z+\mathbf{1}s) \geq k \right\}.
\end{equation}

We now give some useful constructions for the proof of \cref{proposition:landscape-semilinear}.
Given $x,z \in \Rbb^n$ and $y \in \Rbb^n \cup \{\infty\}$, let
\begin{align*}
    d_\downarrow(z,x) &\coloneqq
        \begin{cases*}
            0 & \text{ if $x \nleq z$}\\
            \min\limits_{1 \leq j \leq n} |z_j - x_j| & \text{ if $x \leq z$}
        \end{cases*}\\
    d_\uparrow(z,y) &\coloneqq
        \begin{cases*}
            \infty & \text{ if $y = \infty$}\\
            0 & \text{ if $z \geq y \in \Rbb^n$}\\
            \max\limits_{\substack{
                1 \leq j \leq n\\
                \text{s.t. } z_j \leq y_j
                }} |z_j - y_j| & \text{ if $z \ngeq y \in \Rbb^n$}
        \end{cases*}\\
    d_\updownarrow(z,x,y) &\coloneqq \min\left(d_\downarrow(z,x), d_\uparrow(z,y)\right).
\end{align*}

Given $x,z \in \Rbb^n$, $y \in \Rbb^n \cup \{\infty\}$, with $x \leq z$ and $z \ngeq y$, let
\begin{align*}
    z\downarrow x &\coloneqq z - \mathbf{1}d_\downarrow(x,z) \in \Rbb^n\\
    z\uparrow y &\coloneqq z + \mathbf{1}d_\uparrow(z,x)
        \in \Rbb^\infty \cup \{\infty\},
\end{align*}
which have the following property:
if $U_w = \{u \in \Rbb^n : u \geq w\} \subseteq \Rbb^n$, then $z\downarrow x$ is the intersection between the line $r \mapsto z + \mathbf{1}r$ and $\partial U_x$, the boundary of $U_x$; and $z\uparrow y$ is the intersection between the line $r \mapsto z + \mathbf{1}r$ and $\partial U_y$.
Note that, by definition, we have $\|z - (z\downarrow x)\|_\infty = d_\downarrow(z,x)$ and similarly $\|z - (z\uparrow y)\|_\infty = d_\downarrow(z,y)$.

See \cref{figure:landscape-projection} for an illustration.

\begin{figure}
    \centering
    \includegraphics[width=6cm]{./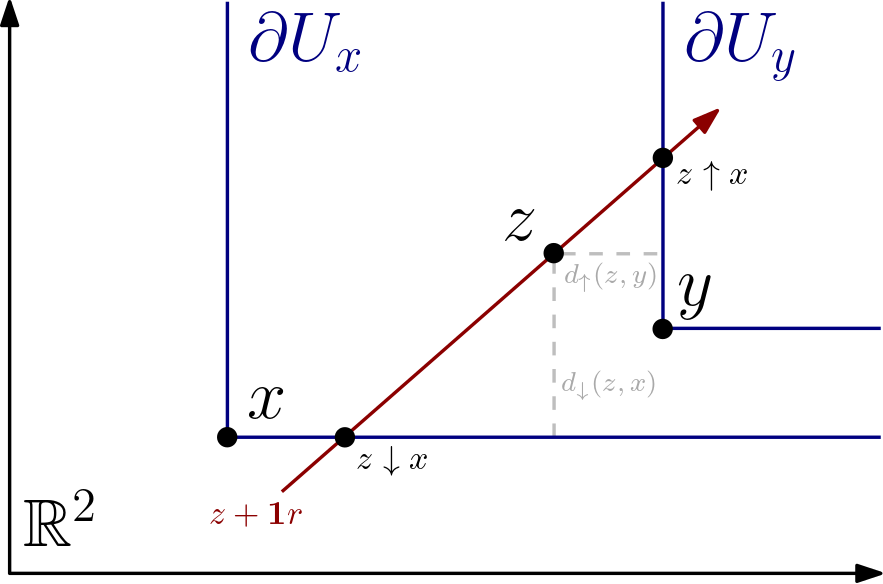}
    \caption{Illustration of the main constructions involved in \cref{proposition:landscape-semilinear}.}
    \label{figure:landscape-projection}
\end{figure}

It is standard (and easy to see) that $r \mapsto \rk_f(z-\mathbf{1}r,z+\mathbf{1}r)$ is constant as long as $z-\mathbf{1}r$ and $z+\mathbf{1}r$ do not cross any of the coordinates of the inclusion of the grid of $f$ in $\Rbb^n$; more precisely, as long as $z-\mathbf{1}r$ and $z+\mathbf{1}r$ do not cross the boundary of $U_{\iota_f(a)}$ for some $a \in \grid_f$.
From this observation and \cref{lemma:reduction-to-slope-1}, we get
\begin{equation}
    \label{equation:landscape-finitely-many}
    \lambda_f(z) = \max \left\{ d_\updownarrow(z,\iota_f(a),\iota_f(b)) : a < b \in \overline{\grid_f}, \rk_{\ord_f}(a,b) \geq k\right\},
\end{equation}
where $\overline{\grid_f} = \grid_f \cup \{\infty\}$, the function $\rk_{\ord_f} : \{(a,b) : a < b \in \overline{\grid_f}\} \to \Zbb$ denotes the rank function of the $\grid_f$ filtration $\ord_f : K \to \grid_f$, extended as $\rk_{\ord_f}(a, \infty) \coloneqq \rk_{\ord_f}(a,\max \grid_f)$, and also extending $\iota_f(\infty) = \infty \in \Rbb^n \cup \{\infty\}$.
The extension of $\grid_f$ to $\overline{\grid_f}$ is a minor technical point required to handle cases in which a rank does not go below $k$ as its second coordinate goes to $\infty$, i.e., the case in which $\lambda_f(z) = d_\downarrow(z,\iota_f(a))$ and $\rk_f(z \downarrow \iota_f(a), z + \mathbf{1}r) \geq k$ for all $r \geq 0 \in \Rbb$.

Since $\grid_f$ is finite, the supremum of \cref{lemma:reduction-to-slope-1} is now a maximum; and again we use the convention that a maximum over an empty set is zero.

\begin{proof}[Proof of \cref{proposition:landscape-semilinear}]
    Let $f \in \Fil_n(K)$, and let $g \in \cell(f)$.
    Since $\grid_g = \grid_f$ and $\ord_g = \ord_f$, we have, by \cref{equation:landscape-finitely-many},
    \[
        \lambda_g(z) = \max \left\{ d_\updownarrow(z,\iota_g(a),\iota_g(b)) : a \leq b \in \grid_f, \rk_{\ord_f}(a,b) \geq k\right\},
    \]
    which is a maximum of finitely many semilinear functions that do not depend on $g$ or $z$, and hence semilinear in both $z$ and $\iota_g$.
    It follows that the function $\Fil_n(K) \times \Rbb^n \to \Rbb$ mapping $(g,z)$ to $\lambda_g(z)$ is semilinear when restricted to $\cell(f) \times \Rbb^n$ for each $f \in \Fil_n(K)$, and thus semilinear.
    For the first claim, note that $\push_{\ord_f}(\iota_g) = \lambda_g(z)$, and use \cref{theorem:grid-determined-implies-semialgebraic}.
\end{proof}

\section{Locally Lipschitz objective functions}
\label{section:locally-lipschitz-objective}



In this section we prove \cref{theorem:satisfy-locally-lipschitz-condition}, which gives simple conditions under which an objective relying on the Hilbert decomposition signed measure is locally Lipschitz. 

In \cref{section:algebraic-bound}, we prove a bound that allows us to deduce the stability of signed barcodes as measures descriptors.
In \cref{section:locally-lipschitz-signed-measures}, we use this to give sufficient conditions for a Hilbert decomposition signed measure-based loss function to be locally Lipschitz.

\subsection{An algebraic bound}
\label{section:algebraic-bound}

Since this is the only section containing algebraic arguments, and these are encapsulated and only used in the stability result \cref{proposition:hilbert-decomposition-lipschitz}, we refer the reader to, e.g., \cite{oudot-scoccola,botnan-lesnick} for background.

\begin{lemma}
    \label{lemma:main-algebraic-bound}
    Fix a finite simplicial complex $K$ and $n \in \Nbb$.
    There exists a constant $B$ that only depends on $K$ and $n$ such that, for every $f \in \Fil_n(K)$ and $i \in \Nbb$, the sizes of the Betti signed barcode \cite{oudot-scoccola} and of the rank exact decomposition \cite{botnan-oppermann-oudot-scoccola} of the multiparameter persistence module $H_i(f) : \Rbb^n \to \vect$ are bounded above by~$B$.
\end{lemma}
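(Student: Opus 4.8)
The plan is to show that the two quantities in question are invariants of the finite combinatorial data attached to $f$---namely of the cell $\cell(f)$ and of the grid $\grid_f$---and then to take a maximum over the finitely many cells and the finitely many homological degrees that contribute. First I would dispose of the dependence on $i$: since $K$ is finite, $H_i(K^f_r) = 0$ for every $r \in \Rbb^n$ whenever $i$ exceeds the dimension of $K$, so $H_i(f)$ is the zero module and both decompositions are empty for all but the finitely many degrees $0 \leq i \leq \dim K$. It therefore suffices to bound the sizes for each such $i$ separately.

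Next, fix $i$ and $f$, and set $h = \ord_f \in \Fil_{\grid_f}(K)$. Exactly as in \cref{lemma:push-forward-on-cell}, the module $H_i(f) = H_i(\iota_f \circ h)$ is the push-forward $(\iota_f)_\#$ of the discrete module $H_i(h)$ over the finite grid $\grid_f$ along the injective, order-preserving map $\iota_f : \grid_f \to \Rbb^n$. The algebraic heart of the argument is that both the Betti signed barcode of \cite{oudot-scoccola} and the rank exact decomposition of \cite{botnan-oppermann-oudot-scoccola} are isomorphism invariants extracted from a minimal free resolution (resp.\ from its rank decomposition), and that pushing forward along an injective order-embedding merely relabels the positions of the generators, relations, and higher syzygies (resp.\ of the signed summands) without merging or cancelling any of them---precisely the phenomenon already used for the Hilbert case in \cref{lemma:constant-number-of-masses}. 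Consequently the size of each decomposition of $H_i(f)$ equals that of the corresponding decomposition of $H_i(h)$, a number depending only on $h = \ord_f$, hence only on $\cell(f)$.

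Since $\cell(f)$ ranges over the finite set $\{\cell(g) : g \in \Fil_n(K)\}$ by \cref{lemma:finitely-many-cells}, I would then let $B$ be the maximum, taken over this finite set of cells and over $0 \leq i \leq \dim K$, of the two decomposition sizes; this $B$ depends only on $K$ and $n$, as required. If an explicit bound is preferred, one can instead observe directly that every point mass of either decomposition is supported at a point of $\iota_f(\grid_f)$ (resp.\ at a pair of such points), that $|\grid_f| = \prod_{j=1}^n |f_j(K)| \leq |K|^n$, and that the total multiplicity is controlled by the total dimension $\sum_{g \in \grid_f} \dim H_i(h)(g) \leq |\grid_f| \cdot |K|$, with the length of the resolution bounded by $n$ via Hilbert's syzygy theorem; this yields an explicit $B$ in terms of $|K|$ and $n$.

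The main obstacle is the push-forward invariance in the second step for the \emph{rank exact} decomposition: unlike the Hilbert decomposition, which is handled verbatim by \cref{lemma:push-forward-on-cell,lemma:constant-number-of-masses}, confirming that an injective grid inclusion neither creates nor destroys signed summands requires unwinding the algebraic definition of the decomposition from \cite{botnan-oppermann-oudot-scoccola}. I expect this to follow formally from the functoriality of minimal free resolutions under order-embeddings, but it is the one place where genuinely algebraic input---rather than the combinatorics of cells---is needed, which is why the result is encapsulated in this section.
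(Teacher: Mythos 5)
Your strategy is genuinely different from the paper's. You reduce to the finite grid $\grid_f$, argue that the decomposition sizes are invariants of the cell, and take a maximum over the finitely many cells (\cref{lemma:finitely-many-cells}). The paper never invokes the cell stratification: it bounds $\bsf(\ker(P \to Q))$ for any map of free modules of fixed ranks using the main result of \cite{beecher}, applies this to the cycle modules $Z^f_k = \ker(d^f_k)$ of the simplicial chain complex (whose free ranks are independent of $f$), propagates the bound to $H_i(f) = \coker(C^f_{i+1} \to Z^f_i)$ via two short exact sequences and the subadditivity of $\bsf$ under extensions (Lemma~5.21 of \cite{botnan-oppermann-oudot-scoccola}), and finally bounds the rank exact decomposition by the Betti signed barcode via Proposition~5.28 of the same reference. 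The paper's route gives a bound uniform in $f$ with purely algebraic input; yours, if completed, would give the sharper statement that the sizes are actually \emph{constant} on each cell, at the cost of a less explicit constant.

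The gap is the one you flag yourself, and it is real. Your argument stands or falls on the claim that pushing forward along the aligned grid inclusion $\iota_f$ preserves the Betti signed barcode and the rank exact decomposition without creating or cancelling summands. \cref{lemma:push-forward-on-cell,lemma:constant-number-of-masses} give this only for the Hilbert decomposition, a strictly weaker invariant, so nothing in the paper can be cited verbatim. For the Betti signed barcode the claim amounts to showing that the extension along $\iota_f$ (pointwise, evaluation of the grid module at the coordinatewise floor) is exact, sends free modules to free modules, and preserves minimality of resolutions; this is true but must be written out, and it is the crux rather than a formality. For the rank exact decomposition you could sidestep the hardest part of your own obstacle by deducing its bound from the Betti one via Proposition~5.28 of \cite{botnan-oppermann-oudot-scoccola}, exactly as the paper does. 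As written, your second step is an assertion rather than a proof, so the argument is incomplete until that push-forward lemma is supplied; the surrounding structure (finiteness of cells, vanishing of $H_i$ above $\dim K$, taking a maximum) is sound.
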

\begin{proof}
    We first prove the following.

    \emph{Claim a.}
    Fix $n \in \Nbb$.
    Let $P \to Q$ be a map of free $n$-parameter persistence modules, of ranks $p,q \in \Nbb$, respectively. 
    There exists a constant $B'$ only depending on $n,p,q\in\Nbb$ (and not on $P$, $Q$, or the map between them), such that $\bsf(\ker(P \to Q)) \leq B'$.

    \emph{Proof of Claim a.}
    This is a consequence of the main result of \cite{beecher}, in the case where the chosen presentation is not minimal.

    \smallskip

    Now, for any fixed $f \in \Fil_n(K)$, let $C^f_\bullet$ denote the associated chain complex of $n$-parameter persistence modules, whose homology is the family of homology persistence modules $H_\bullet(f)$.
    Note that, for every $k \in \Nbb$, the rank of the free module $C^f_k$ is independent of $f$.
    By definition of homology, we have $H_i(f) = \coker(C^f_{i+1} \to Z^f_i)$, where $Z^f_i$ is the kernel of the boundary morphism $Z^f_i = \ker(d^f_i : C^f_i \to C^f_{i-1})$.

    To simplify notation, let $\bsf(M)$ denote the size of the Betti signed barcode of a multiparameter persistence module $M$, as in \cite{botnan-oppermann-oudot-scoccola}.

    \smallskip

    \emph{Claim b.}
    There exists a constant $B''$, only depending on $K$ and $n$ (and not on $f$), such that, for all $k \in \Nbb$, $\bsf(Z^f_k) \leq B''$.

    \emph{Proof of Claim b.}
    Since $C^f_k$ is non-zero only for finitely many $k \in \Nbb$ (because $K$ is finite), this follows from Claim a and the fact that $Z^f_i = \ker(C^f_i \to C^f_{i-1})$.

    \smallskip

    We now prove that $\bsf(H_i(f))$ is bounded above by a constant that only depends on $K$ and $n$.
    The bound for the rank exact decomposition of $H_i(f)$ follows from this and Proposition~5.28 of \cite{botnan-oppermann-oudot-scoccola}.

    Note that $\ker(C_{i+1} \to Z_i) \cong \ker(C_{i+1} \to C_i) = Z_{i+1}$, so we have a short exact sequence 
    \[
      0 \to C_{i+1}/Z_{i+1} \to Z_i \to H_i(f) \to 0
    \]
    By Lemma~5.21 of \cite{botnan-oppermann-oudot-scoccola}, to bound $\bsf(H_i(f))$, it is thus enough to bound $\bsf(C_{i+1}/Z_{i+1})$ and $\bsf(Z_i)$ by a constant that only depends on $K$ and $n$.
    To bound $\bsf(Z_i)$ we use Claim b.

    To bound $\bsf(C_{i+1}/Z_{i+1})$, we use 
    Lemma~5.21 of \cite{botnan-oppermann-oudot-scoccola}, this time with the short exact sequence $0 \to Z_{i+1} \to C_{i+1} \to C_{i+1}/Z_{i+1} \to 0$, and Claim b.
    This concludes the proof.
\end{proof}

As a direct consequence of
\cref{lemma:main-algebraic-bound}, we get the following.

\begin{corollary}
    Fix a finite simplicial complex $K$ and $n \in \Nbb$.
    There exists a constant $B$ that only depends on $K$ and $n$ such that, for every $f \in \Fil_n(K)$ and $i \in \Nbb$, the number of point masses in both $\mu^\Hil_{H_i(f)}$ and $\mu^\rk_{H_i(f)}$ is bounded above by $B$.
    \qed
\end{corollary}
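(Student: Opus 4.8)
The plan is to bound the number of point masses of each of the two signed measures by the size of the corresponding algebraic decomposition, and then to invoke \cref{lemma:main-algebraic-bound}. The point is that both $\mu^\Hil_{H_i(f)}$ and $\mu^\rk_{H_i(f)}$ are, by construction, read off from these decompositions, so a uniform bound on the decompositions transfers immediately to a uniform bound on the number of point masses.

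First, for the Hilbert decomposition signed measure, I would recall that the Betti signed barcode of $H_i(f)$ records the graded Betti numbers of $H_i(f)$ as a finite signed collection of points of $\Rbb^n$, the generators of the $j$th term of a free resolution carrying sign $(-1)^j$. By additivity of dimension along a free resolution, the Hilbert function $r \mapsto \dim(H_i(f)(r))$ equals the alternating sum of the Hilbert functions of the free modules in the resolution, and each free module contributes a sum of indicator functions of upsets based at its generators. This is exactly the characterizing property of $\mu^\Hil_{H_i(f)}$ in \cref{definition:hilbert-sm-rk-sm}, so $\mu^\Hil_{H_i(f)}$ is the signed sum of Dirac masses located at the Betti points. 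In particular its number of point masses is at most the total number of Betti points, i.e.~at most the size of the Betti signed barcode of $H_i(f)$.

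Second, for the rank decomposition signed measure, I would use that $\mu^\rk_{H_i(f)} \in \dsm(\bars_n, \Delta)$ is, by \cref{definition:rank-signed-measure}, the unique signed measure representing the rank invariant, whose existence stems from the rank exact decomposition of $H_i(f)$ \cite{botnan-2, botnan-oppermann-oudot-scoccola}. Each signed interval summand in that decomposition contributes at most one signed point mass (the bar given by its support) to $\mu^\rk_{H_i(f)}$, so the number of point masses of $\mu^\rk_{H_i(f)}$ is at most the size of the rank exact decomposition.

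Finally, \cref{lemma:main-algebraic-bound} furnishes a single constant $B$, depending only on $K$ and $n$, bounding both the size of the Betti signed barcode and the size of the rank exact decomposition of $H_i(f)$, uniformly over all $f \in \Fil_n(K)$ and all $i \in \Nbb$. Combining this with the two preceding paragraphs shows that $B$ simultaneously bounds the number of point masses of $\mu^\Hil_{H_i(f)}$ and of $\mu^\rk_{H_i(f)}$, which is the claim. The only mildly delicate point, and the place where I would be careful, is that distinct summands may place masses at the same location and partially cancel; since this can only decrease the number of surviving point masses, the stated upper bound still holds, and I expect no genuine obstacle beyond making this bookkeeping precise.
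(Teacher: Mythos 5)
Your proposal is correct and follows exactly the route the paper intends: the paper states this corollary as a direct consequence of \cref{lemma:main-algebraic-bound} with no further argument, relying on the fact (recorded later, in the proof of \cref{proposition:hilbert-decomposition-lipschitz}, citing Remark~5.3 of \cite{oudot-scoccola}) that $\mu^\Hil_{H_i(f)}$ and $\mu^\rk_{H_i(f)}$ are obtained from the Betti signed barcode and the rank exact decomposition by cancelling matching positive and negative masses. Your write-up simply makes this bookkeeping explicit, including the correct observation that cancellation can only decrease the number of surviving point masses.
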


\subsection{Locally Lipschitz signed measure-based objective functions}
\label{section:locally-lipschitz-signed-measures}

The next result follows easily from the main results of \cite{oudot-scoccola,botnan-oppermann-oudot-scoccola}, and the algebraic bound (\cref{lemma:main-algebraic-bound}).

\begin{proposition}
    \label{proposition:hilbert-decomposition-lipschitz}
    Let $K$ be a finite simplicial complex and let $n,i \in \Nbb$.
    The following functions are Lipschitz with respect to any $\ell^p$ norm on $\Fil_n(K) \subseteq (\Rbb^n)^K$ and the (partial) optimal transport distance on discrete signed measures:
    \begin{align*}
        \mu_{H_i}^\Hil &: \Fil_n(K) \to \dsm(\Rbb^n)\\
        \mu_{H_i}^\rk &: \Fil_n(K) \to \dsm\left(\bars_n, \Delta\right).
    \end{align*}
\end{proposition}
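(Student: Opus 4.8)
The plan is to prove Lipschitzness for $\mu_{H_i}^\Hil$ and $\mu_{H_i}^\rk$ by combining a stability theorem for the signed barcode descriptors (which controls the optimal transport distance between descriptors in terms of an interleaving or matching distance between the underlying persistence modules) with an elementary bound showing that the persistence modules $H_i(f)$ vary in a controlled way as $f$ varies in $\Fil_n(K)$. The key structural fact I would exploit is that a small perturbation of a filtering function $f$ to $f'$ in the $\ell^\infty$ distance on $(\Rbb^n)^K$ produces an $\varepsilon$-interleaving (or small matching distance) between $H_i(f)$ and $H_i(f')$, where $\varepsilon = \|f - f'\|_\infty$; this is the standard ``algebraic stability'' of simplexwise/multiparameter filtrations. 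Since all $\ell^p$ norms on the finite-dimensional space $(\Rbb^n)^K$ are equivalent, it suffices to establish the bound for one norm (say $\ell^\infty$) and absorb the comparison constant.

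First I would record the reduction: since $\Fil_n(K) \subseteq (\Rbb^n)^K$ is finite-dimensional, it is enough to prove the Lipschitz estimate
\[
    \OT\!\left(\mu^\Hil_{H_i(f)},\, \mu^\Hil_{H_i(f')}\right) \leq L \cdot \|f - f'\|_\infty
\]
for a constant $L$ depending only on $K$ and $n$, and analogously for $\mu^\rk$. Second, I would invoke the stability theorems for the Hilbert decomposition (from \cite{oudot-scoccola}) and the rank exact/signed decomposition (from \cite{botnan-oppermann-oudot-scoccola}): these state that the optimal transport distance between the signed barcodes of two modules is bounded by a constant times the matching/interleaving distance between the modules, provided the \emph{sizes} of the decompositions are controlled. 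This is precisely where \cref{lemma:main-algebraic-bound} enters—it guarantees a uniform bound $B$, depending only on $K$ and $n$, on the sizes of the Betti signed barcode and the rank exact decomposition of $H_i(f)$ for \emph{every} $f \in \Fil_n(K)$. Without such a uniform size bound, the stability constant in those theorems could blow up, so this lemma is exactly the tool that makes the stability estimate uniform over all of $\Fil_n(K)$.

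The third step is to connect the interleaving distance back to the filtration perturbation: I would argue that $d_\mathrm{int}(H_i(f), H_i(f')) \leq \|f - f'\|_\infty$, which follows because shifting every simplex's appearance value by at most $\varepsilon$ in each coordinate yields mutual inclusions $K^f_r \subseteq K^{f'}_{r + \mathbf{1}\varepsilon}$ and vice versa, inducing the required interleaving maps on homology by functoriality. Chaining these three bounds—norm equivalence, uniform-size stability, and the interleaving estimate—yields the Lipschitz constant $L = c \cdot B$ (with $c$ absorbing the stability constant and the norm-comparison constant), completing the proof for both descriptors.

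The main obstacle I anticipate is invoking the stability theorems of \cite{oudot-scoccola,botnan-oppermann-oudot-scoccola} in exactly the form needed: those results typically bound optimal transport by matching distance under a hypothesis on decomposition size, and one must verify that \cref{lemma:main-algebraic-bound} supplies the precise quantity these theorems require (and that the rank case uses the rank exact decomposition bound rather than the Betti one). A secondary subtlety is the partial optimal transport for $\mu^\rk$, whose measures live on the metric pair $(\bars_n, \Delta)$; here one must confirm that the interleaving-induced matching respects the ``throwing mass to the diagonal'' mechanism so that short bars can be discarded at cost proportional to $\varepsilon$. Both of these are, as the paper indicates, essentially direct citations, so the argument should indeed follow ``easily'' once the uniform bound is in hand.
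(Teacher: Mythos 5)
Your proposal is correct and follows essentially the same route as the paper: invoke the bottleneck/matching stability of the Betti signed barcode and the rank exact decomposition from \cite{oudot-scoccola,botnan-oppermann-oudot-scoccola}, and upgrade it to a (partial) optimal transport Lipschitz bound using the uniform bound on the number of point masses from \cref{lemma:main-algebraic-bound}. The only detail the paper makes explicit that you leave implicit is that the Hilbert (resp.\ rank) decomposition signed measure is obtained from the Betti signed barcode (resp.\ rank exact decomposition) by cancelling coincident positive and negative masses, which is what lets the stability of the latter transfer to the former.
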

\begin{proof}
    From the bottleneck stability of the Betti signed barcode and the rank exact decomposition
    \cite{oudot-scoccola,botnan-oppermann-oudot-scoccola}
    we obtain a partial optimal transport  (\cref{definition:pot}) stability result
    (sometimes known as $1$-Wasserstein stability result in the topological data analysis literature), as long as have a global bound on the number of point masses.
    This is because the Hilbert decomposition signed measure is obtained from the Betti signed barcode by cancelling equal masses that appear as positive and as negative (Remark~5.3 of \cite{oudot-scoccola}); and the same is true for the rank exact decomposition and the rank decomposition signed barcode.
    The bound on the number of masses is the content of \cref{lemma:main-algebraic-bound}.
\end{proof}

Again, for notational clarity, fix $n \geq 1 \in \Nbb$, $i \in \Nbb$, and a simplicial complex $K$.

Recall that $\ordHil$ denotes the sorted Hilbert decomposition (\cref{definition:sorted-hilbert}), which is a convenient representation of the Hilbert decomposition signed measure as a vector of a finite dimensional space.

\begin{proposition}
    \label{theorem:satisfy-locally-lipschitz-condition}
    Assume given a locally Lipschitz function $E' : \dsm(\Rbb^n) \to \Rbb$, and let $E : \Rbb^\indexingset \to \Rbb$ be any function such that $E' \circ \mu^\Hil_{H_i} = E \circ \ordHil : \Fil_n(K) \to \Rbb$.
    Then, $E\circ \ordHil : \Fil_n(K) \to \Rbb$ is locally Lipschitz.
\end{proposition}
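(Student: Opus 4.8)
The plan is to avoid analyzing the maps $E$ and $\ordHil$ in isolation and instead exploit the hypothesis $E \circ \ordHil = E' \circ \mu^\Hil_{H_i}$ directly. Since local Lipschitzness is a property of the resulting function $\Fil_n(K) \to \Rbb$, and not of any particular factorization of it, it suffices to show that the composite $E' \circ \mu^\Hil_{H_i}$ is locally Lipschitz. This reformulation is what makes the proof tractable, because the two right-hand factors are both well-behaved, whereas $\ordHil$ is not: it records discrete cell-membership data and jumps across cell boundaries, so it is neither continuous nor locally Lipschitz on its own.

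The key input is \cref{proposition:hilbert-decomposition-lipschitz}, which provides a global Lipschitz constant $L$ for $\mu^\Hil_{H_i} : \Fil_n(K) \to \dsm(\Rbb^n)$, where $\Fil_n(K)$ carries the metric induced by a fixed $\ell^p$ norm and $\dsm(\Rbb^n)$ carries the optimal transport distance $\OT$. I would first record that this global bound, $\OT(\mu^\Hil_{H_i}(f), \mu^\Hil_{H_i}(g)) \leq L \|f - g\|_p$, confines all the transport distances that appear to finite values, so that even though $\OT$ is merely an extended metric this subtlety never interferes; in particular $\mu^\Hil_{H_i}$ is continuous.

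The remaining step is the standard fact that precomposing a locally Lipschitz function with a globally Lipschitz one yields a locally Lipschitz function. Concretely, I would fix $f_0 \in \Fil_n(K)$, set $\nu_0 \coloneqq \mu^\Hil_{H_i}(f_0)$, and use the local Lipschitzness of $E'$ to obtain a radius $\rho > 0$ and a constant $M$ with $E'$ being $M$-Lipschitz on the $\OT$-ball $B(\nu_0, \rho)$. The global bound then guarantees that $\mu^\Hil_{H_i}$ maps the $\ell^p$-ball of radius $\rho/L$ around $f_0$ into $B(\nu_0, \rho)$, whence for $f_1, f_2$ in that ball
\[
    |E'(\mu^\Hil_{H_i}(f_1)) - E'(\mu^\Hil_{H_i}(f_2))|
    \leq M \, \OT(\mu^\Hil_{H_i}(f_1), \mu^\Hil_{H_i}(f_2))
    \leq M L \, \|f_1 - f_2\|_p,
\]
exhibiting $E' \circ \mu^\Hil_{H_i}$ as $(ML)$-Lipschitz on a neighborhood of $f_0$. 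Since $f_0$ is arbitrary, this proves local Lipschitzness of $E' \circ \mu^\Hil_{H_i} = E \circ \ordHil$.

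The main obstacle is conceptual rather than computational: the temptation is to control $E$ and $\ordHil$ directly, which fails because $\ordHil$ is discontinuous. Recognizing that the entire argument should be routed through the factorization via the genuinely Lipschitz map $\mu^\Hil_{H_i}$ is the crucial move; once that is in place, only the elementary composition estimate and a check that the extended metric causes no trouble remain.
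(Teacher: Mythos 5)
Your proposal is correct and follows exactly the paper's own argument: rewrite $E \circ \ordHil$ as $E' \circ \mu^\Hil_{H_i}$ via the hypothesis, then compose the locally Lipschitz $E'$ with the globally Lipschitz $\mu^\Hil_{H_i}$ from \cref{proposition:hilbert-decomposition-lipschitz}. The paper states this in one line; you simply spell out the standard composition estimate in full.
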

\begin{proof}
    By assumption, $E \circ \ordHil = E' \circ \mu^{\Hil}_{H_i}$, and the right-hand side is a composite of a locally Lipschitz map $E'$, by assumption, and a Lipschitz map $\mu^{\Hil}_{H_i} : \Fil_n(K) \to \dsm(\Rbb^n)$, by \cref{proposition:hilbert-decomposition-lipschitz}.
\end{proof}

Note that a function $E$ as in \cref{theorem:satisfy-locally-lipschitz-condition} always exists since all $\ordHil(f)$ is doing is encoding the point masses of the signed measure $\mu^\Hil_{H_i}(f)$ in a convenient way.
In other words, $E$ is simply an explicit representation of $E'$ when the signed measure is encoded as in \cref{construction:ordered-hilbert}.

\section{Details about experiments}
\label{section:experiments_appendix}

\subsection{Details about autoencoder experiment}
\label{section:details-autoencoder}
We use a simple autoencoder architectures with both encoders and decoders made of three layers of 32 neurons, with the first two layers followed by ReLU activation and batch normalization.
We then minimize a linear combination of the MSE loss (between initial and reconstructed spaces) and the topological loss (\ref{eq:mp_autoencoder_loss}) (between initial and latent spaces), 
with weights $1$ and $0.1$ for the MSE and topological losses respectively, to account for the scale difference between the two losses. Optimization is performed with Adam optimizer, learning
rate $0.01$, and $1000$ epochs.

\subsection{Details about graph data example}
\label{subsec:details-graph}

In order to create node and edge attributes from persistence diagrams%
\footnote{I.e., one-parameter barcodes, or one-parameter rank decomposition signed measure $\mu^\rk$ (\cref{section:rank-dec-signed-measure}), in our language.}%
, the authors in~\cite{Horn2022} propose to use the natural bijections between persistence diagram points in dimension 0 (resp. dimension 1)
with nodes (resp. edges) of the graph\footnote{Note that, in dimension 1, this bijection is only well defined after matching the edges that are not involved in the creation of any graph cycle
to an arbitrary persistence diagram point on the diagonal---see Section A.4 in~\cite{Horn2022}}. These bijections can be used to permute the persistence diagrams, so that every node (resp. edge) can be associated to its $k$
corresponding persistence diagram points in dimension 0 (resp. dimension 1), and further processed with, e.g., a DeepSet architecture, in order to create a single node vector that is added to the one
obtained from graph convolutions. The edge vectors, on the other hand, are pooled so as to create graph-level descriptors. 

Unfortunately there is no such correspondence between nodes or edges and the Hilbert decomposition signed measure in dimension 0 and 1. Hence, for a given node $v$ with associated filtration values $f(v)\in\R^k$ (again,
recall that $f$ is obtained from graph convolutions), we create node vectors in $\R^m$ by computing:

\begin{equation}\label{eq:mpvect}
v \mapsto \int \psi_{v,i}\, \dsf\mu^\Hil_{H_*(f)}, 
1\leq i \leq m,
\end{equation}

where $\psi_{v,i}$ is a function of the form $\psi_{v,i}(p)={\rm exp}\left(-(p-f(v))^T \Sigma^{-1}_i (p-f(v))\right)$,
and the $\Sigma_i$'s are $m$ learnable SPD matrices in $\R^{k\times k}$.
This is thus an instance of~\cref{example:integration-function}.
These vectors are then mapped back to $\R^k$ with a fully connected layer in order to add them to the original node attributes, as proposed in~\cite{Horn2022}. We use the same procedure for edges, except
that the edge vectors in $\R^m$ are pooled into graph-level descriptors.

For graph neural networks, we use Graph Convolutional Networks (GCN)~\cite{Kipf:2016tc},
Graph Isomorphism Network (GIN)~\cite{Xu2019},
Graph ResNet~\cite{7780459},
and Graph DenseNet~\cite{huang2017densely},
All graph architectures have four layers with 256 neurons, and, whenever topological vectors
are used, they are placed after the second layer and computed from $k=2$ filtrations.
In our experience, using a larger $k$ produced comparable results at a higher computational cost.
GNNs are trained during 200 epochs with Adam optimizer with learning rate 0.001, and performance is computed over 10 train/test folds.


\begin{figure}
    \centering
    \includegraphics[width=8.5cm]{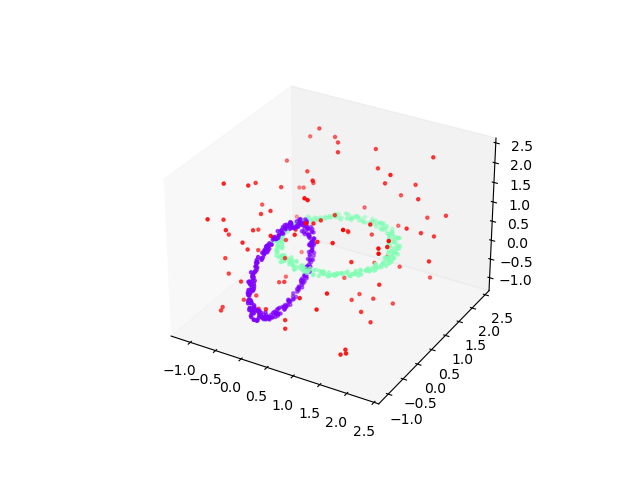}
    \includegraphics[width=8.5cm]{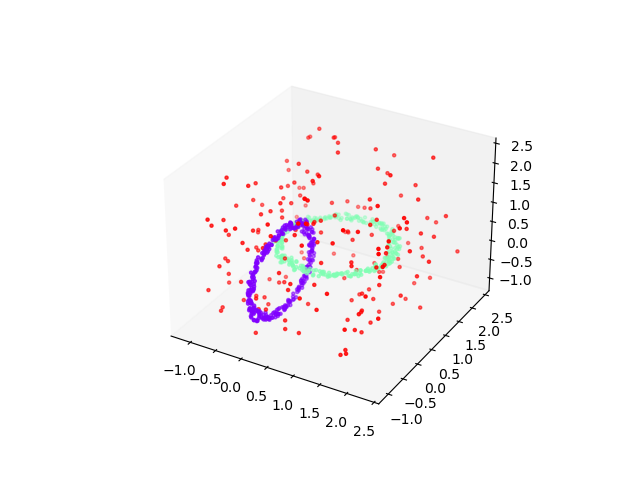}
\caption{
Two point cloud datasets consisting of two interlaced circles with background noise, embedded in $\R^9$, similar to the data used in~\cite{Carriere2021a}. 
The difference between the two datasets is the amount of background noise.
}
\label{figure:autoencoder-data}
\end{figure}

\newpage

\section{Dependency graph and notation table}

\null\vfill

\begin{figure}[H]
\centering
\includegraphics[width=17cm]{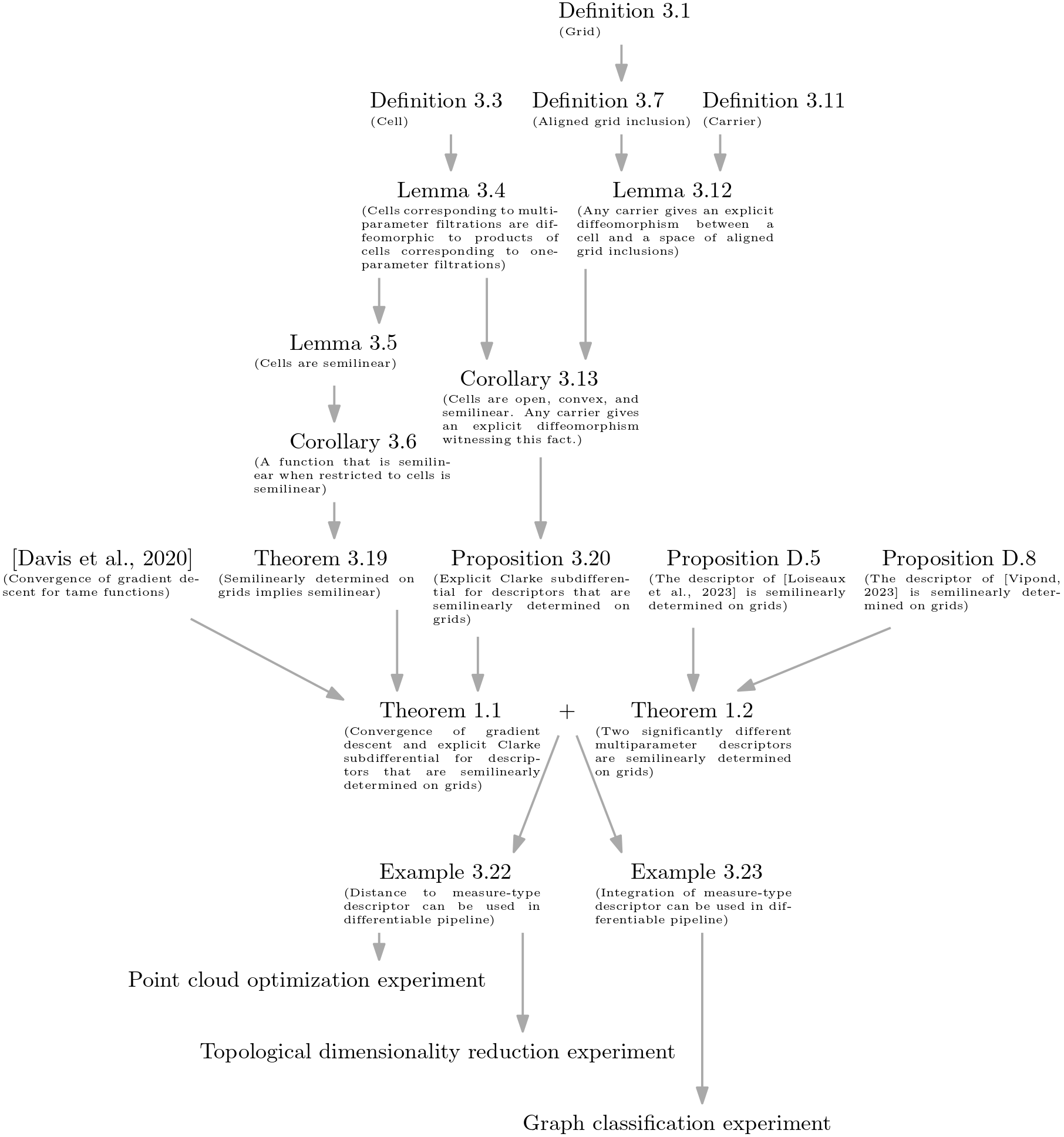}
\caption{Dependency graph of the main definitions, results, and experiments in the paper.}
\label{figure:dependency-graph}
\end{figure}
\vfill\null

\newpage

\begin{table}[H]
    \centering
    \vspace{10pt} 
    \begin{tabular}{|p{6cm}|p{10cm}|} 
    \hline
    \textbf{Symbol} & \textbf{Description} \\
    \hline
    $K$ & A \emph{finite simplicial complex} (\autopageref{simp-comp}). \\
    $(\Rbb^n)^X$ & The set of functions $X \to \Rbb^n$ (\autopageref{functions-finite-set}). \\
    $\Fil_n(K) \subseteq (\Rbb^n)^K$ & The \emph{set of $n$-filtrations} of $K$ (\cref{definition:n-filtrations}). \\
    $\mathbb{F}$ & A field. \\
    $\Phi: \mathbb{R}^d \to \Fil_n(K)$ & A \emph{parameterized family of filtrations}. \\
    $H_i(K)$ & The $i$th \emph{homology} of $K$ with coefficients in $\Fbb$.\\
    $H_i(f)$ & The $i$th \emph{multiparameter persistent homology} of an $n$-filtration $f$ of ${K}$ (\cref{definition:multipara-pers-hom}). \\
    $\Hil(H_i(f)) : \Rbb^n \to \Zbb$  & The $i$th \emph{Hilbert function} of $f \in \Fil_n(K)$ (\autopageref{hil-func}).\\
    ${\rk(H_i(f)) : \{(r,s) \in (\Rbb^n)^2 : r \leq s\} \to \Zbb}$ &The $i$th \emph{rank invariant} of $f \in \Fil_n(K)$ (\autopageref{rank-inv}).\\
    $M$ & A metric space. \\
    $\dm(M)$ & The set of \emph{discrete measures} on $M$ (\autopageref{disc-measures}).\\
    $\dsm(M)$ & The set of \emph{discrete signed measures} on $M$ (\autopageref{disc-measures}).\\
    $\OT$& The \emph{optimal transport distance} on the space of discrete measures (\cref{section:pot-definition}).\\
    $\mu^\Hil_{H_i(f)} \in \dsm(\Rbb^n)$  & The $i$th \emph{Hilbert decomposition signed measure} (\cref{definition:hilbert-sm-rk-sm}).\\
    $S\subset \mathbb{R}^n$ & A semilinear set (\autopageref{semilinear-sets}).\\
    $\partial \Lcal(z)$ & The \emph{Clarke subdifferential} of $\Lcal$ at $z \in \Rbb^d$ (\autopageref{clarke-subdiff}).\\
    $[m]$ & The \emph{linear order} $ \{0 < 1 < \cdots < m-1\}$ for ${m\in \Nbb}$. \\
    $\Gcal = \Gcal_1 \times \cdots \times \Gcal_n$ & A \emph{grid} (\cref{definition:grid}). \\
    $\grid_f=[m_1] \times \cdots \times [m_n]$ & The \emph{grid} induced by a filtration ${f}$ (\cref{construction:grid-from-filtration}). \\
    $\ord_f: K \to \grid_f$ & The \emph{unique linear preorder} induced by a filtration ${f}$ (\cref{construction:grid-from-filtration}). \\ 
    $\cell(f) \subseteq \Fil_n(K)$ & The \emph{cell} of $f \in \Fil_n(K)$ (\cref{definition:cell}).\\
    $\incl([m])$ & The set of \emph{aligned grid inclusions} of ${[m]}$ into ${\mathbb{R}}$ (\cref{construction:aligned-grid-inclusions-and-incl}).\\
    $\incl(\mathcal{G})$ & The set of aligned grid inclusions of a grid $\Gcal$ into $\Rbb^{[m_1]} \times \cdots \times \Rbb^{[m_n]}$ (\cref{construction:aligned-grid-inclusions-and-incl}).\\
    $C : \grid_f \to K$ & A \emph{carrier} for $f \in \Fil_1(K)$ (\cref{definition:carrier}).\\ 
    $\{C_i : \grid_{f_i} \to K\}_{1 \leq i \leq n}$ & A \emph{carrier} for a multifiltration $f \in \Fil_n(K)$ (\cref{definition:carrier}).\\
    $\itsgridincl : \cell(f) \to \incl(\grid_f)$ & Function mapping a filtration in the cell of $f$ to its corresponding grid inclusion (\cref{definition:iota}).\\
    $\Fil_\Gcal(K)$ & The \emph{set of $\Gcal$-filtrations} on $K$ (\autopageref{grid-filts}).\\
    $\VR(X)$ & The \emph{Vietoris--Rips filtration} of a finite point cloud ${X\subset \mathbb{R}^n}$.\\ 
    \hline
    \end{tabular}
    \caption{Notation Table.}
    \label{tab:notation-table}
\end{table}

\end{document}